\newcommand{\state}{z}
\newcommand{\exec}{\rho}
\newcommand{\states}{Z}
\DeclarePairedDelimiter{\abs}{\lvert}{\rvert}
\newcommand{\integers}{\mathbb{Z}}
\newcommand{\Rats}{\mathbb{Q}}
\newcommand{\Reals}{\mathbb{R}}
\newcommand{\Nats}{\mathbb{N}}
 \newcommand{\sdom}{\mathsf{DOM}}
 \newcommand{\ptransf}{\Delta}
\newcommand{\Lap}[1]{\mathsf{Lap}{(#1)}}
\newcommand{\DLap}[1]{\mathsf{DLap}{(#1)}}
\newcommand{\pexp}[1]{\mathsf{Exp}{(#1)}}
\newcommand{\euler}{e}
\newcommand{\eulerv}[1]{\ensuremath{\euler^{#1}}}
\newcommand{\cU}{\mathcal{U}}
\newcommand{\cV}{\mathcal{V}}
\newcommand{\cW}{\mathcal{W}}
\newcommand{\cM}{\mathcal{M}}
\newcommand{\Prob}{\mathsf{Prob}}
\newcommand{\st}  {\mathbin{|}}
\newcommand{\blue}[1]{{#1}}
\newcommand{\brown}[1]{{#1}}
\newcommand{\set}[1]{\{#1\}}
\newcommand{\true}{\mathsf{true}}
\newcommand{\false}{\mathsf{false}}
\newcommand{\cA}{\mathcal{A}}
\newcommand{\cE}{\mathcal{E}}
\newcommand{\cD}{\mathcal{D}}
\newcommand{\cL}{\mathcal{L}}
\newcommand{\cX}{\mathcal{X}}
\newcommand{\cB}{\mathcal{B}}
\newcommand{\cZ}{\mathcal{Z}}
\newcommand{\cR}{\mathcal{R}}
\newcommand{\sF}{\mathcal{F}}
\newcommand{\mx}{\mathsf{max}}
\newcommand{\bv}{\mathsf{b}}
\newcommand{\dv}{\mathsf{x}}
\newcommand{\iv}{\mathsf{z}}
\newcommand{\rv}{\mathsf{r}}
\newcommand{\mbar} {\mathbin{|}}
\newcommand{\Ifs}{\textsf{if}}
\newcommand{\Thens}{\textsf{then}}
\newcommand{\eds}{\textsf{end}}
\newcommand{\Elses}{\textsf{else}}
\newcommand{\Whiles}{\textsf{While}}
\newcommand{\Dos}{\textsf{do}}
\newcommand{\ifstatement}[3]{\Ifs \, #1\, \Thens\, #2 \, \Elses\, #3\, \eds}
\newcommand{\whilestatement}[2]{\Whiles \, #1\, \Dos\, #2\, \eds }
\newcommand{\ext}{\textsf{exit}}
\newcommand{\qs}{q_s}
\newcommand{\qf}{q_f}
\newcommand{\sLabels}{\mathsf{Labels}}
\theoremstyle{definition}
\newtheorem*{remark}{Remark}
\newtheorem{definition}{Definition}
\newtheorem{example}[definition]{Example}
\newtheorem{theorem}[definition]{Theorem}
\newtheorem{lemma}[definition]{Lemma}
\newtheorem{corollary}[definition]{Corollary}
\newtheorem{proposition}[definition]{Corollary}
\newcommand{\chose}{\textsf{choose}}
\newcommand{\exit}{\textsf{exit}}
\newcommand{\seq}{\mathsf{seq}}
\newcommand{\Disc}{\mathsf{Disc}}
\newcommand{\pos}{\s{pos}}
\newcommand{\problemone} {Fixed Parameter Differential Privacy}
\newcommand{\problemtwo} {Differential Privacy}
\newcommand{\lngreal} {\cL_{\mathsf{exp}}}
\newcommand{\threal} {\mathsf{Th}_{\mathsf{exp}}}
\newcommand{\tuple}[1] {\langle #1 \rangle}
\newcommand{\s}[1] {\mathsf{#1}}
\newcommand{\fb}[1]{\mathbf{#1}}
\newcommand{\ourlang} {{\sf DiPWhile}}
\newcommand{\finlang} {{\sf Finite DiPWhile}}
\newcommand{\tool} {{\sf DiPC}}
\newcommand{\reachprob}[2] {\Prob(#1,#2)}
\newcommand{\sem}[1] {[\![#1]\!]}
\newcommand{\nxt} {\s{next}}
\newcommand{\pto} {\hookrightarrow}
\newcommand{\rmv}[1] {}
\newcommand{\cmark}{\ding{51}}%
\newcommand{\xmark}{\ding{55}}%
\newcolumntype{L}{>{\centering\arraybackslash}m{2.0cm}}
\newcolumntype{M}{>{\centering\arraybackslash}m{1.8cm}}
\newcolumntype{N}{>{\centering\arraybackslash}m{1.2cm}}
\newcommand{\removelatexerror}{\let\@latex@error\@gobble}
\newcommand{\parta}{T1} 
\newcommand{\partb}{T2}
\begin{document}

\title{Deciding Differential Privacy for Programs with Finite Inputs and Outputs}   


\author[1]{Gilles Barthe}
\affil[1]{MPI Security and Privacy and IMDEA Software Institute}

\author[2]{Rohit Chadha}
\affil[2]{University of Missouri}

\author[3]{Vishal Jagannath}
\affil[3,5]{University of Illinois at Urbana-Campaign}

%
\author[4]{A. Prasad Sistla}
\affil[4]{University of Illinois at Chicago}
%
\author[5]{Mahesh Viswanathan}

\maketitle
\begin{abstract}
  Differential privacy is a \emph{de facto} standard for statistical
  computations over databases that contain private data. Its main and
  rather surprising strength is to guarantee individual privacy and
  yet allow for accurate statistical results.
  Thanks to its mathematical definition, differential privacy is also
  a natural target for formal analysis. A broad line of work develops
  and uses logical methods for proving privacy. A more recent and
  complementary line of work uses statistical methods for finding
  privacy violations. Although both lines of work are practically
  successful, they elide the fundamental question of decidability.
  
  This paper studies the decidability of differential privacy. We first
 establish that checking differential privacy is undecidable even if one
  restricts to programs having a single Boolean input and a single
  Boolean output. Then, we define a non-trivial class of programs and
  provide a decision procedure for checking the differential privacy of a
  program in this class. Our procedure takes as input a program $P$
  parametrized by a privacy budget $\epsilon$ and either establishes the
  differential privacy for all possible values of $\epsilon$ or
  generates a counter-example. In addition, our procedure works for both
  to $\epsilon$-differential privacy and
  $(\epsilon,\delta)$-differential privacy. Technically, the decision
  procedure is based on a novel and judicious encoding of the
  semantics of programs in our class into a decidable fragment of the
  first-order theory of the reals with exponentiation. We implement
  our procedure and use it for (dis)proving privacy bounds for many
  well-known examples, including randomized response, histogram,
  report noisy max and sparse vector.

\end{abstract}





\maketitle

%

\section{Introduction}
\label{sec:intro}
Differential privacy~\cite{DMNS06} is a gold standard for the privacy of
statistical computations. Differential privacy ensures that running
the algorithm on any two \lq\lq adjacent\rq\rq\ databases yields two
\lq\lq approximately\rq\rq\ equal distributions, where two databases
are adjacent if they differ in a single element, and two distributions
are approximately equivalent if their distance is small w.r.t.\, some
metric specified by privacy parameter $\epsilon$ and error parameter
$\delta.$ Thus, differential privacy delivers a very strong form of
individual privacy. Yet, and somewhat surprisingly, it is possible to
develop differentially private algorithms for many tasks.  Moreover,
the algorithms are useful, in the sense that their results have
reasonable accuracy. However, designing differentially private
algorithms is difficult, and the privacy analysis can be error-prone,
as witnessed by the example of the sparse vector technique.

This difficulty has motivated the development of formal approaches for
analyzing differentially private algorithms (see~\cite{BGHP16} for a
survey and the related work section of this paper). Broadly,
two successful lines of work have emerged. The first line of work
develops sound proof systems to establish differential privacy and
uses these proof systems to prove the privacy of well-known and intricate examples~\cite{RP10,GHHNP13,BKOZ13,BGGHS16,BFGGHS16,ZK17,AGHK18,AH18,WDWKZ19}.
The second line of work searches for counter-examples to demonstrate
the violation of differential privacy~\cite{DingWWZK18,BichselGDTV18}.
Unfortunately, both lines of work elide the question of decidability.
As previous experience in formal verification suggests, understanding
decidable fragments of a problem not only help advance our theoretical
knowledge, but can form the basis of practical tools when combined with ideas like abstraction and composition.

The goal of this paper is, therefore, to study the decision problem
for differential privacy, and to make a first attempt at delineating
the decidability/undecidability boundary. As a first contribution, we
show that, as expected, checking differential privacy is
computationally undecidable. Our undecidability result holds even if
one restricts to programs having a single Boolean input and a single
Boolean output. Given the undecidability result, we then consider the
task of identifying a rich class of programs, that encompasses many
known examples, for which checking differential privacy nonetheless is
decidable.  We impose two desiderata:
\begin{enumerate}
\item the class of programs must include programs with real-valued
  variables, and more generally, with variables over infinite domains.
  This requirement is critical for the method to cover a broad class
  of differential privacy algorithms;
  
\item \blue{the programs themselves are parametrized by the privacy parameter $\epsilon$ (throughout the paper,
  we assume that the error parameter $\delta$ is a function of $\epsilon$),} and the decision procedure should decide privacy for all possible
  instances of the privacy parameter $\epsilon$. This requirement is motivated by the fact, supported
  by practice, that \blue{differential privacy algorithms are typically parametrized by $\epsilon$, and} well-designed algorithms are private not only for
  a single value of $\epsilon$, but typically for all positive values of
  $\epsilon$.
\end{enumerate}

We focus our attention on programs whose input and output spaces are
finite. Note that such programs need not be finite-state, as per our
first requirement, they could use program variables ranging over
infinite (even uncountable) domains to carry out the computation. We
introduce a class of programs, called {\ourlang}, which are
probabilistic while programs, for which the problem of checking
differential privacy is decidable. We succeed in carefully balancing
decidability and expressivity, by judiciously delineating the use of
real-valued and integer-valued variables. Intuitively, the main
restriction we impose is that these infinite-valued variables be used
only to directly influence the program control-flow and not the
data-flow that leads to the computation of the final output. 
\blue{More precisely, in an execution, the program output value depends only on the 
input, values sampled from user-defined distributions and the exponential mechanism, 
and branch conditions on the control flow path taken. 
The sampled values of real/integer variables affect only the branch conditions. 
Thus, the output values depend only on the branch conditions satisfied by the sampled real/integer variable values, but not on their actual sampled values.}
This
restriction, though severe, turns out to capture many prominent
differential privacy algorithms, including Report Noisy Max and Sparse
Vector Technique (see Section~\ref{sec:experiments} on experiments).

Key observations that enable us to establish decidability of
{\ourlang} programs are as follows. The first result is that the
semantics of {\ourlang}-programs can be defined using
parametrized, \emph{finite-state} Markov chains~\footnote{A
parametrized Markov chain is a Markov chain whose transition
probabilities are a function of the privacy budget.}. The fact that
the semantics is definable using only finitely-many states is a
surprising observation because our programs have both integer and
real-valued variables, and hence a na\"{i}ve semantics yields
uncountably many possible states.  Our crucial insight here is that a
precise semantics for {\ourlang}-programs is possible without tracking
the explicit values of the real and integer-valued variables. Since
real and integer variables are intuitively used only in influencing
control-flow, the semantics only tracks the symbolic relationships
between the variables. Second, we show that the transition
probabilities of the Markov chain are ratios of polynomial functions
in $\epsilon$ and $\eulerv \epsilon$, where $\euler$ is the Euler's
constant; this was a difficult result to establish. These two
observations together, allow us to reduce the problem of checking the
differential privacy of {\ourlang}-programs to the decidable fragment
of the first-order theory of reals with exponentials, identified by
McCallum and Weispfenning~\cite{mccallum2012deciding}.

We leverage our decision procedure to build a stand-alone tool for
checking $\epsilon$- or $(\epsilon, \delta(\epsilon))$-differential
privacy of mechanisms specified by {\ourlang}-programs, for all values
of $\epsilon$. We have implemented our decision procedure in a tool
that we call {\tool} ({\bf Di}fferential {\bf P}rivacy {\bf
C}hecker). Given {\ourlang}-program, our tool constructs a sentence
within the McCallum-Weispfenning fragment of the theory of reals with
exponentials. It then calls Mathematica\textregistered\ to check if
the constructed sentence is true over the reals. Since our decision
procedure is the \emph{first} that can both prove differential privacy
and detect its violation, we tried the tool on examples that known to
be differentially private and those that are known to be not
differentially private including variants of Sparse Vector, Report
Noisy Max, and Histograms. {\tool} successfully checked differential
privacy for the former class of examples and produced counter-examples
for the latter class. Our counter-examples are exact and are more
compact than those discovered by prior tools.

As a contribution of independent interest, we also demonstrate how our method
yields a theoretical complete under-approximation method for checking
differential privacy of programs with infinite output sets. For such
programs, it is possible to discretize the output domain into a finite
domain, and to use the decision procedure to find privacy violations
for the discretized algorithm (by post-processing, privacy violations
for the discretized algorithms are also privacy violations for the
original algorithm). The discretization yields a method for generating counter-examples for algorithms with infinite output sets. 

We briefly contrast our results with prior work, and refer the reader
to Section~\ref{sec:related-work} for further details. Overall, we see our decidability
results as complementary to prior works in checking differential
privacy. In general, existing methods for proving or disproving
differential privacy, although inherently incomplete due to the undecidability of checking 
differential privacy, 
are likely to be more efficient because they can
trade-off efficiency for precision. However, the decision procedures
for a sub-class of programs, like the one presented here, maybe more
predictable --- if a decision procedure fails to prove privacy, then it
shall produce a counter-example that demonstrates that the algorithm is
not differentially private. Moreover, counter-example search methods
work for a fixed ($\epsilon$) privacy parameter. As the  counter-example methods are usually statistical, they may generate both
false positives and false negatives. In contrast, our decision procedures
work for all values for the privacy parameter and do not generate
false positives or false negatives.


\paragraph{Contributions.}
We summarize our key contributions.
\begin{itemize}
\item We prove the undecidability of the problem of checking differential 
  privacy of very simple programs, including those that have a single
  Boolean input and output. Though unsurprising, undecidability has
  not been previously established in any prior work.

\item We prove the decidability of differential privacy for an interesting
 class of programs. Our method is fully automatic that can  check both
  differential privacy \emph{and} detect its violation by
  generating \emph{counter-examples}. To the best of our knowledge, this
  is the first such result that encompasses sampling from integer and real-valued variables. 

\item We implement the decision procedure and evaluate our approach on
private and non-private examples from the literature.
\end{itemize}

\brown
{Due to lack of space, some proofs and other materials have been moved to an Appendix. The Appendix  has been uploaded as an anonymous supplementary submission.} 


\section{Primer on differential privacy}
\label{sec:diffp-primer}

Differential privacy~\cite{DMNS06} is a rigorous definition and
framework for private statistical data mining. In this model, a
trusted curator with access to the database returns answers to queries
made by possibly dishonest data analysts that do not have access to
the database. The task of the curator is to return probabilistically
noised answers, so that data analysts cannot distinguish between two
databases that are adjacent, i.e.\, only differ in the value of a single
individual. There are two common definitions: two databases are
adjacent if they are exactly the same except for the presence or
absence of one record, or for the difference in one record. We
abstract away from any particular definition of adjacency.


Henceforth, we denote the set of real numbers, rational numbers,
natural numbers and integers by $\Reals,\Rats,\Nats$, and $\integers$
respectively.  The Euler constant shall be denoted by $\euler$.
We assume given a set $\cU$ of inputs, and a set $\cV$ of outputs. A
randomized function $P$ from $\cU$ to $\cV$ is a function that takes
an input in $\cU$ and returns a distribution over $\cV$. For a
measurable set $S\subseteq \cV$, the probability that the output of
$P$ on $u$ is in the set $S$ shall be denoted by $\Prob(P(u)\in
S)$. In the case the output set is discrete, we use $\Prob(P(u)=v)$ as
shorthand for $\Prob(P(u)\in \{ v\})$.

We are now ready to define differential privacy. We assume that $\cU$
is equipped with a binary \emph{symmetric} relation
$\Phi \subseteq \cU\times \cU$, which we shall call
the \emph{adjacency relation}. We say that $u_1,u_2\in \cU$
are \emph{adjacent} if $(u_1,u_2) \in \Phi$.
\begin{definition} Let $\epsilon\geq 0$ and $0\leq\delta\leq 1$.
Let $\Phi\subseteq \cU\times \cU$ be an adjacency relation. Let $P$ be
a randomized function with inputs from $\cU$ and outputs in $\cV$. We
say that $P$ is $(\epsilon,\delta)$-differentially private with
respect to $\Phi$ if for all measurable subsets $S\subseteq \cV$ and
$u,u' \in \cU$ such that $(u,u')\in \Phi$,
$$ \Prob(P(u)\in S) \leq \eulerv  \epsilon   \, \Prob(P(u')\in S) +\delta$$
As usual, we say that $P$ is $\epsilon$-differentially private iff it
is $(\epsilon,0)$-differentially private. If the output domain is
discrete, it is equivalent to require that for all $v\in \cV$ and
$u,u' \in \cU$ such that $(u,u')\in \Phi$,
$$ \Prob(P(u)=v) \leq \eulerv \epsilon \, \Prob(P(u')=v)$$
\end{definition}
Differential privacy is preserved by post-processing. Concretely, if
$P$ is an $(\epsilon,\delta)$-differentially private computation from
$\cU$ to $\cV$, and $h:\cV \rightarrow \cW$ is a deterministic
function, then $h\circ P$ is an $(\epsilon,\delta)$-differentially
private computation from $\cU$ to $\cW$. In the remainder, we shall
exploit post-processing to connect differential privacy of randomized
computations with infinite output spaces to differential privacy of
their discretizations.


%
%
     
 \paragraph{Laplace Mechanism.} The Laplace
mechanism~\cite{DMNS06} achieves differential privacy for numerical
computations by adding random noise to outputs. Given $\epsilon>0$ and
mean $\mu,$ let $\Lap{\epsilon,\mu}$ be the continuous distribution
whose probability density function (p.d.f.) is given by
 $$ f_{\epsilon,\mu}(x) = \frac {\epsilon} 2 \ \euler^{-\epsilon \abs
 {x -\mu}} .$$ $\Lap{\epsilon,\mu}$ is said to be the \emph{Laplacian
 distribution} with mean $\mu$ and scale parameter $\frac 1 \epsilon.$
 Consider a real-valued function $q:\cU\rightarrow \Reals$. Assume
 that $q$ is $k$-sensitive w.r.t.\, an adjacency relation $\Phi$ on
 $\cU$, i.e.\, for every pair of adjacent values $u_1$ and $u_2$,
 $\abs{q(u_1)- q(u_2)}\leq k$. Then the computation that maps $u$ to
 $\Lap{\frac\epsilon k,q(u)}$ is $\epsilon$-differentially private.

It is sometimes convenient to consider the discrete version of the
Laplace distribution. Given $\epsilon>0$ and mean $\mu,$ let
$\DLap{\epsilon,\mu}$ be the discrete distribution on $\integers$, the
set of integers, whose probability mass function (p.m.f.)  is
    $$ f_{\epsilon,\mu} (i) = \frac {1-\euler^{-\epsilon}}
    {1+\euler^{-\epsilon}} \ \euler^{-\epsilon \abs {i -\mu}} .$$
    $\DLap{\epsilon,\mu}$ is said to be the \emph{discrete Laplacian
    distribution} with mean $\mu$ and scale parameter $\frac
    1 \epsilon$.  The discrete Laplace mechanism achieves the same
    privacy guarantees as the continuous Laplace mechanism.

\paragraph{Exponential mechanism.}
The Exponential mechanism~\cite{McSherryT07} is used for making non-numerical
computations private.  The mechanism takes as input a value $u$ from
some input domain and a scoring function $F:\cU\times \cV \rightarrow
\Reals$ and outputs a discrete distribution over $\cV$. Formally, given
$\epsilon>0$ and $u\in \cU$, the discrete distribution $\pexp
{\epsilon, F,u}$ on $\cV$ is given by the probability mass function:
     $$h_{\epsilon, F,u} (v)=\frac{\euler^{\epsilon
     F(u,v)}}{ \sum_{v\in \cV} \euler^{\epsilon F(u,v)}} .$$

Suppose that the scoring function is $k$-sensitive w.r.t.\, some
adjacency relation $\Phi$ on $\cU$, i.e., for all for each pair of
adjacent values $u_1$ and $u_2$ and $v \in\cV$, $\abs{F(u_1,r)-
F(u_2,r)}\leq k$. Then the exponential mechanism is $(2k\epsilon,
0)$-differentially private w.r.t.\, $\Phi$.


\section{Motivating Example}
\label{sec:examples}

Before presenting the mathematical details of our results, let us
informally introduce our method by showing how it would work on an
illustrative example.

\paragraph{Sparse Vector Technique.} 
Several differential privacy examples require that the randomized algorithms sampling from infinite support distributions (including continuous distributions). 
The Sparse Vector Technique (SVT)~\cite{DNRRV09,lyu2016understanding}
was designed to answer multiple $\Delta$-sensitive numerical queries
in a differentially private fashion. The relevant information we want
from queries is, which amongst them are above a threshold $T$. 
The Sparse Vector Technique as
given in Algorithm~\ref{fig:SVT} is designed to identify the first $c$
queries that are above the threshold $T$ in an
$\epsilon$-differentially private fashion.

 \RestyleAlgo{boxed} 
 \begin{algorithm}
  \DontPrintSemicolon
\SetAlgoLined

 \KwIn{$q[1:N]$}
 \KwOut{$out[1:N]$}
 \;
$\rv_T \gets \Lap{ \frac {\epsilon} {2 \Delta} , T}$\;
 $count \gets 0$\;
  \For{$i\gets 1$ \KwTo $N$}
  {
    $\rv\gets \Lap{\frac \epsilon {4c\Delta} , q[i]}$\;
    $\bv\gets \rv \geq \rv_T$\;
    \uIf{$b$}{
      $out[i] \gets \top$\;
      $count \gets count + 1$\;
       \If{$count\geq c$} {exit} 
      }
    \Else{
      $out[i] \gets \bot$}
  }

%

\caption{SVT  algorithm (SVT1)}
\label{fig:SVT}
\end{algorithm}


In the program, the integer $N$ represents the total number of queries,
and the array $q$ of length $N$ represents the answers to queries. The
array $out$ represents the output array, $\bot$ represents False and
$\top$ represents True. We assume that initially the constant $\bot$
is stored at each position in $out$. In the SVT technique, the $\top$
answers account for most of the privacy cost, and we can only answer
$c$ of them until we run out of the privacy
budget~\cite{DNRRV09,ZK17}. On the other hand, there is no restriction
on the number of $\bot$ answers. \blue{Please observe that the SVT algorithm is parametrized by the privacy budget $\epsilon$. 
Thus, the SVT algorithm can be considered as representing a class of programs, one for each $\epsilon>0$.}

Given $N$, the input set $\cU$ in this context is the set of $N$ length vectors
$q$, where the $k$th element $q[k]$ represents the answer to the $k$th
query on the original database. The adjacency relation $\Phi$ on
inputs is defined as follows: $q_1$ and $q_2$ are adjacent if and only
if $\abs{q_1[i]-q_2[i]}\leq 1$ for each $1\leq i \leq N$.

Let us consider an instance of the SVT algorithm when $T=0,$ $N=2$,
$\Delta=1$ and $c=1$. Let us assume that all array elements in $q$
come from the domain $\set{0,1}$. In this case, we have four possible
inputs $[0,0],[0,1],[1,1]$, and $[1,0]$, and three possible outputs
$[\bot,\bot], [\top,\bot]$, and $[\bot, \top]$.

For example, the
probability of outputting $[\bot,\top]$ on input $[0,1]$ can be computed as follows. 
Let $X_T$ be a random variable with
Laplacian distribution $\Lap{\frac \epsilon 2,0}$, $X_1$ be a random variable with
Laplacian distribution $\Lap{\frac \epsilon 4,0}$ and $X_2$ be the random variable with Laplacian distribution $\Lap{\frac \epsilon 4,1}.$
 The
probability of outputting $[\bot,\top]$ is the product of outputting
of outputting $\bot$ first, which  is  $Prob(X_1<X_0)$,  and the conditional probability of outputting $\top$ given that $\bot$ is output, which is 
 $Prob(X_2\geq X_0 | X_1<X_0) $. Note that we really require the second quantity to be conditional probability as the events $X_1<X_0$
 and $X_2\geq X_0$ are \underline{not} independent.
This probability can be computed to be 
\[
r_1 \blue{(\epsilon)} = \frac{24 \euler^{\frac{3\epsilon} 4} - 1 +
 8 \eulerv{\frac \epsilon 4}\ + 21 \eulerv{\frac \epsilon 2}
 }{48 \euler^{\frac{3\epsilon} 4}}.
\]
Similarly, when the input is $[1,1]$ and the output is $[\bot,\top]$,
the probability is given by
\[
r_2 \blue{(\epsilon)}= \frac{-22 + 32 \eulerv{\frac \epsilon 4} -3 \epsilon
}{48 \euler^{\frac{\epsilon} 2}}.
\]

\blue{Observe that  $r_1(\epsilon)$ and $r_2(\epsilon)$ are functions of $\epsilon$, and hence the probabilities of outputting $[\bot,\top]$ on inputs $[0,1]$ and $[1,1]$ 
vary with $\epsilon$.}
Our immediate challenge is to automatically compute expressions like $r_1\blue{(\epsilon)},r_2\blue{(\epsilon)}$  
from the  given program, the adjacent inputs, and outputs.  Note that this example involves sampling from continuous distributions and is a function of $\epsilon.$ Nevertheless, we shall establish that (see Section~\ref{sec:dipwhile} and Theorem~\ref{thm:semantics}) that  for several programs, the former can be accomplished by interpreting the program as a finite-state DTMC whose transition probabilities 
are functions parameterized by $\epsilon$
even when the randomized choices involve infinite-support random variables. The set of programs that
we identify (Section~\ref{sec:dipwhile}) is rich enough to
model the most known differential privacy mechanisms when restricted to
finite input and output sets.

 Having computed such expressions,
checking $\epsilon$-differential privacy requires one to determine if
\blue{$$\begin{array}{ll}
\mbox{for all }&   \epsilon >0 .\ (r_1(\epsilon)\leq \eulerv{\epsilon} r_2(\epsilon)) \\
\mbox{and } \mbox{for all }&  \epsilon>0.\ (r_2(\epsilon)\leq \eulerv{\epsilon} r_1(\epsilon)).\\
\end{array}$$}
Note that the particular condition for the SVT example under consideration above is encodable as a first-order
sentence with exponentials, and thus checking the formula for the example
reduces to determining if such a first-order sentence
is valid for reals, with the standard interpretation of multiplication,
addition, and exponentiation. Whether there is a decision procedure
that can determine the truth of first-order sentences involving
real arithmetic with exponentials, is a long-standing open problem. However,
a decidable fragment of such an extended first-order theory
has been identified by McCallum and Weispfenning~\cite{mccallum2012deciding}. The formula for the considered example lies in this fragment.  Indeed, we can show that all the formulas for the SVT example lie in this fragment. This observation presents a challenge, namely, what guarantees do we have that checking differential privacy is reducible to this decidable fragment. Indeed, we shall establish that the set of formulas that arise from the class of programs with finite-state DTMC semantics in Theorem~\ref{thm:semantics} also lead to formulas in the same decidable fragment.

\begin{remark}
Notice that if one can compute expressions for the probability
producing individual outputs on a given input, we could  
also check $(\epsilon,\delta)$-differential privacy, instead
of just $\epsilon$-differential privacy. The only change would be to
account for $\delta$ in our constraints, and to consider all possible
subsets of outputs, instead of just individual output values. Thus,
the methods proposed here go beyond the scope of most automated
approaches, which are restricted to vanilla $\epsilon$-differential
privacy.
\end{remark}

\section{Preliminaries}
\label{sec:prelim}

In this section, we formally define the problem of differential privacy verification that we consider in this paper and also introduce the decidable fragment of 
real arithmetic with exponentiation 
that plays a crucial role in our decision procedure. The set of reals/positive reals/rationals/positive rationals shall be denoted by 
$\Reals$/$\Reals^{>0}$/$\Rats$/$\Rats^{>0}$ respectively.  


\subsection{The Computational Problem}
\label{sec:problem-def}

As illustrated by the example in Section~\ref{sec:examples}, a
differential privacy mechanism is typically a randomized
program $P_\epsilon$ parametrized by a variable $\epsilon$. Having a
parameterized program $P_\epsilon$ captures the fact that the program's
behavior depends on the privacy budget $\epsilon$, intending to
guarantee that $P_\epsilon$ is
$(f(\epsilon),g(\epsilon))$-differentially private, where $f$ and $g$
are some functions of $\epsilon$. The parameter $\epsilon$ is assumed
to belong to some interval $I \subseteq \Reals^{>0}$ \blue{with rational end-points}; usually, we take
$\epsilon$ to just belong to the interval $(0,\infty)$. The program
$P_\epsilon$ shall be assumed to terminate with probability 1 for every
value of $\epsilon$ (in the appropriate interval).

The  randomized program  $P_\epsilon$
takes inputs from a set $\cU$ and produces output in a set
$\cV$. In this paper, we shall assume that both $\cU$ and $\cV$
are \emph{finite} sets that can be effectively enumerated. Despite our
restriction to finite input and output sets, the computational problem of checking
differential privacy is challenging (see Section~\ref{sec:undec}). At the same time, the decidable
subclass we identify (Section~\ref{sec:dipwhile}) is rich enough to
model most  differential privacy mechanisms when restricted to 
finite input and output sets. Extending our decidability results to
subclasses of programs that have infinite input and output sets, is a
non-trivial open problem at this time.

The computational problems we consider in this paper are as
follows. Since our programs take inputs from a finite set $\cU$, we
assume that the adjacency relation $\Phi \subseteq \cU \times \cU$ is
given as an explicit list of pairs. In general, when discussing
$(\epsilon,\delta)$-differential privacy of some mechanism, the
error parameter $\delta$ needs to be a function of $\epsilon$. To
define the computational problem of checking differential privacy, the
function $\delta: \Reals^{> 0} \to [0,1]$ must be given as input. We,
therefore, assume that this function $\delta$ has some finite
representation; if $\delta$ is the constant \blue{$\delta_0$} (which is often the
case), then we represent $\delta$ simply by the number \blue{$\delta_0$}. 
There are
two computational problems we consider in this paper.
\begin{description}
\item[\problemone] Given a program $P_\epsilon$ over inputs $\cU$ and 
  outputs $\cV$, adjacency relation $\Phi \subseteq \cU \times \cU$,
  and positive rational numbers $\epsilon_0,\delta_0,t \in \Rats^{>0}$, determine if
  $P_{\epsilon_0}$ is $(t\epsilon_0,\delta_0)$-differentially private
  with respect to $\Phi$.
\item[\problemtwo] Given a program $P_\epsilon$ over inputs $\cU$ and 
  outputs $\cV$, interval $I \subseteq \Reals^{>0}$ \blue{with rational end-points},
  $\delta:\Reals^{>0} \to [0,1]$, an adjacency relation
  $\Phi\subseteq \cU\times \cU$, and a rational number $t\in \Rats^{>0}$,
  determine if $P_\epsilon$ is
  $(t\epsilon,\delta(\epsilon))$-differentially private with respect
  to $\Phi$ \emph{for every} $\epsilon \in I$.
\end{description}

Observe that the {\problemone} problem can be trivially reduced to the
{\problemtwo} problem \blue{by considering the singleton interval $I=[\epsilon_0,\epsilon_0]$ and $\delta(\epsilon)=\delta_0$, where the goal is to check fixed parameter differential privacy for constant privacy budget $\epsilon_0$ and error parameter $\delta_0$. } Thus, an algorithm for checking {\problemtwo}
can be used to solve {\problemone}. Unfortunately, the {\problemone}
problem is extremely challenging even when restricted to finite input and output sets--- we show that it is
undecidable (Section~\ref{sec:undec}), and therefore, so is the
{\problemtwo} problem. We shall identify a class of programs
(Section~\ref{sec:dipwhile}) for which the {\problemtwo} problem (and
therefore the {\problemone} problem) is decidable.

When the differential privacy 
does not hold, we would like to output a counter-example.   

\begin{definition}
A counter-example of $(\epsilon,\delta)$ differential privacy for  $P_\epsilon$, with respect to an adjacency
relation $\Phi$, a function $\delta:\Reals^{>0} \to [0,1]$ and a value
$t\in\Rats^{>0}$, is a quadruple $(\fb{in},\fb{in'},O,\epsilon_0)$ such that
$(\fb{in},\fb{in'})\in \Phi$, $O\subseteq \cV$ and $\epsilon_0>0$ and 
$$ \Prob(P_{\epsilon_0}(\fb{in})\in O) > \eulerv {t \epsilon_0}   \, \Prob(P(\fb{in'})\in O) +\delta(\epsilon_0)$$
When $\delta$ is the constant function $0$, then $O$ is $\{\fb{out}\}$ for some $\fb{out}\in \cV.$
\end{definition}

\begin{remark}
For the rest of the paper, \blue{unless otherwise stated}, we shall assume that the interval  $I \subseteq \Reals^{>0}$ that contains the set of admissible $\epsilon$s is the interval $(0,\infty).$
\blue{In our paper, $\epsilon$ refers to the parameter in program $P_\epsilon$, and not the privacy budget. In our case, the privacy budget is $(t \epsilon)$. For example, some differential privacy algorithms $P_\epsilon$ are designed to satisfy $(\frac \epsilon 2,0)$-differential privacy, and so in this case $t$ would be $\frac 1 2$. In the standard differential privacy definition ``$\epsilon$'' refers to the privacy budget and so $t$ does not appear. However, many theorems for differential privacy algorithms use ``$\epsilon$'' as the program parameter, and then the privacy theorem is stated as the program being $(t \epsilon,\delta)$-differentially private. In most such cases, such a theorem is equivalent to saying that the program $P_{\frac \epsilon t}$ (obtained by replacing $\epsilon$ by $\frac \epsilon t$) is $(\epsilon, \delta(\frac \epsilon t))$-differentially private.}

\end{remark}

\subsection{Reals with exponentials}
\label{sec:decidable-reals}

As outlined in Section~\ref{sec:examples}, our
approach towards deciding differential privacy shall rely on reducing
the question to  the problem of checking the truth of a first-order sentence for the
reals. Because of the definition of differential privacy, the
constructed first-order sentence shall involve exponentials. It is a
long-standing open problem whether there is a decision procedure for
the first-order theory of reals with exponentials. However, some
fragments of this theory are known to be decidable. In particular,
there is a fragment identified by McCallum and
Weispfenning~\cite{mccallum2012deciding}, that we shall exploit in our
results.

We will consider first-order formulas over a restricted signature and
vocabulary. We will denote this collection of formulas as the language
$\lngreal$. Formulas in $\lngreal$ are built using variables
$\set{\epsilon} \cup \set{x_i\: |\: i \in \Nats}$, constant symbols
$0,1$, unary function symbol $\eulerv{(\cdot)}$ applied only to the
variable $\epsilon$, binary function symbols $+,-,\times$, and binary
relation symbols $=, <$. The terms in the language are integral
polynomials with rational coefficients over the variables
$\set{\epsilon} \cup \set{x_i\: |\:
i \in \Nats} \cup \set{\eulerv{\epsilon}}$. Atomic formulas in the
language are of the form $t = 0$ or $t < 0$ or $0 < t$, where $t$ is a
term. Quantifier free formulas are Boolean combinations of atomic
formulas. Sentences in $\lngreal$ are formulas of the form
\[
Q\epsilon Q_1x_1\cdots Q_nx_n \psi(\epsilon,x_1,\ldots, x_n)
\]
where $\psi$ is a quantifier free formula, and $Q$, $Q_i$s are
quantifiers. In other words, sentences are formulas in prenex form,
where all variables are quantified, and the outermost quantifier is
for the special variable $\epsilon$.

The theory $\threal$ is the collection of all sentences in $\lngreal$
that are valid in the structure $\langle \Reals, 0, 1, \eulerv{(\cdot)},$
$+, -, \times, =, <\rangle$, where the interpretation for $0,1,+,-,\times$ is
the standard one on reals, and $\euler$ is Euler's constant; notice
that this is an extension of the first-order theory of reals. The
crucial property about this theory is that it is decidable.
\begin{theorem}[McCallum-Weispfenning~\cite{mccallum2012deciding}]
$\threal$ is decidable. 
\end{theorem}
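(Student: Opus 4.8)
The plan is to separate the role of the exponential from the purely polynomial structure. Since $\eulerv{(\cdot)}$ is applied only to $\epsilon$, and $\epsilon$ is quantified outermost, I would introduce a fresh variable $y$ as a surrogate for $\eulerv{\epsilon}$ and rewrite each term of a sentence $Q\epsilon\, Q_1 x_1 \cdots Q_n x_n\, \psi$ as an honest integral polynomial with rational coefficients over the variables $\epsilon, y, x_1, \ldots, x_n$. The inner block $Q_1 x_1 \cdots Q_n x_n\, \psi(\epsilon, y, x_1, \ldots, x_n)$ is then a formula in the first-order language of ordered fields with free variables $\epsilon$ and $y$ only. First I would invoke Tarski's quantifier elimination for real-closed fields to replace this block by an equivalent quantifier-free formula $\phi(\epsilon, y)$, that is, a finite Boolean combination of atoms $g(\epsilon, y) \bowtie 0$ with $g$ an integral polynomial and $\bowtie \in \set{<, =, >}$. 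This step is effective and reduces the problem to one with a single genuine quantifier, $Q\epsilon$.

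It then remains to decide the truth of $Q\epsilon\, \phi(\epsilon, \eulerv{\epsilon})$ over the admissible interval, say $\Reals^{>0}$. Substituting $y = \eulerv{\epsilon}$, each polynomial atom $g_j(\epsilon, y) \bowtie 0$ becomes a sign condition on a univariate \emph{exponential polynomial} $f_j(\epsilon) = g_j(\epsilon, \eulerv{\epsilon})$; because $y$ occurs only to integer powers, each $f_j$ is a finite rational-coefficient combination of terms $\epsilon^a \eulerv{b\epsilon}$ with $a, b \in \Nats$. Thus the whole question collapses to deciding a fixed Boolean combination of sign conditions on finitely many one-variable exponential polynomials $f_1, \ldots, f_m$ over an interval.

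The heart of the argument is to compute, effectively, a finite partition of the interval into points and open subintervals on each of which every $f_j$ has constant sign; once such a sign table is available, evaluating $\phi$ cell by cell and then discharging the outer quantifier $Q\epsilon$ is a finite check. To build the table I would use that each $f_j$ has only finitely many zeros, a fact guaranteed by o-minimality of the real exponential field but obtainable effectively through Rolle's theorem: the derivative of an exponential polynomial is again an exponential polynomial of controlled complexity (using $\frac{d}{d\epsilon}\eulerv{\epsilon} = \eulerv{\epsilon}$), so one bounds the number of zeros of $f_j$ in terms of those of $f_j'$ and recurses, terminating because differentiation strictly simplifies the underlying polynomial data. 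Combined with interval bisection and the ability to approximate $\eulerv{\epsilon}$ to arbitrary precision at rational points, this yields effective root isolation and effective sign determination between consecutive roots.

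The step I expect to be the main obstacle is precisely this univariate analysis, and within it the decision of \emph{equalities} rather than strict inequalities: determining whether some $f_j$ vanishes exactly at a candidate point, or whether two of the $f_j$ share a common zero, requires certifying that a specific transcendental quantity is zero, and na\"{i}vely this invites unproven transcendence assumptions. The saving feature is the tameness of a single exponential of a single variable: $\eulerv{\epsilon}$ is positive, strictly monotone, and satisfies the simple relation $y' = y$, so the Rolle-based recursion keeps every auxiliary function inside the same well-behaved class and lets one replace exact transcendental comparisons by sign changes of lower-complexity exponential polynomials at separated rational sample points. Organizing this recursion carefully — essentially a Sturm-style sign theory for exponential polynomials in one variable layered on top of the polynomial cylindrical decomposition — is the technical crux, while the outer reduction via Tarski elimination and the final finite quantifier check over the sign table are comparatively routine.
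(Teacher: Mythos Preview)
The paper does not prove this theorem; it is stated as a cited result of McCallum and Weispfenning~\cite{mccallum2012deciding} and used as a black box, so there is no proof in the paper to compare your attempt against.

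That said, your outline is structurally faithful to what McCallum--Weispfenning actually do: eliminate the inner block of quantifiers over $x_1,\ldots,x_n$ by ordinary real quantifier elimination (treating $\eulerv{\epsilon}$ as a fresh algebraic variable $y$), reducing to a one-quantifier problem $Q\epsilon\,\phi(\epsilon,\eulerv{\epsilon})$, and then analyse the sign behaviour of finitely many univariate exponential polynomials. Where your sketch is too optimistic is the last paragraph. The zero-decision problem for exponential polynomials is not resolved purely by Rolle-style recursion and rational sampling: to certify that a candidate value is \emph{exactly} a root (or that two such functions share a root) one must rule out accidental algebraic relations between $\epsilon$ and $\eulerv{\epsilon}$, and for this McCallum--Weispfenning invoke genuine transcendence theory (Hermite--Lindemann: $\eulerv{\alpha}$ is transcendental for nonzero algebraic $\alpha$). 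Your appeal to ``tameness'' and $y'=y$ gives finiteness of zeros and a termination measure for the recursion, but it does not by itself supply the effective equality test; that is precisely the place where an unproven assumption like Schanuel's conjecture would otherwise creep in, and the point of the McCallum--Weispfenning fragment is that the restriction to a single exponential of a single outermost-quantified variable lets classical (proved) transcendence results suffice.
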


Finally, our tractable restrictions (and our proofs of decidability)
shall often utilize the notion of functions \emph{definable} in $\threal$; we,
therefore, conclude this section with its formal definition.

\begin{definition}
\label{def:definable-func}
A function $f : (0,\infty) \to \Reals$ is said to
be \emph{definable} in $\threal$, if there is a formula
$\varphi_f(\epsilon,x)$ 
in $\lngreal$ with two free variables
($\epsilon$ and $x$) such that
$$\begin{array}{l}
\mbox{\blue{for all }} \blue{a\in (0,\infty)}.\ f(a) = b \mbox{ iff } \\
 \ \ \tuple{\Reals,0, 1, \eulerv{(\cdot)}, +,
-, \times, =, <} \models \varphi_f(\epsilon,x) [\epsilon \mapsto a,
x \mapsto b]
\end{array}$$
\end{definition}


%

\section{Program syntax and semantics}
\label{sec:simple}
\newcommand{\disc}{\s{disc}}
We consider randomized algorithms written as simple
probabilistic while programs. We introduce the syntax of these 
programs, along with their ``natural'' semantics given using Markov
kernels~\cite{Cinlar-Book,Prakash-Markov}. We show that the problem of checking differential privacy is
undecidable for these programs. 
\begin{figure}
\begin{framed}
\raggedright

Expressions ($\bv\in \cB, \dv\in \cX, \iv\in \cZ, \rv\in \cR, d\in \sdom, i \in \integers, q\in \Rats, g\in \sF_{Bool}, f\in \sF_\sdom$):
$$\begin{array}{lcl}
B&::=& \true \mbar \false \mbar \bv \mbar not(B) \mbar B\ and\ B \mbar B \ or \ B \mbar g(\tilde{E})\\
E&::=&   d \mbar \dv  \mbar f (\tilde{E}) \\
Z&::=& \iv \mbar  i Z  \mbar E Z \mbar Z+Z \mbar Z+i \mbar Z+ E \\
R&::=&  \rv \mbar  q R \mbar E R \mbar R+R \mbar R+q \mbar R+ E\\
\end{array}$$

Basic Program Statements ($a\in \Rats^{>0}$, $\sim\in \set{<,>,=, \leq, \geq}$, $F$ is  a scoring function and $\chose$ is a user-defined distribution):
$$\begin{array}{lll}
 s&::= &      \dv\gets E \mbar     \iv\gets Z \mbar     \rv\gets R \mbar   \bv\gets B \mbar   \bv\gets Z_1\sim Z_2 \mbar  \\
          &&        \bv\gets Z \sim E \mbar       \bv\gets  R_1\sim R_2 \mbar      \bv\gets  R \sim E \mbar \\ 
         &&         \rv\gets\Lap{a\epsilon,E} \mbar     \iv\gets\DLap{a\epsilon,E} \mbar \\
          &&       \dv\gets\pexp {a\epsilon, F(\tilde{\dv}), E} \mbar     \dv\gets \chose(a\epsilon, \tilde{E}) \mbar \\
          &&      \ifstatement B P P \mbar   \whilestatement B P \mbar     \ext\\

\end{array}$$

Program Statements ($\ell \in \mathsf{Labels}$)
$$\begin{array}{lcl}
 P&::= &  \ell:\ s \mbar  \ell:\ s\, ;\,  P\\
\end{array} $$
%


\end{framed}
 
 \caption{\footnotesize{BNF grammar for $\s{Simple}$. $\sdom$ is a finite discrete domain. $\sF_{Bool}$, ($\sF_{\sdom}$ resp) are set of functions that output Boolean values ($\sdom$ respectively).   $\cB,\cX, \cZ, \cR$ are the sets of Boolean variables, $\sdom$ variables, integer random  variables and real random variables. 
 $\sLabels$ is a set of program labels.  For a syntactic class $S$, $\tilde{S}$ denotes a sequence of elements from $S$. {\ourlang} (see Section~\ref{sec:dipwhile}) is the subclass of programs in which the assignments to real and integer variables do not occur with the scope of a while statement.
 }}
\label{fig:BNFFull}
\end{figure}

\subsection{Syntax of $\s{Simple}$ programs}
\label{sec:simple-syntax}

We introduce a class of programs we call $\s{Simple}$. Programs in  $\s{Simple}$ are probabilistic while programs in which variables can be assigned values by drawing from distributions typically used in differential privacy algorithms. Programs in $\s{Simple}$ obey some syntactic restrictions; these syntactic restrictions are introduced to make it easier to describe the decidable fragment in
Section~\ref{sec:dipwhile}. Despite these restrictions, the problem of
checking differential privacy is undecidable for the language
introduced here.

The formal syntax of $\s{Simple}$ programs is shown in
Figure~\ref{fig:BNFFull}. Programs have four types of variables: $Bool
= \{\true$, $\false\}$; finite domain $\sdom$~\footnote{Though 
  not necessary to distinguish between Booleans and finite domains,
  having such a distinction makes our future technical development
  easier.} that we assume (without loss of generality) to be
$\set{-N_\mx, \ldots 0, 1, \ldots N_\mx}$, a finite subset of
integers~\footnote{Our decidability results also hold if $\sdom$ is
  taken to be a finite subset of the rationals.}; reals $\Reals$; and
integers $\integers$. The set of Boolean/$\sdom$/ integer/real program
variables are respectively denoted by $\cB$/$\cX$/$\cZ$/$\cR$. The set
of Boolean/$\sdom$/integer/real expressions is given by the
non-terminal $B/E/Z/R$ in Figure~\ref{fig:BNFFull}. We now explain the
rules for such expressions. Boolean expressions ($B$) can be built
using Boolean variables and constants, standard Boolean operations,
and by applying functions from $\sF_{Bool}$. $\sF_{Bool}$ is assumed
to be a collection of \emph{computable} functions returning a
$Bool$. We assume that  $\sF_{Bool}$ always contains a function $\s{EQ}(x_1,x_2)$ that returns $\true$ iff $x_1$ and $x_2$ are equal. 
$\sdom$ expressions ($E$) are similarly built from $\sdom$
variables, values in $\sdom$, and applying functions from set of
computable functions $\sF_\sdom$. Next, integer expressions ($Z$) are
built using multiplication and addition with integer constants and
$\sdom$ expressions, and additions with other integer
expressions. Finally, real expressions ($R$) are built using
multiplication and addition with rational constants and $\sdom$
expressions, and additions with other real-valued expressions. Notice
that integer-valued expressions cannot be added or multiplied, in
real-valued expressions; this syntactic restriction shall be useful
later.

A program in $\s{Simple}$ is a triple consisting of a set of (private)
input variables, a set of (public) output variables, and a finite
sequence of labeled statements (non-terminal $P$ in
Figure~\ref{fig:BNFFull}). The private input variables and public
output variables take values from the domain $\sdom$. Thus, the set of
possibles inputs/outputs ($\cU$/$\cV$), is identified with the set of
valuations for input/output variables; a valuation over a set of
variables $X'=\set{\dv_1,\dv_2,\ldots, \dv_m} \subseteq \cX$ is a
function from $X'$ to $\sdom$.  Note that if we represent the set $X'$
as a sequence $\dv_1,\dv_2,\ldots, \dv_m$ then a valuation $val$ over
$X'$ can be viewed as a sequence $ val(\dv_1),val(\dv_2),\ldots,
val(\dv_m)$ of $\sdom$ elements.

We assume every statement in our program is uniquely labeled from a
set of labels called $\s{Labels}$. Basic program statements
(non-terminal $s$) can either be assignments, conditionals, while
loops, or $\ext$. Statements other than assignments are
self-explanatory. The syntax of assignments is designed to follow a
strict discipline. Real and integer variables can either be assigned
the value of real/integer expression or samples drawn using the
Laplace or discrete Laplace mechanism. $\sdom$ variables are either
assigned values of $\sdom$ expressions or values drawn either using an
exponential mechanism ($\pexp {a\epsilon, F(\tilde{\dv}), E}$) or a
user-defined distribution ($\chose(a\epsilon,\tilde{E})$). For the
exponential mechanism, we require that the scoring function $F$ be
computable and return a rational value. Both of these restrictions are
unlikely to be severe in practice. 
%
%
%
In the case of the user defined distribution, we demand that the
probability with which a value \blue{$d$} in $\sdom$ is chosen (as a function of
the privacy budget $\epsilon$), be definable in $\threal$, 
\blue{and that there is an algorithm that on input $a, \tilde{d},v$ returns the formula 
defining the probability of sampling $d\in \sdom$ from the distribution $\chose(a\epsilon,\tilde{d})$ 
where $\tilde{d}$ is a sequence of values from 
$\sdom$}. This
restriction is exploited in Section~\ref{sec:dipwhile} to get
decidability for a sub-fragment.

Finally, we consider assignments to Boolean variables. The interesting
cases are those where the Boolean variable stores the result of the
comparison of two expressions. The syntax does not allow for comparing
real and integer expressions. This restriction is exploited later in
Section~\ref{sec:dipwhile} when the decidable fragment is identified.
Finally, we will assume that in any execution, if a variable appears
on the right side of an assignment statement, then it should have been
assigned a value before. This assumption is not restrictive but is
technically convenient when defining the semantics for programs.

\subsection{Markov Kernel Semantics}
\label{sec:simple-sem}

We briefly sketch a ``natural'' semantics for
$\s{Simple}$ using Markov kernels. A key step in proving our
decidability result is to define a semantics using finite-state
(parametrized) DTMCs for the sub-fragment {\ourlang} defined in
Section~\ref{sec:dipwhile}. The DTMC semantics may not seem natural on
first reading. The point of the semantics in this section is,
therefore, to argue the correctness of our decision procedure on the basis
of the equivalence of these two semantics for {\ourlang}
(Sections~\ref{sec:dipwhile} and~\ref{sec:decidability}). Details for this section are given in Appendix~\ref{app:semantics}
due of space constraints and because understanding
this semantics is not critical to our decidability proof.

\blue{Given a fixed $\epsilon>0$, the states in the Markov kernel-based semantics for a
program $P_\epsilon$}  will be of the form $(\ell, h_{Bool},
h_{\sdom}, h_{\integers}, h_{\Reals})$, where $\ell$ is the label of
the statement of $P_\epsilon$ to be executed next, the 
functions $h_{Bool}$, $h_{\sdom}$, $h_{\integers}$ and $h_{\Reals}$
assign values to the Boolean, $\sdom$, real and integer variables of the 
program $P_\epsilon$
respectively. Given  an input state $\fb{in}$, the initial state will
correspond to one where $\sdom$-valued input variables get the values
given in $\fb{in}$, and all other variables either get $\false$ or $0$,
depending on their type. Observe
that for a program $P_\epsilon$ with $k$ program statements, $i$ Boolean variables, $j$ $\sdom$ variables, $s$ integer variables, $t$ real variables
a state  $(\ell, h_{Bool},
h_{\sdom}, h_{\integers}, h_{\Reals})$ can be uniquely identified with an element of the set $D_{P_\epsilon}= \set{1,\ldots,k}\times\sF_{Bool}^i\times \sdom^j \times \integers^s\times\Reals^t.$ The ``natural" Borel $\sigma$-algebra on $D_{P_\epsilon}$ induces a $\sigma$-algebra on the states of $P_\epsilon.$


The semantics of $\s{Simple}$ programs can be defined as a Markov
kernel over this $\sigma$-algebra on states. Intuitively, the Markov
kernel $K_\epsilon$ corresponding to a program $P_\epsilon$ is such
that for a state $s$ and a measurable set of states $C$,
$K_{\epsilon}(s,C)$ is the probability of transitioning to a state in
$C$ from $s$. The precise definition of this Markov kernel is in
Appendix~\ref{app:semantics}.

Executions are just sequences of states, and the $\sigma$-field on
executions is the product of the $\sigma$-field on
states. The Markov kernel defines a probability measure on this
$\sigma$-field. Given all these observations, we take
$\Prob_{natural}( P_\epsilon(\fb{in}) = \fb{out})$ to denote the
probability (as defined by the Markov kernel of $P_\epsilon$) of the
set of all executions that start in the initial state
corresponding to $\fb{in}$ and end in an exit state with $\fb{out}$ as
the valuation of output variables\brown{; the precise definition is  in
Appendix~\ref{app:semantics}}. 
For the rest
of the paper, we will assume that our programs terminate with
probability $1.$

\subsection{Undecidability}
\label{sec:undec}
The problem of checking differential privacy for $\s{Simple}$ programs
is undecidable.
\begin{theorem}
\label{thm:undecidability}
The {\problemone} problem  and the  {\problemtwo}  problem for programs $P_\epsilon$ in
$\s{Simple}$ is undecidable.
\end{theorem}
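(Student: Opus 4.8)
The plan is to prove undecidability by reduction from the halting problem, which is undecidable; more precisely I will reduce its complement (``$M$ does not halt on the empty input'') to the question of whether a $\s{Simple}$ program is differentially private. It suffices to treat the \problemone\ problem: the excerpt already shows \problemone\ reduces to \problemtwo, so undecidability transfers upward. The starting observation is that $\s{Simple}$ is computationally universal even though $\sdom$ is finite, because the integer variables range over all of $\integers$. The grammar provides increment ($\iv \gets \iv + 1$ via the $Z+i$ rule), decrement ($\iv \gets \iv + (-1)$), and zero-testing ($\bv \gets \iv = 0$ via the $\bv \gets Z \sim E$ rule with $E = 0 \in \sdom$), together with \textsf{while}-loops, conditionals, and finitely many $\sdom$/Boolean control variables. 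These are exactly the primitives of a two-counter (Minsky) machine, which is Turing-complete, so any machine $M$ can be simulated step by step by a $\s{Simple}$ fragment that advances one simulated step on demand and signals when it enters $M$'s halting state. Notably no multiplication of two integer expressions is needed, so the syntactic prohibition on that is not an obstacle.

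Given $M$, I would build a program $P$ (independent of $\epsilon$) with one input and one output variable, both ranging over $\set{0,1}\subseteq\sdom$ so the interface is Boolean, with adjacency $\Phi = \set{(0,1),(1,0)}$, and I fix rationals $t=1$, $\epsilon_0=1$, $\delta_0=\tfrac12$. The key device is a user-defined coin: $\chose$ samples a value in $\set{0,1}$ with any rational (hence $\threal$-definable) bias. On input $1$ the program deterministically outputs $0$, so $\Prob(P(1){=}1)=0$. On input $0$ it first tosses a coin of bias $\delta_0$: on ``heads'' it outputs $1$ and calls $\ext$; on ``tails'' (probability $1-\delta_0$) it enters a loop that repeatedly tosses a \emph{fair} coin, advancing the Minsky simulation of $M$ by one step on each ``heads'' and stopping the loop with output $0$ on the first ``tails''; if the simulation ever reaches $M$'s halting state it sets the output to $1$ and stops. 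This program terminates with probability $1$ in \emph{both} cases, since the fair-coin loop stops at the first tails almost surely. Writing $T$ for the number of simulated steps until $M$ halts ($T=\infty$ if it never does), the output probability on input $0$ is
\[
\Prob(P(0){=}1) \;=\; \delta_0 + (1-\delta_0)\,2^{-T},
\]
which equals $\delta_0$ exactly when $M$ does not halt and is strictly larger than $\delta_0$ when $M$ halts.

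It remains to check that $P$ is $(t\epsilon_0,\delta_0)$-differentially private iff $M$ does not halt. Because $\cV=\set{0,1}$, the definition reduces to the singleton constraints $S=\set{0}$ and $S=\set{1}$ for each ordered adjacent pair $(0,1)$ and $(1,0)$. Writing $p_0=\Prob(P(0){=}1)$, $p_1=\Prob(P(1){=}1)=0$, and using $\eulerv{t\epsilon_0}=\euler$, the constraint for $S=\set{1}$ on the pair $(0,1)$ reads $p_0 \le \euler\,p_1 + \delta_0 = \delta_0$, which holds iff $M$ does not halt. The other three constraints all hold in the non-halting case $p_0=\delta_0$ (a one-line check using $\euler>1$ and $\delta_0\in(0,1)$), so they never block privacy once the first one is met, while in the halting case the first constraint already fails. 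Hence $P$ is private iff $M$ does not halt. Since $P,\Phi,\epsilon_0,\delta_0,t$ are computable from $M$ and non-halting is undecidable, \problemone\ is undecidable, and \problemtwo\ follows via the reduction noted in the excerpt (take $I=[\epsilon_0,\epsilon_0]$ and $\delta\equiv\delta_0$). A variant with the baseline coin removed and $\delta\equiv0$ over $I=(0,\infty)$ yields undecidability of pure $\epsilon$-differential privacy directly.

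The main obstacle — and what the construction is designed to finesse — is reconciling three competing demands: the produced instance must terminate with probability $1$ whether or not $M$ halts (otherwise the reduction emits an ill-formed instance that violates the standing termination assumption); the influence of halting must survive the additive slack $\delta_0>0$ (a naive ``simulate, then flip'' scheme makes the halting contribution decay like $2^{-T}$, which a fixed $\delta_0$ would absorb, collapsing the problem to \emph{bounded} halting, which is decidable); and everything must fit the restrictive $\s{Simple}$ grammar. The ``deposit $\delta_0$, then run a terminating geometric simulation'' idea resolves the first two at once: almost-sure first-tails stopping guarantees termination, while the baseline deposit places the non-halting probability \emph{exactly} at the threshold $\delta_0$, so any positive halting contribution, however small, pushes the program strictly past it. Verifying the four inequalities and confirming that the Minsky encoding respects the ban on mixing integer and real expressions are the routine remaining details.
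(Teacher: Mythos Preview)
Your proposal is correct and reaches the same conclusion by the same high-level route---a reduction from the (non-)halting problem for two-counter machines to checking differential privacy of a $\s{Simple}$ program with one Boolean input and output---but the implementation details differ from the paper's in two instructive ways.

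First, you encode the counters \emph{directly} with integer variables, using $\iv\gets\iv+1$, $\iv\gets\iv+(-1)$, and $\bv\gets\iv=0$, all of which are available in $\s{Simple}$. The paper instead encodes the counters with \emph{real} variables: it samples a positive $\rv_0$ from a Laplacian, represents counter value $i$ as $(i{+}1)\rv_0$, and increments/decrements by adding/subtracting $\rv_0$. Your encoding is simpler; the paper's shows that undecidability already holds for programs whose only source of infinitude is real-valued sampling.

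Second, for almost-sure termination you use a fair $\chose$-coin inside the simulation loop, stopping at the first tails; the paper instead samples a random ``step budget'' $\rv_{\steps}$ via a second Laplacian and simulates until $k\rv_0>\rv_{\steps}$. Both guarantee termination with probability $1$ regardless of whether $M$ halts.

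Finally, your main construction establishes undecidability at a \emph{positive} $\delta_0$ (the baseline-deposit trick is a nice way to keep the halting signal above the additive slack), whereas the paper proves it directly for pure $(\epsilon,0)$-differential privacy: on input $0$ it always outputs $0$, and on input $1$ it outputs $1$ with positive probability iff $M$ halts, so the constraint $\Prob(P(1){=}1)\le \eulerv{\epsilon}\cdot 0$ fails exactly when $M$ halts. You note the same $\delta=0$ variant at the end, so the two arguments ultimately cover the same ground; your $\delta_0>0$ version is a pleasant bonus that the paper does not spell out.
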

The proof of Theorem~\ref{thm:undecidability} reduces
the non-halting problem for deterministic 2-counter Minsky
machines 
to the {\problemone}
problem. More precisely, we show that given a 2-counter Minsky machine
$\cM$ (with no input), there is a program
$P^{\cM}_\epsilon \in \s{Simple}$ such that 
\begin{itemize}
\item $P^{\cM}_\epsilon$ has only one input $\dv_{\fb{in}}$ 
  and one output $\dv_{\fb{out}}$ taking values in $\sdom=\set{0,1}$;
\item $P^{\cM}_\epsilon$ terminates with probability $1$ for all $\epsilon 
  \in \Reals^{>0}$;
\item $P^\cM_{\epsilon}$ is $(\epsilon,0)$-differentially private 
  with respect to the adjacency relation
  $\Phi=\{(0,1),$ $(1,0)\}$ if and only
  if $\cM$ does not halt.
\end{itemize}
This construction shows that {\problemtwo} is
undecidable. Undecidability of {\problemone} is obtained by taking
$\epsilon$ to be any constant rational number, say $\frac 1 2.$ \brown{The
formal details of the reduction 
are in Appendix~\ref{app:undecid}.}

\section{{\ourlang}: A decidable class of programs}
\label{sec:dipwhile}
We now discuss a restricted class of programs, for which we can establish decidability of checking differential privacy.  The class of programs that we consider are exactly those programs in $\s{Simple}$ that satisfy the following restriction: 
\begin{description}
\item[Bounded Assignments] We do not allow assignments to real and
  integer variables within the scope of a while loop. This restriction ensures
  that assignments to such variables happen only a \emph{bounded}
  number of times during execution. Thus, without loss of
  generality, we assume that real and integer variables are
  assigned \emph{at most once} as a program with multiple assignments to a single real and  variables can always be rewritten to an equivalent program with each assignment to a variable  being an assignment to a fresh variable.
\end{description}
We refer to this restricted class as {\ourlang}.  The {\ourlang}
language is surprisingly expressive --- many known randomized
algorithms for differential privacy can be encoded. We give an example
of such encodings in {\ourlang}. We omit labels of program statements
unless they are needed.

\begin{example}
\label{ex:svtlang}
Algorithm~\ref{fig:SVTL} shows how SVT can be encoded in our
language with $T=0,\Delta=1, N=2, c=1. $ In the example we are
modeling $\bot$ by $0$ and $\top$ by $1.$ Though for-loops are not
part of our program syntax, they can modeled as while loops, or if
bounded (like here), they can be unrolled.
\end{example}

\begin{figure*}
\begin{minipage}[T]{0.4\textwidth}
\IncMargin{0.5em} 

 \RestyleAlgo{boxed} 
\removelatexerror
\begin{algorithm}[H]
  \DontPrintSemicolon

 \KwIn{$q_1,q_2$}
 \KwOut{$out_1,out_2$}
 \;

 \SetKwSty{textsfsf}
 \nl $T\gets 0$;\;
 \nl  $out_1\gets 0$;\;
 \nl  $out_2\gets 0$;\;
 \nl  $\rv_T \gets \Lap{ \frac {\epsilon} {2 } , T};$\;
 \nl   $\rv_1\gets \Lap{\frac \epsilon {4} , q_1}$;\; 
  \nl $\bv\gets \rv_1 \geq \rv_T$;\;
 
  \nl \uIf{$\bv$}{
   \nl    $out_1 \gets 1$\;
      }   
     \Else
     { 
\nl   $\rv_2 \gets \Lap{\frac \epsilon {4} , q_2}$;\;
 \nl  $\bv\gets \rv_2 \geq \rv_T$;\;
 \nl  \If{$\bv$}{
  \nl     $out_2 \gets 1$\;    
      }   
      }
  \nl    $\ext$\;
\caption{\footnotesize{SVT  for $1$-sensitive queries with $N=2$,$c=1$ and $T=0.$ The numbers at the beginning of a line indicate the label of the statement.}}
\label{fig:SVTL}
\end{algorithm}
\DecMargin{0.5em}
\end{minipage}
\qquad
\begin{minipage}[T]{0.6\textwidth}
\setlength{\tabcolsep}{1pt}
\begin{center}
\begin{tikzpicture}[
  every text node part/.style={align=center},
  state/.style={draw,rounded corners,minimum width=4.1cm},
  trans/.style={above,pos=0.65}]
\footnotesize
\node(d1) at (0,8.2) {$\vdots$};
\node[state](s5) at (0,7) {%
        \begin{tabular}{ll}
        {\bf 9}: & $q_1:u$, $q_2:v$, $T:0$, \\
        & $out_1:0$, $out_2:0$, $b: \bot$\\
        & $r_T: (\frac{1}{2},0)$ $r_1: (\frac{1}{4},u)$\\
        & $r_1 < r_T$
        \end{tabular}};
\node[state](s6) at (0,4) {%
        \begin{tabular}{ll}
        {\bf 10}: & $q_1:u$, $q_2:v$, $T:0$, \\
        & $out_1:0$, $out_2:0$, $b: \bot$\\
        & $r_T: (\frac{1}{2},0)$ $r_1: (\frac{1}{4},u)$ 
        $r_2: (\frac{1}{4},v)$\\
        & $r_1 < r_T$
        \end{tabular}};
\node[state](s71) at (-2.5,1) {%
        \begin{tabular}{ll}
        {\bf 11}: & $q_1:u$, $q_2:v$, $T:0$,\\
        & $out_1:0$, $out_2:0$, $b:\top$\\
        & $r_T: (\frac{1}{2},0)$ $r_1: (\frac{1}{4},u)$
        $r_2: (\frac{1}{4},v)$\\
        & $r_1 < r_T$, $r_2 \geq r_T$
        \end{tabular}};
\node[state](s72) at (2.5,1) {%
        \begin{tabular}{ll}
        {\bf 11}: & $q_1:u$, $q_2:v$, $T:0$, \\
        & $out_1:0$, $out_2:0$, $b:\bot$\\
        & $r_T: (\frac{1}{2},0)$ $r_1: (\frac{1}{4},u)$
        $r_2: (\frac{1}{4},v)$\\
        & $r_1 < r_T$, $r_2 < r_T$
        \end{tabular}};
\node(d21) at (-2.5,0) {$\vdots$};
\node(d22) at (2.5,0) {$\vdots$};
\draw[->] (s5) -- node[left]{1} (s6);
\draw[->] (s6) -- node[trans]{$p$} (s71);
\draw[->] (s6) -- node[trans]{$q$} (s72);
\end{tikzpicture}
\end{center}
\caption{\footnotesize{Partial DTMC semantics of Algorithm~\ref{fig:SVTL} showing 
the steps when lines 9 and 10 are executed. $q_1$ and $q_2$ are assumed
to have values $u$ and $v$, respectively. Only values of  assigned
program variables is shown. Third line in state shows parameters for
the real values that were sampled. Last line shows the accumulated set
of Boolean conditions that hold on the path.}}
\label{fig:SVT-dtmc}
\end{minipage}
\end{figure*}

Appendix~\ref{app:example_dipwhile} shows how sampling from the standard exponential distribution can be encoded in {\ourlang}.
Other examples that can be encoded in our language (and for which the
decision procedure applies) include randomized response, 
the private smart sum
algorithm~\cite{CSS10} with finite discretization of the output space (See~\ref{sec:findis}), and private vertex cover~\cite{GLMRT10}.

The decidability of checking differential privacy for {\ourlang} shall rely on two observations.
 First, the semantics of
{\ourlang} programs can also be defined as finite-state discrete-time
Markov chains (DTMC), albeit with transition probabilities parameterized by $\epsilon$. This observation is surprising because
{\ourlang} programs have real and integer values variables, and so the
natural semantics has uncountably many
states (See Section~\ref{sec:simple-sem}). The key insight in establishing this observation is that an
equivalent semantics of {\ourlang} programs can be defined without
explicitly tracking the values of real and integer-valued
variables. Second, all the transition probabilities arising in our
semantics are definable in $\threal$. These two observations allow us
to 
 to establish decidability of
checking differential privacy of {\ourlang} programs. The rest of the section is devoted to establishing these observations. We start by formally defining  \emph{ parametrized DTMCs}. 
\subsection{Parameterized DTMCs}
\begin{definition}
\label{def:pDTMC}
A \emph{parametrized DTMC} is a pair $\blue{\cD} =
(\states,\ptransf)$, where $\states$ is a (countable) set of states,
and $\ptransf: \states \times \states \to (\Reals^{>0} \to [0,1])$ is
the \emph{probabilistic transition function}. For any pair of states
$\state,\state'$, $\ptransf$ returns a function from $\Reals^{>0}$ to
$[0,1]$, such that for every $\epsilon > 0$, $\sum_{\state' \in
  \states} \ptransf(\state,\state')(\epsilon) = 1$. We shall call
$\ptransf(\state,\state')$ as the probability of transitioning from
$\state$ to $\state'$.
\end{definition}

A \emph{definable} parametrized DTMC is a parametrized DTMC
$\blue{\cD} = (\states,\ptransf)$ such that for every pair of states
$\state,\state' \in \states$, the function $\ptransf(\state,\state')$
is definable in $\threal$.

A parametrized DTMC associates with each (finite) sequence of states
$\exec = \state_0, \state_1, \ldots \state_m$, a function
$\Prob(\exec) : \Reals^{>0} \to [0,1]$ that given an $\epsilon > 0$,
returns the probability of the sequence $\exec$ when the parameter's
value is fixed to $\epsilon$, i.e.,
$
\Prob(\rho)(\epsilon) = \prod_{i=0}^{m-1}
\ptransf(\state_i,\state_{i+1})(\epsilon).
$
For a state $\state_0$ and a set of states $\states' \subseteq
\states$, once again we have a function that given a value $\epsilon$
for the parameter, returns the probability of reaching $\states'$ from
$\state_0$. This can be formally defined as
$
\reachprob{\state_0}{\states'}(\epsilon) = \sum_{\exec \in
  \state_0(\states\setminus\states')^*\states'} \Prob(\rho)(\epsilon).
  $
In other words, $\reachprob{\state_0}{\states'}(\epsilon)$ is the sum
of the probability of all sequences starting in $\state_0$, ending in
$\states'$, such that no state except the last is in $\states'$. 

\subsection{Parametrized DTMC semantics of {\ourlang}}

The parametrized DTMC semantics of a {\ourlang} program $P_\epsilon$
shall be denoted as $\sem{P_\epsilon}. $ We describe
$\sem{P_\epsilon}$ informally here and defer the formal definition to
Appendix~\ref{app:dtmcsemantics}.  As mentioned above, the key insight
in defining the semantics of a {\ourlang} program as a finite-state,
parametrized DTMC, is that the actual values of real and integer
variables need not be tracked.  A state of $\sem{P_\epsilon}$ is going
to be a tuple of the form $(\ell, f_{Bool}, f_{\sdom}, f_{\s{int}},
f_{\s{real}}, C)$ where $\ell$ is the label of the statement of
$P_\epsilon$ to be executed next. 
$\sem{P_\epsilon}$ is an abstraction of the set of all concrete states
that are compatible with it.  The partial functions $f_{Bool}$ and
$f_{\sdom}$ assign values to the $Bool$ and $\sdom$ variables,
respectively; this is just like in the natural semantics.

Let us now look at the partial function $f_{\s{real}}$. Intuitively,
$f_{\s{real}}$ is supposed to be the ``valuation'' for the real
variables. But instead of mapping each variable to a \emph{concrete}
value in $\Reals$, we shall instead map it into a finite set. To
understand this mapping, let us recall that in {\ourlang}, a real
variable is assigned only once in a program. Further, such an
assignment either assigns the value of a linear expression over
program variables, or a value sampled using a Laplace mechanism. In
the former case, $f_{\s{real}}$ maps a variable to the linear
expression it is assigned; and in the latter case, the value of the
parameters of the Laplace mechanism used in sampling. In the latter
case, since the first parameter is always of the form $a\epsilon$, we
need to note only $a$ in the mapping.  Notice that the range of
$f_{\s{real}}$ is now a finite set as $P_\epsilon$ contains only a
finite number of linear expressions, and the parameters of sampled
Laplacian take values from the finite set $\sdom$.  Similarly, the
partial function $f_{\s{int}}$ maps each integer variable to either
the linear expression it is assigned or the parameters of the sampled
discrete Laplace mechanism. The last state component $C$ is the set of
Boolean conditions on real and integer variables that hold along the
path thus far; this shall become clearer when we describe the
transitions. Since the Boolean conditions must be Boolean expressions
in the program or their negation, $C$ is also a finite set. These
observations show that $\sem{P_\epsilon}$ has finitely many
states. Intuitively, a state of $\sem{P_\epsilon}$ is an abstraction
of the set of all concrete states that respect the Boolean conditions
in $C$ and the constraints imposed by assignments of real and integer
expressions to real and integer variables, respectively.

We now sketch how the state is updated in $\sem{P_\epsilon}$. Updates
to $\sdom$ variables shall be as expected --- it shall be a
probabilistic transition if the assignment samples using an
exponential mechanism or a user-defined distribution, and it shall be a
deterministic step updating $f_{\sdom}$ otherwise. Assignments to real
variables are \emph{always deterministic} steps that change the function
$f_{\s{real}}$. Thus, even if the step samples using the Laplace
mechanism, in the semantics, it shall be modeled as a deterministic step
where $f_{\s{real}}$ is updated by storing the parameters of the
distribution. Similarly, all integer assignments are deterministic
steps as well. 

The assignment of a Boolean expression to a Boolean variable is as
expected --- we update the valuation $f_{Bool}$ to reflect the
assignment. The unexpected case is $\bv \gets R_1 \sim R_2$ when a
boolean variable gets assigned the result of the comparison of two
real expressions; the case of comparing two integer expressions is
similar. In this case, if the probability of $C$ holding is $0$, then
our construction will ensure that this state is not reachable with
non-zero probability.  Otherwise, we transition to a state where
$R_1 \sim R_2$ is added to $C$ with probability equal to the
probability that $(R_1 \sim R_2)$ holds conditioned on the fact that
$C$ holds, and with the remaining probability, we shall transition to
the state where $\neg(R_1 \sim R_2)$ is added to $C$. Thus, Boolean
assignments which compare integer and real variables are modeled by
probabilistic transitions. Finally, branches and while loop conditions
are deterministic steps, with the value of the Boolean variable (of
the condition) in $f_{Bool}$ determining the choice of the next
statement.

Let $\Prob_{DTMC}( P_\epsilon(\fb{in}) = \fb{out})$ denote the
probability that $P_{\epsilon}$ outputs value $\fb{out}$ on the input
$\fb{in}$ under the DTMC semantics. This is just the probability of
reaching an exit state with $\fb{out}$ as valuation of output
variables from the initial state with $\fb{in}$ as the valuation of
input variables.  We can show that this probability is the same as the
probability $\Prob_{natural}( P_\epsilon(\fb{in}) = \fb{out})$
obtained by the natural semantics discussed above.
\brown{The informal ideas outlined above are fleshed out to give a precise
mathematical definition and presented in
Appendix~\ref{app:dtmcsemantics}.}

It is worth noting how key syntactic restrictions in {\ourlang}
programs play a role in defining its semantics. The first restriction
is that integer and real variables are not assigned in the scope of a
while loop. This restriction is critical to ensure that the DTMC
$\sem{P_\epsilon}$ is finite-state. Since we track distribution
parameters and linear expressions for such variables, this restriction
ensures that we only remember a bounded number of these. Second,
{\ourlang} disallows a comparison between real and integer expressions
in its syntax. Recall that such comparison steps result in a
probabilistic transition, where we compute the probability of the
comparison holding conditioned on the properties in $C$ holding. It is
unclear if a closed-form expression for such probabilities can be
computed when integer and real random variables are compared. Hence
such comparisons are disallowed.

Probabilistic transitions in our semantics arise due to two
reasons. First are assignments to $\sdom$ variables that sample
according to either the exponential or a user-defined
distribution. The resulting probabilities are easily seen to be
definable in $\threal.$ The second is due to comparisons between real
and integer expressions. We can prove that in this case also, the
resulting probabilities are definable in $\threal$; this proof is
non-trivial and deferred to Appendix~\ref{app:dtmc-def}. All these
observations together give us the following theorem.

\begin{theorem}
\label{thm:semantics}
For any {\ourlang} program $P_\epsilon$, $\sem{P_\epsilon}$ is a
finite, definable, parametrized DTMC that is computable.
\end{theorem}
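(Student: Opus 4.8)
The plan is to prove three properties of $\sem{P_\epsilon}$ in turn---finiteness, definability, and computability---by induction on the structure of the program, following the informal state-update rules already sketched. The state space is $(\ell, f_{Bool}, f_{\sdom}, f_{\s{int}}, f_{\s{real}}, C)$, so I would first argue finiteness by bounding each coordinate. The label $\ell$ ranges over the finite set $\s{Labels}$ of statements in $P_\epsilon$. The functions $f_{Bool}$ and $f_{\sdom}$ are partial maps into $Bool$ and the finite domain $\sdom$, hence finitely many. The crux is $f_{\s{real}}$ and $f_{\s{int}}$: here I would invoke the \textbf{Bounded Assignments} restriction, which guarantees each real/integer variable is assigned \emph{at most once} and \emph{never inside a while loop}. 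Consequently the program contains only finitely many syntactic assignment statements to such variables, each assigning either a fixed linear expression or a Laplace/discrete-Laplace sample whose scale parameter $a$ and mean (drawn from $\sdom$ or a linear expression) come from finite sets. Thus the range of $f_{\s{real}}$, $f_{\s{int}}$ is finite. Finally $C$ is a set of comparison atoms $R_1 \sim R_2$ (or their negations) drawn from the finitely many comparison statements syntactically present, so $C$ ranges over a finite power set. The product of finitely many finite sets is finite.

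For \textbf{definability}, I would show that every transition probability $\ptransf(\state,\state')$ is a function of $\epsilon$ definable in $\threal$, by a case analysis on the statement at label $\ell$. Deterministic steps (assignments to $\sdom$/real/integer variables, Boolean assignments not involving real/integer comparisons, branch and loop-guard evaluations, and \ext) have transition probability the constant $1$ or $0$, which is trivially definable. Two kinds of probabilistic transitions remain. The first is sampling a $\sdom$ variable via the exponential mechanism $\pexp{a\epsilon, F(\tilde\dv), E}$ or a user-defined distribution $\chose(a\epsilon,\tilde E)$: for the exponential mechanism the probability $h_{\epsilon,F,u}(v)=\euler^{\epsilon F(u,v)}/\sum_{v'}\euler^{\epsilon F(u,v')}$ is a ratio of terms over $\set{\eulerv{\epsilon}}$ with rational exponents (here I use that $F$ is computable with rational values, so $F(u,v)$ is a concrete rational and $\euler^{\epsilon F(u,v)}$ is expressible using $\eulerv{\epsilon}$), and for $\chose$ the definability is guaranteed by hypothesis together with the assumed algorithm returning the defining formula. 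The second probabilistic transition is the comparison step $\bv \gets R_1 \sim R_2$ (or the integer analogue), whose probability is the conditional probability of $(R_1 \sim R_2)$ given that the accumulated constraints $C$ hold.

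The \textbf{main obstacle}---and the step I would flag as genuinely nontrivial---is showing that this conditional comparison probability is definable in $\threal$. The difficulty is that $R_1, R_2$ and the constraints in $C$ involve several independent Laplace-distributed samples together with linear offsets over the symbolic means, so the probability is a ratio of integrals of products of Laplace densities over a polyhedral region of $\Reals^t$ cut out by $C$. I would delegate the hard analytic computation to the dedicated appendix result (Appendix~\ref{app:dtmc-def}), stating here only the structural fact needed: because each Laplace density is piecewise of the form $\frac{a\epsilon}{2}\euler^{-a\epsilon|x-\mu|}$, integrating a product of such densities over a region defined by finitely many linear inequalities yields a closed form that is a ratio of polynomials in $\epsilon$ and $\eulerv{\epsilon}$, hence definable; this matches the shape of the expressions $r_1(\epsilon), r_2(\epsilon)$ exhibited for the SVT example in Section~\ref{sec:examples}. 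Crucially, the syntactic ban on comparing a real expression with an integer expression is what keeps the two coordinate types separate, so the region of integration factors and a uniform closed form exists.

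For \textbf{computability}, I would observe that the inductive construction above is effective: the finite state set can be enumerated by traversing the finitely many labels and the finite ranges of each component, and for each reachable $(\state,\state')$ pair the defining formula $\varphi_{\ptransf(\state,\state')}(\epsilon,x)$ can be written down algorithmically---for deterministic transitions it is a Boolean constant, for exponential-mechanism and $\chose$ transitions the formula is produced by the assumed generating algorithm, and for comparison transitions the formula is produced by the effective procedure underlying the Appendix~\ref{app:dtmc-def} computation. Since $\threal$ is decidable (the McCallum--Weispfenning theorem), we can even effectively check $\sum_{\state'}\ptransf(\state,\state')(\epsilon)=1$, confirming that the object constructed is indeed a parametrized DTMC. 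Combining the three parts yields the theorem.
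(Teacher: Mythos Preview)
Your proposal is correct and follows essentially the same approach as the paper: finiteness from bounding each state component (using the Bounded Assignments restriction for $f_{\s{real}}$, $f_{\s{int}}$, and $C$), definability by case analysis on statement types with the non-trivial case being the conditional comparison probabilities, and computability from the effectiveness of each step. The paper's proof (Appendix~\ref{app:dtmc-def}) fleshes out exactly the step you flag as the main obstacle by introducing pseudo-rational functions and proving two lemmas---one for the discrete Laplace case via decomposition into simple linear sets, one for the continuous Laplace case via decomposition into positive repetitive polyhedra and iterated integration---which is precisely the polyhedral-integration idea you sketch for the real case, together with its integer analogue.
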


\begin{example}
\label{ex:SVT-dtmc}
The parametrized DTMC semantics of Algorithm~\ref{fig:SVTL} is
partially shown in Figure~\ref{fig:SVT-dtmc}. We show only the
transitions corresponding to executing lines 9 and 10 of the
algorithm, when $q_1 = u$ and $q_2 =v$ initially; here $u,v \in
\{\bot,\top\}$. The multiple lines in a given state give the different
components of the state. The first two lines give the assignment to
$Bool$ and $\sdom$ variables, the third line gives values to the
integer/real variables, and the last line is the Boolean conditions
that hold along a path. Since 9 and 10 are in the else-branch, the
condition $r_1 < r_T$ holds. Notice that values to real variables are
not explicit values, but rather the parameters used when they were
sampled. Finally, observe that probabilistic branching takes place
when line 10 is executed, where the value of $b$ is taken to be the
result of comparing $r_2$ and $r_T$. The numbers $p$ and $q$
correspond to the probability that the conditions in a branch hold,
given the parameters used to sample the real variables and
\emph{conditioned} on the event that $r_1 < r_T$.
\end{example}

\section{Checking differential privacy for {\ourlang} programs}
\label{sec:decidability}

We shall now establish that the  problem of checking differential privacy for {\ourlang} programs is decidable. The proof 
relies on the characterization  of the semantics of a {\ourlang} program as a finite, definable, parameterized DTMC (See Theorem~\ref{thm:semantics}). An  
important observation about a finite, definable, parametrized DTMC  is that the probability of reaching a given set of states $\states'$ from 
a given state $\state_0$ is both definable and
computable.
\begin{lemma}
\label{lem:dtmc}
For any finite-state, definable, parametrized DTMC $\blue{\cD} =
(\states,\ptransf)$, any
state $\state_0 \in \states $ and set of states $\states'\subseteq \states$, the function
$\reachprob{\state_0}{\states'}$ is definable in $\threal$. Moreover,
there is an algorithm that computes the formula defining
$\reachprob{\state_0}{\states'}$.
\end{lemma}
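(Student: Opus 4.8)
The plan is to exploit the classical linear-algebraic characterization of reachability probabilities in a finite Markov chain, and observe that every ingredient of that characterization is expressible in $\lngreal$. First I would reduce to the case where $\states'$ is absorbing: the quantity $\reachprob{\state_0}{\states'}(\epsilon)$ only depends on paths that stop upon first entering $\states'$, so I may redirect all outgoing transitions from states in $\states'$ to a single self-loop without changing the value. Let $R = \states \setminus \states'$ be the transient part under consideration. The standard fact is that the vector of reachability probabilities $(x_\state)_{\state \in R}$, where $x_\state = \reachprob{\state}{\states'}$, is the \emph{least non-negative solution} of the linear system
\[
x_\state = \sum_{\state'' \in \states'} \ptransf(\state,\state'')(\epsilon) \;+\; \sum_{\state'' \in R} \ptransf(\state,\state'')(\epsilon)\, x_{\state''},
\qquad \state \in R.
\]
Since the paper assumes the programs terminate with probability $1$, for each fixed $\epsilon$ the set $\states'$ (an exit condition) is reached almost surely, and this system has a unique solution in $[0,1]^{R}$, equal to the desired probabilities. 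This is the key structural input.

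Next I would encode this characterization as a formula of $\lngreal$. Because the DTMC is \emph{definable}, for each pair $\state,\state'$ there is a formula $\varphi_{\state,\state'}(\epsilon, y)$ in $\lngreal$ with $\varphi_{\state,\state'}(\epsilon, y)$ holding iff $y = \ptransf(\state,\state')(\epsilon)$; by Theorem~\ref{thm:semantics} these formulas are computable from $P_\epsilon$. Because the DTMC is \emph{finite}, $R$ and $\states'$ are finite index sets, so the linear system above is a \emph{finite} collection of equations, and I can assemble a single formula
\[
\Psi(\epsilon, x_{\state_0}) \;\equiv\; \exists (y_{\state,\state'})_{\state,\state'}\, \exists (x_\state)_{\state \in R}.\
\Big(\bigwedge_{\state,\state'} \varphi_{\state,\state'}(\epsilon, y_{\state,\state'})\Big)
\wedge \big(\text{linear system holds}\big)
\wedge \big(\textstyle\bigwedge_{\state} 0 \le x_\state \le 1\big),
\]
where the middle conjunct is the system written using the fresh transition-probability variables $y_{\state,\state'}$ in place of the terms $\ptransf(\state,\state')(\epsilon)$. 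Each equation of the system, and each bound $0 \le x_\state \le 1$, is a Boolean combination of atomic formulas built from $+,-,\times$ applied to the variables $x_\state$ and $y_{\state,\state'}$ (note the products $y\cdot x$ are polynomial, hence legal terms), so $\Psi$ lies in $\lngreal$. The free variable $x_{\state_0}$ then defines $\reachprob{\state_0}{\states'}$ in the sense of Definition~\ref{def:definable-func}, and since all the $\varphi_{\state,\state'}$ are computable and the index sets are enumerable, the whole formula $\Psi$ is effectively constructible.

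The main obstacle is uniqueness: the bare system $x = b + M x$ may have several solutions in $[0,1]^R$ if termination fails, and a formula merely \emph{asserting} the system plus box constraints would not pin down the \emph{correct} (least) solution, breaking definability. I would resolve this using the paper's standing assumption that $P_\epsilon$ terminates with probability $1$ for every admissible $\epsilon$, which forces $\states'$ to be reached almost surely from every state that can reach it; under this condition the spectral radius of $M$ is strictly below $1$ and the solution in $[0,1]^R$ is \emph{unique}, so $\Psi$ correctly defines the reachability probabilities. If one wishes to avoid relying on the termination hypothesis, the alternative is to encode least-fixed-point selection directly: add a conjunct stating $\forall (x'_\state).\ (\text{system holds for } x') \Rightarrow \bigwedge_\state x_\state \le x'_\state$, which is again a sentence of $\lngreal$ over the extra quantified block and singles out the minimal solution. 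Either way the resulting formula stays within $\lngreal$, establishing both definability and, by effectiveness of the construction, computability.
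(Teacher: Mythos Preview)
Your second route (encoding the least non-negative solution via an additional universal block) is correct and gives a clean proof. Your first route, however, has a genuine gap: the termination hypothesis does \emph{not} yield uniqueness of the system you wrote. The lemma is stated for arbitrary definable parametrized DTMCs, and even in the intended application the target $\states'$ consists of exit states carrying one fixed output valuation. Exit states with any other output valuation lie in $R=\states\setminus\states'$ and carry a self-loop of probability $1$; hence the substochastic matrix $M$ restricted to $R$ has $1$ as an eigenvalue, its spectral radius is not strictly below $1$, and the system $x=b+Mx$ has a whole affine family of solutions in $[0,1]^R$. Termination only guarantees that the set of \emph{all} exit states is reached almost surely, not that $\states'$ is. So you must take the least-fixed-point encoding (or something equivalent) rather than rely on the bare system.

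The paper resolves the same uniqueness obstacle differently. Rather than selecting the least solution inside the logic, it uses the classical fact that the linear system becomes uniquely solvable once one additionally imposes $x_\state=0$ for every state from which $\states'$ is unreachable in the underlying graph. The difficulty is that ``the underlying graph'' depends on which transitions have probability $0$, and this depends on $\epsilon$. The paper therefore \emph{guesses} the zero-edge set: for each $C\subseteq\states\times\states$ it computes, by an ordinary graph search carried out \emph{outside} the formula, the set $\states_0^C$ of states that cannot reach $\states'$ when exactly the edges in $C$ are absent, and writes a conjunct $\varphi_C$ asserting $p_{\state\state'}=0$ for $(\state,\state')\in C$, $p_{\state\state'}>0$ otherwise, $x_\state=1$ on $\states'$, $x_\state=0$ on $\states_0^C$, and the balance equations on the remaining states. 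The defining formula is then $\bigvee_C\varphi_C$ conjoined with the definability formulas for the $p_{\state\state'}$. Your (corrected) approach buys a much shorter formula at the price of an extra $\forall$-block; the paper's approach trades an exponential disjunction over edge sets for a formula with no quantifier alternation beyond what already sits inside the transition-defining formulas $\varphi_{\state\state'}$.
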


The proof of Lemma~\ref{lem:dtmc} exploits the connection between
reachability probabilities in DTMCs and linear
programming~\cite{marta-book,baier-katoen}; 
\brown{details are  in
Appendix~\ref{app:dtmc}}.
The main result of the paper now follows from Theorem~\ref{thm:semantics} and Lemma~\ref{lem:dtmc}.
\label{thm:decidability-finite}

\begin{theorem}
\label{thm:decidability}
The {\problemone} and {\problemtwo} problems are decidable for
{\ourlang} programs $P_\epsilon$, rational numbers $t\in \Rats^{>0}$ and definable functions $\delta(\epsilon)$. Furthermore, if $P_\epsilon$  is not $(t\epsilon,\delta)$ differentially private for some rational number $t$ and admissible value of $\epsilon$
then we can compute a counter-example.
\end{theorem}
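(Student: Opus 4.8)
The plan is to reduce both computational problems to checking the validity of sentences in $\threal$, exploiting Theorem~\ref{thm:semantics} and Lemma~\ref{lem:dtmc}. First I would observe that since $\cU$ and $\cV$ are finite and effectively enumerable, and the adjacency relation $\Phi$ is given explicitly as a finite list of pairs, I can enumerate all relevant ingredients of the privacy condition. For each input $\fb{in} \in \cU$ and each output $\fb{out} \in \cV$, I would build the finite, definable, parametrized DTMC $\sem{P_\epsilon}$ (computable by Theorem~\ref{thm:semantics}), identify the initial state corresponding to $\fb{in}$ and the set of exit states whose output-variable valuation equals $\fb{out}$, and invoke Lemma~\ref{lem:dtmc} to compute a formula $\varphi_{\fb{in},\fb{out}}(\epsilon, x)$ in $\lngreal$ defining the reachability probability $\reachprob{\state_0}{\states'}$, i.e.\ the function $\epsilon \mapsto \Prob(P_\epsilon(\fb{in}) = \fb{out})$. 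By the equivalence of the DTMC and natural semantics stated in Section~\ref{sec:dipwhile}, this is exactly the probability appearing in the definition of differential privacy.

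Next I would assemble the privacy condition as a sentence in $\lngreal$. For a measurable set $O \subseteq \cV$ (here $\cV$ is finite, so $O$ ranges over the finitely many subsets), $\Prob(P_\epsilon(\fb{in}) \in O)$ is the finite sum $\sum_{\fb{out} \in O}$ of the individual output probabilities, each definable by the formulas above; the sum of finitely many $\threal$-definable functions is again $\threal$-definable. The $(t\epsilon, \delta(\epsilon))$-privacy requirement then reads: for all adjacent $(\fb{in}, \fb{in}') \in \Phi$ and all $O \subseteq \cV$, the term $\Prob(P_\epsilon(\fb{in}) \in O) - \eulerv{t\epsilon}\,\Prob(P_\epsilon(\fb{in}') \in O) - \delta(\epsilon)$ is $\leq 0$. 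Since $\Phi$ and the powerset of $\cV$ are finite, this is a finite conjunction of atomic-inequality constraints built from $\threal$-definable functions, and since $\delta$ is assumed definable and $t$ is a rational constant (so $\eulerv{t\epsilon}$ is expressible using $\eulerv{\epsilon}$ raised to a rational power, clearable via polynomial manipulation as the SVT example illustrates), the entire body lies in $\lngreal$. For {\problemtwo} I would prefix a universal quantifier $\forall \epsilon \in I$, encoding the rational-endpoint interval $I$ by the polynomial inequalities on $\epsilon$; for {\problemone} I would instantiate $\epsilon$ at the fixed rational $\epsilon_0$. In both cases the resulting sentence has $\epsilon$ as its outermost quantified variable and otherwise only quantifier-free structure, so it is a legal $\lngreal$ sentence, and its validity is decided by the McCallum--Weispfenning procedure.

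For the counter-example, I would not build the single large conjunction but instead check each atomic constraint separately. Concretely, for each pair $(\fb{in}, \fb{in}') \in \Phi$ and each subset $O \subseteq \cV$, I test whether the sentence $\forall \epsilon \in I.\ \Prob(P_\epsilon(\fb{in}) \in O) \leq \eulerv{t\epsilon}\,\Prob(P_\epsilon(\fb{in}') \in O) + \delta(\epsilon)$ holds. If privacy fails, some such sentence is false, meaning its negation $\exists \epsilon \in I.\ (\cdots > \cdots)$ holds; enumerating the finitely many $(\fb{in}, \fb{in}', O)$ triples locates a violating one, yielding the first three components of the counter-example quadruple. To extract a concrete witness value $\epsilon_0$ I would further query the decision procedure: by testing the violation sentence with $\epsilon$ constrained to successively refined rational subintervals of $I$, one can localize a rational $\epsilon_0$ at which the strict inequality holds, or extract one directly since the McCallum--Weispfenning procedure is a full quantifier-elimination method and can return a cell-witness. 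When $\delta \equiv 0$ the definition restricts $O$ to singletons $\{\fb{out}\}$, so the search over subsets collapses to a search over individual outputs.

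The main obstacle I anticipate is purely a matter of \emph{closure} and \emph{witness extraction} rather than existence. Establishing that the privacy condition genuinely lands inside the McCallum--Weispfenning fragment requires care: one must verify that multiplying a reachability-probability term by $\eulerv{t\epsilon}$ for rational $t$, and subtracting the definable $\delta(\epsilon)$, keeps every term an integral polynomial in $\epsilon$, $\eulerv{\epsilon}$, and the remaining real variables with the exponential applied only to $\epsilon$ --- this is why $t$ must be rational and $\delta$ definable, and why the single outermost $\epsilon$-quantifier discipline of $\lngreal$ is respected (the existential witness variables introduced by the definability formulas of Lemma~\ref{lem:dtmc} must be cleanly interleaved under the $\epsilon$ prefix). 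The second delicate point is effectively computing a rational counter-example value $\epsilon_0$: validity over $I$ is immediate from decidability, but turning a \emph{false} universal sentence into an \emph{explicit} rational witness requires leveraging that the decision procedure performs quantifier elimination and therefore exposes a defining cell for $\epsilon$, from which a rational point can be selected. Everything else --- finiteness of the enumerations, the sum-of-definable-functions closure, and the reduction of {\problemone} to an instantiation of {\problemtwo} --- is routine given the machinery already in place.
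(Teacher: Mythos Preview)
Your proposal is correct and follows essentially the same route as the paper: compute $\lngreal$-defining formulas for each output probability via Theorem~\ref{thm:semantics} and Lemma~\ref{lem:dtmc}, assemble the finite conjunction of privacy inequalities over all adjacent pairs and output subsets, universally quantify over $\epsilon$, and appeal to decidability of $\threal$. The concrete realization of the step you flag as the main obstacle---getting $\eulerv{t\epsilon}$ for rational $t=p/q$ into $\lngreal$---is done in the paper by introducing a fresh universally quantified variable $z$ constrained by $z^q=\eulerv{p\epsilon}\wedge z>0$ and using $z$ in place of $\eulerv{t\epsilon}$; your ``polynomial manipulation'' remark points at exactly this. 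For counter-examples the paper is actually terser than you: it simply asserts that the McCallum--Weispfenning procedure returns a witnessing $\epsilon_0$ when the universal sentence fails, then enumerates $(\fb{in},\fb{in}',O)$; your bisection/QE-cell argument is a more explicit justification of the same step. One small correction: for {\problemone} do not literally substitute the rational $\epsilon_0$ (the resulting constant $\eulerv{\epsilon_0}$ is not in the language); instead restrict $\epsilon$ to the singleton interval $[\epsilon_0,\epsilon_0]$, as the paper's reduction of {\problemone} to {\problemtwo} does.
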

\begin{proof}
Let $\fb{in}$ and $\fb{out}$ be arbitrary valuations to input and output
variables, respectively. Observe that the function $\epsilon \mapsto
\Prob(P_\epsilon(\fb{in}) = \fb{out})$ is nothing but
$\reachprob{\state_0}{\states'}$ in $\sem{P_\epsilon}$, where
$\state_0$ is the initial state corresponding to valuation $\fb{in}$,
and $\states'$ is the set of all terminating states that have
valuation $\fb{out}$ for output variables. Since $\sem{P_\epsilon}$
(Theorem~\ref{thm:semantics}) and $\reachprob{\state_0}{\states'}$
(Lemma~\ref{lem:dtmc}) are computable, we can construct a formula
$\varphi_{\fb{in},\fb{out}}(\epsilon, x_{\fb{in},\fb{out}})$ of $\lngreal$
that defines the function $\epsilon \mapsto \Prob(P_\epsilon(\fb{in}) =
\fb{out})$.

Let $\varphi_\delta(\epsilon,x_\delta)$ be the formula defining the
function $\delta$. Let $t=\frac p q$ where $p,q$ are natural
numbers. Consider the sentence
\[
\begin{array}{rl}
\psi = & \hspace*{-0.1in}\forall \epsilon. \forall z.
       [\forall x_{\fb{in},\fb{out}}]_{\fb{in} \in \cU,\fb{out} \in \cV}.
       \forall x_\delta.\\
& ((\epsilon>0) \wedge (\eulerv{p\epsilon}=z^q) \wedge (z>0) \wedge \varphi_\delta(\epsilon,x_\delta) \\
 & \hspace*{0.1in}\; \;\bigwedge_{\fb{in}\in \cU, 
                     \fb{out} \in \cV}\:
           \varphi_{\fb{in},\fb{out}}(\epsilon,x_{\fb{in},\fb{out}}) ) \\
 & \hspace*{0.2in}\to (\bigwedge_{(\fb{in}_1,\fb{in}_2) \in \Phi,
   O\subseteq \cV}\:  \\
  &  \hspace*{0.5in} \sum_{\fb{out}\in O} x_{\fb{in}_1,\fb{out}} < 
      z \sum_{\fb{out}\in O} x_{\fb{in}_2,\fb{out}}\:  + x_\delta))
\end{array}
\]
It is easy to see $P_\epsilon$ is $(t\epsilon,\delta(\epsilon))$ differentially private for all
$\epsilon$ iff $\psi$ is true over the reals. In the syntax of
$\lngreal$, we cannot take $q$th roots of $\euler$; therefore, we
introduce the variable $z$, which enables us to write the constraints
using only $\eulerv{a\epsilon}$, where $a \in \Nats$. Notice that
$\psi$ belongs to $\lngreal$ if we convert it to prenex
form. Decidability, therefore, follows from the decidability of
$\threal$.

If $P_\epsilon$ is not differentially private, then the sentence
$\psi$ does not hold. The decision procedure for $\threal$ will, in
this case, return an $\epsilon_0$ that witnesses the privacy violation of
$P_\epsilon$.  Using $\epsilon_0$, the counter-example $(\fb{in},\fb{in'},O,\epsilon_0)$ can be easily constructed by enumerating $\fb{in}$, $\fb{in'}$ and $O$. 
\end{proof}

An easy consequence of Theorem~\ref{thm:decidability} is that differential privacy is decidable for the subclass of program in $\s{Simple}$  that do not have integer and real-valued  variables. Let {\finlang} denote this set of programs (\brown{See Appendix~\ref{app:finlang} for the formal syntax of {\finlang})}. Observe that due to the presence of $\Whiles$, {\finlang}  programs may still have unbounded length executions (including infinite executions). 

\begin{corollary}
\label{cor:finite}
The {\problemone} and {\problemtwo} problems are decidable for
{\finlang} programs $P_\epsilon$, rational numbers $t\in \Rats^{>0}$ and definable functions $\delta(\epsilon)$. 
\end{corollary}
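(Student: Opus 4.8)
The plan is to derive Corollary~\ref{cor:finite} as a direct specialization of Theorem~\ref{thm:decidability}, observing that {\finlang} is by definition precisely the subclass of $\s{Simple}$ programs having no integer-valued or real-valued program variables. The key point is that this syntactic restriction on variable types is \emph{stronger} than the \textbf{Bounded Assignments} restriction that defines {\ourlang}: if a program has no real or integer variables at all, then trivially it contains no assignments to real or integer variables inside (or outside) the scope of a while loop. Hence every {\finlang} program is a {\ourlang} program, and the entire decision procedure of Theorem~\ref{thm:decidability} applies verbatim.

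Concretely, I would first argue the syntactic containment $\finlang \subseteq \ourlang$, so that Theorem~\ref{thm:semantics} yields a finite, definable, computable parametrized DTMC $\sem{P_\epsilon}$ for any {\finlang} program $P_\epsilon$. I would then simply invoke the reachability-probability characterization: by Lemma~\ref{lem:dtmc}, the functions $\epsilon \mapsto \Prob(P_\epsilon(\fb{in}) = \fb{out})$ are definable in $\threal$ and computable, and the sentence $\psi$ constructed in the proof of Theorem~\ref{thm:decidability} encodes $(t\epsilon,\delta(\epsilon))$-differential privacy for all admissible $\epsilon$. Decidability of both {\problemone} and {\problemtwo}, together with counter-example generation, then follows from the decidability of $\threal$ exactly as before.

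I should be careful about one subtlety, which is where the \emph{real} content (if any) of the corollary lies. Even though {\finlang} programs have no infinite-domain variables, the remark preceding the corollary points out that they may still have \emph{unbounded-length} executions because of while loops; so one cannot naively argue decidability by bounding the length of computation paths. The virtue of routing through the DTMC characterization is that it handles unbounded executions uniformly: the DTMC is finite-state regardless of execution length, and Lemma~\ref{lem:dtmc} computes reachability probabilities (as closed-form $\threal$-definable functions of $\epsilon$) for finite-state parametrized DTMCs with possibly infinitely many paths, using the linear-programming connection. Thus the absence of real and integer variables is not what makes things decidable --- decidability already holds for all of {\ourlang} --- and the corollary is genuinely immediate once the inclusion is noted.

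Since the result is a one-line specialization, I do not anticipate a substantial obstacle; the only thing to state carefully is the containment $\finlang \subseteq \ourlang$ and the observation that nothing in the proof of Theorem~\ref{thm:decidability} used the presence of real or integer variables. The proof can therefore be written as: ``{\finlang} is the subclass of {\ourlang} programs that additionally have no real or integer variables; the claim is immediate from Theorem~\ref{thm:decidability}.'' If one wanted to add rigor, I would note explicitly that the constructed sentence $\psi$ and its membership in $\lngreal$ are unaffected by the restriction, and that the counter-example construction in Theorem~\ref{thm:decidability} carries over unchanged.
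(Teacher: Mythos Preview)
Your proposal is correct and matches the paper's own justification: the paper simply states that the corollary is ``an easy consequence of Theorem~\ref{thm:decidability}'' because {\finlang} is the subclass of $\s{Simple}$ programs with no integer or real variables, and you correctly spell out that this makes the Bounded Assignments restriction vacuous, so $\finlang \subseteq \ourlang$. Your additional remark about unbounded executions via while loops is also exactly the point the paper highlights just before stating the corollary.
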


We observe that our methods can be employed to analyze larger classes of programs (than just those in {\ourlang}). For example, a sufficient condition to ensure the decidability is to consider programs with the property that, for each input, the probability distribution on the outputs is
definable in $\threal$ \brown{(See Appendix~\ref{app:generalsemantic})}. We conclude the section by showing how our procedure is useful when reasoning about integer and real-valued outputs.

\begin{remark}
\blue{We sketch here how the proofs of Theorem~\ref{thm:decidability} changes when the set of admissible $\epsilon$ is taken to be an interval $I$ with rational end-points. Let $P_\epsilon, t$ and $\delta(\epsilon)$ be as in the proof of Theorem~\ref{thm:decidability}.
When $\epsilon$ is restricted to an interval $I$, we will require the user-definable distributions to be definable in $\threal$ only on the interval $I$. As in the proof of Theorem~\ref{thm:decidability}, we can construct a formula $\varphi_{\fb{in},\fb{out}}(\epsilon, x_{\fb{in},\fb{out}})$ of $\lngreal$
that defines the function $\epsilon \mapsto \Prob(P_\epsilon(\fb{in}) =\fb{out}).$  For simplicity, consider the case when $I$ be the interval $[r,s]$. Consider the sentence $\psi_{I}$ that is obtained from $\psi$ in the proof of Theorem~\ref{thm:decidability} by replacing the subformula $(\epsilon>0)$ by $(a\leq \epsilon) \wedge (\epsilon \leq b).$ Then  $P_\epsilon$ is $(t\epsilon,\delta(\epsilon))$ will be differentially private for all
$\epsilon \in I$ iff $\psi_{I}$ is true over the reals.  }
\end{remark}

\subsection{Finite discretization of infinite output spaces}
\label{sec:findis}

Our decision procedure assumes that the output space is finite. In
several examples, the program outputs are reals or unbounded integers
(and combinations thereof). Nevertheless, we argue that our decision
procedure is useful for the verification of differential privacy in
this case also. In particular, our method provides an
under-approximation technique for checking the differential privacy of
programs with infinite outputs. Our approach in such cases is to
discretize the output space into finitely many intervals.

We illustrate this for the special case when a program $P$ outputs the
value of one real random variable, say $\rv$. Now, suppose that we
modify $P$ to output a finite discretized version of $\rv$ as
follows. Let $\seq=a_0< a_1< \ldots a_n$ be a sequence of rationals
and let 
$\Disc_{\seq}(x)$ be equal to $a_0$ if $x \leq a_0$, equal to $a_i$
($0 < i < n$) if $a_{i-1} < x \leq a_{i}$, and equal to \blue{$a_n$ if $x >
a_{n-1}$}.
%

Consider the program $P_{\Disc,\seq}$ that instead of outputting
$\rv$, outputs $\Disc_\seq(\rv).$ It is easy to see that if $P$ is
differentially private then so must be $P_{\Disc,\seq}.$ Therefore, if
$P_{\Disc,\seq}$ is not differentially private then we can conclude
that $P$ is not differentially private. 
Thus, if our procedure finds
a counter-example for $P_{\Disc,\seq}$, then it also has proved that the program $P$ is not differentially private. Our method is, therefore, 
an under-approximation technique for checking the differential privacy
of $P$. In fact, it is a \emph{complete} under-approximation method in
the sense that $P$ is differentially private iff for each possible
$\seq$, $P_{\Disc,\seq}$ is differentially private.

\section{Experimental evaluation} 
\label{sec:experiments}

We implemented a simplified version of the algorithm, presented
earlier, for proving/disproving differential privacy of {\ourlang}
programs. Our tool {\tool}~\cite{github} handles loop-free programs,
i.e., acyclic programs. Programs with bounded loops (with constant
bounds) can be handled by unrolling loops. The tool takes in an input
program $P_\epsilon$ parametrized by $\epsilon$ and an adjacency
relation, and either proves $P_\epsilon$ to be differentially private
for all $\epsilon$ or returns a counter-example. The tool can also be
used to check differential privacy for a given, fixed $\epsilon$, or
to check for $k\epsilon$-differential privacy for some constant
$k$. {\tool} is implemented in C++ and uses Wolfram
Mathematica\textregistered. It works in two phases --- in the first
phase, a Mathematica\textregistered script is produced with commands
for all the output probability computations and the subsequent
inequality checks and in the second phase, the generated script is run
on Mathematica. \brown{Details about the tool and its design can be found in
Appendix~\ref{app:experiments}.}

We used various examples to measure the effectiveness of our
tool. These include SVT~\cite{lyu2016understanding,DR14}, Noisy
Maximum~\cite{DingWWZK18}, Noisy Histogram~\cite{DingWWZK18} and
Randomized Response~\cite{DNRRV09} and their variants. Detailed descriptions of these
algorithms and their variants can be found in Appendix~\ref{sec:ex}.

We ran all the experiments on an octa-core Intel\textregistered Core
i7-8550U @ 1.8gHz CPU with 8GB memory. The running times reported are
the average of 3 runs of the tool. In the tables, T1 refers to the
time needed by the C++ phase to generate the Mathematica scripts, and
T2 refers to the time used by Mathematica to check the scripts. Due to
space constraints, we report only a small fraction of our experiments\brown{;
full details of all our experiments can be found in
Appendix~\ref{app:experiments}}.

\begin{table}
 \scalebox{0.8}{ \begin{tabular}{|c|N|N|}
    \hline
    Algorithm & Runtime (\parta/\partb) & $\epsilon$-Diff. Private\\
    \hline
    SVT & 0s/825s & \cmark\\
    SVT2 & 0s/768s & \cmark\\
   SVT5 & 0s/2s & \xmark\\
    NMax4 & 1s/58s & \xmark\\
     Rand2 & 0s/0s & \xmark\\
    \hline
  \end{tabular}}
  \caption{{\footnotesize Runtime for 3 queries for each algorithm searching over adjacency pairs and all $\epsilon$>0, with parameters being [c=1,~$\Delta$=1,~$\sdom$=\{-1,0,1\},~$\seq=(-1<0<1)$]. For SVT, we also have $T$=0.}}
  \label{tab:q3runtime-small}
\end{table}
\begin{table}
\scalebox{0.8}{
  \begin{tabular}{|c|c|c|c|c|c|N|}
    \hline
    Algo & |Q| & Output & Input 1 & Input 2 & $\epsilon$ 
    & Runtime (\parta/\partb) \\
    \hline
   SVT5 & 2 & [$\bot$ $\top$] & [-1 0] & [-1 -1] & 27 & 0s/2s\\
    NMax3 & 3 & -1, $\seq=(-1<0<1)$ & [-1 -1 -1] & [0 0 0] & 27 & 0s/310s\\
    NMax4 & 1 & 0, $\seq =(-1<0<1)$ & [-1] & [0] & 27 & 0s/2s\\
     {Rand2} & 1 & [$\bot$] & [$\bot$] & [$\top$] & 9/34 & 0s/0s\\
    \hline
  \end{tabular}
  }
  \caption{{\footnotesize Smallest Counter-example found for each non-differentially private algorithm, searching over all adj. pairs and $\epsilon>0$, with parameters being [c=1,~$\Delta$=1,~$\sdom$=\{-1,0,1\}]}}
  \label{tab:counter-small}
\end{table}

Salient observations about our experiments are follows.
\begin{enumerate}
\item {\tool} successfully proves algorithms to be differentially private 
  and finds counter-examples to demonstrate a violation of privacy in
  reasonable time. Table~\ref{tab:q3runtime-small} shows the running
  time of {\tool} on some examples for 3 queries. We chose to use 3
  queries because for algorithms that are not private, counter-examples
  can be found with 3 queries.
\item The time to generate Mathematica scripts is significantly smaller 
  than the time taken by Mathematica to check the scripts (i.e., T1
  $\ll$ T2). Further, most of the time spent by Mathematica is for
  computing output probabilities; the time to perform comparison
  checks for adjacent inputs was relatively small. Thus, programs that
  do not use real variables (Rand2 in Table~\ref{tab:q3runtime-small},
  for example) can be analyzed more quickly.
\item For algorithms that are not differentially private, {\tool} can 
  automatically identify the pair of inputs, output, and $\epsilon$
  for which privacy is violated. Table~\ref{tab:counter-small}, shows
  the results for the smallest counter-example found by {\tool} for
  some examples. Further, counter-examples found by {\tool} are much
  smaller, in terms of queries, than those found in~\cite{DingWWZK18};
  the number of queries needed in the counter-examples
  in~\cite{DingWWZK18} for NMax3, NMax4,
  and SVT5 were 5, 5, and 10, respectively, as opposed
  to 3, 1, and 2 found by {\tool}.
\item {\tool} is the first automated tool that can check
  $(\epsilon,\delta)$-differential privacy. To evaluate this feature,
  we tested {\tool} on a version of SVT, Sparse~\cite{DR14}, which is
  manually proven to be
  $(\frac{\epsilon}{2},\delta_\s{svt})$-differentially private for any
  number of queries in~\cite{DR14} by using advanced composition
  theorems. Here $\delta_\s{svt}$ is a second parameter in the
  algorithm. In our experiments, we tested $(\frac \epsilon
  2, \delta_\s{svt})$-differential privacy of Sparse with fixed values
  of $\delta_\s{svt}$ for $c=1,2$ and $3$ queries, validating the
  result in~\cite{DR14}. As we were dealing with only $3$ queries, we
  also managed to obtain better bounds on the error parameter.
\end{enumerate}

\section{Related work}
\label{sec:related-work}

The main thread of related work has focused on formal systems for proving that an algorithm is differentially private. Such systems are helpful because they rule out the possibility of mistakes in privacy analyses. Starting from Reed and Pierce~\cite{RP10}, several authors~\cite{GHHNP13, AGHK18} have proposed linear (dependent) type systems for proving differential privacy. However, it is not possible to verify some of the most advanced examples, such as a sparse vector or vertex cover, using these type systems. Moreover, type-checking and type-inference for linear (dependent) types are challenging. For example, the type checking problem for DFuzz, a language for differential privacy, is undecidable~\cite{AmorimAGH15}. Barthe et al~\cite{BKOZ13,BGGHS16,BFGGHS16} develop several program logics based on probabilistic couplings for reasoning about differential privacy. These logics have been used successfully to analyze many classic examples from the literature, including the sparse vector technique. However, these logics are limited: they cannot disprove privacy; extensions may be required for specific examples; building proofs is challenging. The last issue has been addressed by a series of works that provide automated methods for proving differential privacy automatically. Zhang and Kifer~\cite{ZK17} introduce randomness alignments as an alternative to couplings and build a dependent type system that tracks randomness alignments. Automation is then achieved by type inference. Albarghouthi and Hsu~\cite{AH18} propose coupling strategies, which rely on a fine-grained notion of variable approximate coupling, which draws inspiration both from approximate couplings and randomness alignment. They synthesize coupling strategies by considering an extension of Horn clauses with probabilistic coupling constraints and developing algorithms to solve such constraints. Recently Wang et al~\cite{WDWKZ19} develop an improved method based on the idea of shadow executions. Their approach is able to verify Sparse Vector and many other challenging examples efficiently. However, these methods are limited to vanilla $\epsilon$-differential privacy and do not accommodate bounds that are obtained by advanced composition (since $\delta\neq 0$).

In an independent line of work, Chatzikokolakis, Gebler and
Palamidessi~\cite{chatzGP14} consider the problem of differential privacy for Markov
chains. 
Later, Liu, Wang, and Zhang~\cite{LiuWZ18} develop a probabilistic model checking approach for verifying differential privacy properties. Their approach is based on modeling differential private programs as Markov chains. Their encoding is more direct than ours (i.e.\, it assumes that a finite-state Markov chain is given), and they do not provide a decision procedure with real and integer variables. Furthermore, the DTMCs are not parameterized by $\epsilon.$ 
Chistikov and Murawski and Purser~\cite{ChistikovMP18,ChistikovMP19} propose an elegant method based on skewed Kantorovich distance for checking approximate differential privacy of Markov chains. 

The dual problem is to find violations of differential privacy automatically. This is useful to help privacy practitioners discover potential problems early in the development cycle.
Two recent and concurrent works by Ding et al~\cite{DingWWZK18} and
Bischel et al~\cite{BichselGDTV18} develop automated methods for
finding privacy violations. Ding et al. propose an approach that
combines purely statistical methods based on hypothesis testing and
symbolic execution. Bischel et al. develop an approach based on a
combination of optimization methods and language-specific techniques
for computing differentiable approximations of privacy
estimations. Both methods are fully automated.  However, both methods
can only be used for concrete numerical values of the privacy budget
$\epsilon$.

Gaboardi et. al~\cite{GaboardiNP19} study the complexity of deciding differential privacy for randomized Boolean circuits. Their results are proved by reduction to majority problems and are incomparable with ours: the only probabilistic choices in~\cite{GaboardiNP19}  are fair coin tosses and $\euler^{\epsilon}$ is taken to be a fixed rational number.
%

\section{Conclusions}

We showed that the problem checking differential
privacy is in general undecidable, identified an expressive
sub-class of programs ({\ourlang}) for which the problem is decidable,
and presented the results of analyzing many known differential privacy
algorithms using our tool {\tool} which implements a decision
procedure for {\ourlang} programs. Advantages of {\tool} include the
ability to automatically, both prove algorithms to be private for all
$\epsilon > 0$, and find counter-examples to demonstrate privacy
violations. In addition {\tool} can check bounds that are based on
concentration inequalities, in particular bounds that use advanced
composition theorems. Such bounds are out of reach of most other tools
that prove privacy or search for counter-examples.

In the future, it would be interesting to extend this work to handle
programs with input/output variables that take values in infinite
domains, and parametrized privacy algorithms that work for an
unbounded number of input and output variables. Another important
problem is developing decision procedures that can prove tight
accuracy bounds, and detect violations of accuracy bounds.
We also plan to investigate extending the decision procedure to 
cover algorithms that are currently out of the scope of our decision procedure such as 
the multiplicative weights and iterative database construction~\cite{HardtR10,GuptaRU12}, and those involving Gaussian distributions.

\section{Acknowledgements}
\blue{We thank the anonymous reviewers for their useful comments. Their inputs have improved the paper, especially the presentation of the semantics. 
Rohit Chadha was partially supported by NSF CNS 1553548 and NSF CCF 1900924. A. Prasad Sistla was partially supported by NSF CCF 1901069 and NSF  CCF 1564296.
Mahesh Viswanathan was partially supported by NSF CCF 1901069.}

\bibliography{header,main}
\clearpage
\appendix


\section{Semantics of  {$\s{Simple}$} }
\label{app:semantics}
In this section, we give the semantics  of  our  {$\s{Simple}$} language.
This semantics will be given as a set of computations and a
probability space on the set of computations.
Recall that we have assumed that in each computation, a reference to a
variable is preceded (sometime earlier) by an assignment to the variable. 

For the rest of this section, let us fix a {$\s{Simple}$} program
$P_\epsilon$ and an $\epsilon>0$. We let
 $\s{L_{P_\epsilon}}$  denote the set of labels appearing in
$P_\epsilon$. The set of Boolean variables, $\sdom$ variables (including input/output variables), integer variables and reals variables occurring in 
$P_\epsilon$ shall be denoted by $\cB_{P_\epsilon}$, $\cX_{P_\epsilon}$,  $\cZ_{P_\epsilon}$ and $\cR_{P_\epsilon}$ respectively. 


In order to define the semantics of $P_\epsilon$, we will use an
auxiliary function $\nxt$ that given a label, identifies the label of
the statement to be executed next. Observe that for most program
statements, the next statement to be executed is unique. However, for
{\Ifs} and {\Whiles} statements, the next statement depends on the
value of a Boolean expression. We will define $\nxt(\ell)$ to be a set
of pairs of the form $(\ell',c)$, where $c$ is a Boolean condition on the
variables of $P_\epsilon$, with the understanding that $\ell'$
is the  label of the next statement to be executed if $c$ currently holds. Thus, for a label $\ell$, $\nxt(\ell)$
will either be $\set{(\ell',\true)}$ or $\set{(\ell_1,c),(\ell_2,\neg
  c)}$. We do not give a precise definition of $\nxt(\cdot)$, but we
will use it when defining the semantics.

\paragraph{States.} 
States of $P_\epsilon$ will be of the form 
\[
(\ell,h_{Bool},h_{\sdom}, h_{\integers}, h_{\Reals}).
\] 
Informally, $\ell \in \s{L_{P_\epsilon}}$ is the label of the statement to be
executed, $h_{Bool}$, $h_{\sdom}$, $h_{\integers}$, and $h_{\Reals}$
are  functions assigning ``values'' to program variables (of
appropriate type).  More specifically, we have $h_{Bool} : \cB_{P_\epsilon} \to
\set{\true,\false}$, $h_{\sdom} : \cX_{P_\epsilon} \to \sdom$, $h_{\integers} : \cZ_{P_\epsilon}
\to \integers$ and $h_{\Reals} : \cR_{P_\epsilon} \to \Reals$ . We let $S$
denote the set of all states. We define a discrete state $ds$ to be a
tuple $(\ell,h_{Bool},h_{\sdom}, h_{\integers})$ where $\ell,h_{Bool},
h_{\sdom}, h_{\integers}$
are as defined above. Note that a discrete state does not specify
values to variables in $\cR_{P_\epsilon}.$
For a state $s$ and an expression $e$ which
is a Boolean, real or an integer expression, we let $Val(s,e)$ denote the value
obtained by evaluating $e$ in the state $s.$ Note that  if $e$ is a
boolean expression, $Val(s,e)$ is either True or False. We also define
the value of a comparison between two expressions as follows. For a
comparison expression $e_1 \sim e_2$, $Val(s, e_1 \sim e_2)= True$
if $Val(s,e_1) \sim Val(s,e_2)$ holds, otherwise $Val(s, e_1 \sim
e_2)= False.$ 
The value of a $\sdom$ expression $e,$ its value in state $s=(\ell,h_{Bool},h_{\sdom}, h_{\integers}, h_{\Reals})$ will be denoted by 
$h_{\sdom}(e).$ For a sequence of $\sdom$ expressions $\sdom$ $\tilde e= e_1,\ldots, e_m$,
$h_{\sdom}(\tilde e)$ will denote the sequence $h_{\sdom}(e_1),\ldots,h_{\sdom}(e_m).$

\paragraph{Measurable sets of states.} Let $\cR_{P_\epsilon}=\set{r_1,...,r_t}.$ With
each vector $u=(u_1,...,u_t)\in \Reals^t$, we associate a unique
function $h_{\Reals}^{u}: \cR_{P_\epsilon} \to \Reals$ such that
$h_{\Reals}^{u}(r_i)=u_i$ for $1\leq i\leq t.$
Given a
discrete state $ds=(\ell,h_{Bool},h_{\sdom}, h_{\integers})$ and a
Borel set $D\subseteq \Reals^t$, we let $\sem{(ds,D)} = \set{(\ell,
 h_{Bool},h_{\sdom}, h_{\integers},h_{\Reals}^u)\st u\in D}.$ Now, we
define $\cE$, the set of measurable sets of states, to be the
$\sigma$-algebra generated by the sets of states of the form
$\sem{(ds,D)}$ where $ds$ is a discrete state and $D\subseteq
\Reals^t$ is a Borel set.

\paragraph{Markov Kernel $K_\epsilon$.} We give the single step
semantics of the program $P_\epsilon$ as a Markov kernel from the
measure space $(S, \cE)$ to itself. Formally, $K_\epsilon: S \times \cE \to
\Reals$, where $K_\epsilon(s,C)$ gives the probability that the next
state of $P_\epsilon$ is in   $C$ given that its current state is $s.$
We fix the state $s=(\ell,h_{Bool},h_{\sdom}, h_{\integers},h_{\Reals})$ and the set $C\in \cE$ of states, and define the
value of $K_\epsilon(s,C)$ based on the following cases.

\paragraph{$\sdom$ assignments.}
Let $\nxt(\ell) = \set{(\ell',\true)}$ and let $\dv$ be the variable
being assigned in $\ell$. There are two cases to consider. First,
consider the case where $\dv$ is assigned a value of a $\sdom$
expression $e$. In this case, $K_\epsilon(s,C) = 1$ if 
$(\ell',h_{Bool},$  $h_{\sdom}[\dv \mapsto h_{\sdom}(e)],
h_{\integers},h_{\Reals})\in C;$ otherwise $K_\epsilon(s,C)=0.$
The second case is when $\dv$ is assigned a random
value according to $\pexp{a\epsilon, F(\tilde{\dv}),e}$ or
$\chose(a\epsilon, \tilde{e})$. For $d \in \sdom$, let $\s{prob}(d)$
be the probability of $d$  based on the
distribution; note, that these probabilities will depend on the value
of  $h_{\sdom}(e)$ and $h_{\sdom}(\tilde{e})$. Then, $K_\epsilon(s,C)
=\sum_{d\in D}\s{prob}(d)$ where $D = \set{ d \st (\ell',h_{Bool},
 h_{\sdom}[\dv \mapsto d],
h_{\integers},h_{\Reals})\in C}.$ Note that the right hand sum is zero if $D=\emptyset.$

\paragraph{Integer assignments.}
Let $\nxt(\ell) = \set{(\ell',\true)}$ and let $\iv$ be the variable
being assigned in $\ell$. Again there are two cases to
consider. First, consider the case where $\iv$ is assigned a value of
an integer expression $e$.
In this case, $K_\epsilon(s,C) = 1$ if 
$(\ell',h_{Bool},h_{\sdom},
h_{\integers} [\iv \mapsto Val(s,e)],h_{\Reals})\in C;$ otherwise $K_\epsilon(s,C)=0.$ 
Next, consider the case when $\iv$ is assigned a random value
according to $\DLap{a\epsilon,e}$. For $j\in \integers,$ let
$\s{prob}(j)$ be the probability assigned to the integer $j$ by the
distribution given by  $\DLap{a\epsilon,h_{\sdom}(e)}.$ 
Then, $K_\epsilon(s,C)
=\sum_{j\in D}\s{prob}(j)$ where $D = \set{ j \st (\ell',h_{Bool},
 h_{\sdom},
h_{\integers} [\iv \mapsto j],h_{\Reals})\in C}.$ Note that the right hand sum is zero if $D=\emptyset.$

\paragraph{Real assignments.}
Let $\nxt(\ell) = \set{(\ell',\true)}$ and let $\rv$ be the variable
being assigned in $\ell$. 
Again there are two cases to
consider. First, consider the case where $\rv$ is assigned a value of
a real expression $e$.
In this case, $K_\epsilon(s,C) = 1$ if 
$(\ell',h_{Bool},h_{\sdom},
h_{\integers},h_{\Reals} [\rv \mapsto Val(s,e)])\in C;$ otherwise $K_\epsilon(s,C)$ $=0.$ 
In the second case,  $\rv$ is assigned a random value according to
$\Lap{a\epsilon,e}$. In this case, $K_\epsilon(s,C)= Prob(D)$
where $D= \set{r\in \Reals \st (\ell',h_{Bool},h_{\sdom},
h_{\integers},h_{\Reals} [\rv \mapsto r])\in C}$ and $Prob(D)$ is the
probability given to set $D$ by the distribution
$\Lap{a\epsilon,h_\sdom(e)}$. Observe that $D\subseteq \Reals$ is a Borel set.

\paragraph{Boolean assignments.}
Again let $\nxt(\ell) = \set{(\ell',\true)}$ and let $\bv$ be the
variable being assigned in $\ell$ and $e$ the expression being assigned.
Now, $K_\epsilon(s,C) = 1$ if 
$(\ell',h_{Bool} [\bv \mapsto Val(s,e)],h_{\sdom},
h_{\integers},h_{\Reals})\in C;$ otherwise $K_\epsilon(s,C)=0.$ 

\paragraph{$\Ifs$ statement.}
In this case, $\nxt(\ell) = \set{(\ell_1,c),(\ell_2, \neg c)}$ for
some Boolean condition $c$. If
either 
$Val(s,c)=\true$ and  $(\ell_1,h_{Bool},
h_{\sdom},h_{\integers},h_{\Reals})\in C$ or $Val(s,c)=\false$ and  $(\ell_2,h_{Bool},$ $
h_{\sdom},h_{\integers},h_{\Reals})\in C$ then $K_\epsilon(s,C) = 1$, otherwise $K_\epsilon(s,C)=0$.

\paragraph{$\Whiles$ statement.}
Again let $\nxt(\ell) = \set{(\ell_1,c),(\ell_2, \neg c)}$. This case
is identical to the case of $\Ifs$ statement, and so is skipped.

\paragraph{$\ext$ statement.}
In this case, $K_\epsilon(s,C) = 1$ if $s\in C$; otherwise $K_\epsilon(s,C) = 0.$

\paragraph{Probability Spaces on finite executions.} For each $i> 0$,
 we define a probability space $\Phi_i\:=(S^i,{\Sigma}_i, \phi_i)$ capturing the set
 of finite executions of length $i$ $S^i$, the class $\Sigma_i$ of measurable sets
 of executions of length $i$ and a probability measure $\phi_i$, as follows. Let $\vec{C}=\:(C_1,...,C_i)$ be a
 sequence of measurable sets where, for $1\leq j\leq i$,  $C_j\in \cE.$ 
For each such $\vec{C}$, let $Exec(\vec{C})= \set{ (s_1,s_2,...,s_i)\st
  s_j\in C_j, 1\leq j\leq i}.$ 
 The set $\Sigma_i$ of
 measurable sets of finite executions of length $i$, is the
 $\sigma$-algebra generated by the sets of executions $Exec(\vec{C})$
where $\vec{C}$ is a vector of measurable sets as specified
above. Essentially, $(S^i,{\Sigma}_i)$ is the measurable space obtained by taking the
product of $(S,\cE)$, $i$ times. The probability measure $\phi_i$ is
defined by first fixing an initial state and using the Markov kernel $K_\epsilon$ as follows.
 
\paragraph{Initial State and initial distribution.}
For an integrable function $g$ with respect to a measure space $(X,\Sigma,\mu)$, let  $\int_{X_1} g \mu(\mathrm{d}x)$  denote the integral of function $g$ with respect to measure $\mu$
over a measurable set $X_1\in \Sigma.$ 
Let $\ell_{\fb{in}}$ be the label of the first statement 
of $P_\epsilon$. Let 
$h_{Bool}^{\fb{in}}$, $h_{\integers}^{\fb{in}}$, and 
$h_{\Reals}^{\fb{in}}$ be  functions such that $h_{Bool}^{\fb{in}}$  assigns $\false$ to every 
variable in $\cB_{P_\epsilon}$, and $h_{\integers}^{\fb{in}}$,$h_{\Reals}^{\fb{in}}$
assign value zero to every variable in $\cZ_{P_\epsilon}, \cR_{P_\epsilon}$ respectively . An 
initial state of $P_{\epsilon}$ will be of the form 
$(\ell_{\fb{in}},h_{Bool}^{\fb{in}},h_{\sdom}^{\fb{in}},
h_{\integers}^{\fb{in}}, h_{\Reals}^{\fb{in}})$, where 
$h_{\sdom}^{\fb{in}}$ assigns the given values to input variables and 
assigns zero to all other variables in $\cX_{P_\epsilon}$ (recall that all input
variables are in $\cX_{P_\epsilon}$); the 
values given to the input variables by $h_{\sdom}^{\fb{in}}$ will be the 
``initial input value''. We fix a unique initial state 
$s_{\s{init}}$.
Let $\phi_{\s{init}}$ be a distribution on the measure space $(S,\cE)$ such that 
for any $C'\in \cE$, $\phi_{\s{init}}(C')=1$ if $s_{\s{init}}\in C'$; otherwise,
$\phi_{\s{init}}(C')=0.$
Now,  $\phi_i$ is the unique probability measure defined by the
Markov kernel $K_\epsilon$ with initial distribution $\phi_{\s{init}}$ such that for each sequence of measurable sets  $\vec{C}=\:(C_1,...,C_i)$, $\phi_i(Exec(\vec C))$ is 
$$\int_{C_1}\int_{C_2} \cdots \int_{C_i} \mathbf{1}\; K_\epsilon(x_{i-1},\mathrm{d}x_i) \cdots  K_\epsilon(x_1,\mathrm{d}x_2)\phi_{\s{init}}(\mathrm{d}x_1)$$
where $\mathbf{1}$ is the constant function that takes $1$ everywhere. Please see~\cite{Cinlar-Book} for additional details.

We let
 $\Prob_{natural}( P_\epsilon(\fb{in}) = \fb{out})$ denote the probability that
 $P_{\epsilon}$ outputs value $\fb{out}$ on the  input $\fb{in}$. We define
 this probability as follows. Let $\alpha=(s_1,...,s_i)\in S^i$ be an
 execution. We say that $\alpha$ is a {\it
   required}  execution if $\alpha$ is a terminating 
 execution 
 with  output  $\fb{out}$, i.e., it satisfies the following two conditions: 
(i)  $s_i=(\ell,h_{Bool},h_{\sdom}, h_{\integers},h_{\Reals})$ where $\ell$ is the label of $ext$
 statement and valuation of output variables is $\fb{out}$; (ii) if
 $j<i$ and $s_{j}=(\ell', f'_{Bool},
 f'_{\sdom}, f'_{\s{int}},f'_\Reals)$ then $\ell'$ is not the label of  $ext$
 statement. For each $i>0$, let $Req_i$ be the set of all required
 executions in $S^i.$ It is easy to see that, for each $i>0$,
 $Req_i\in \Sigma_i$ and no execution in $Req_i$ is a prefix of an
 execution in $Req_{i+1}.$
We define  $ \Prob_{natural}( P_\epsilon(\fb{in}) = \fb{out})=\sum_{i>0}
\phi_i(Req_i).$ 

       
\newcommand{\stat} {\mathsf{state}}
\newcommand{\cntr} {\mathsf{cntr}}

\newcommand{\nb}[1]{\mathsf{\bv_{#1}^\mathsf{next}}}
\newcommand{\nr}[1] {\mathsf{\rv_{#1}^\mathsf{next}}}
\newcommand{\steps} {\mathsf{number\_steps}}
\newcommand{\posn}{\mathsf{pos}}
\newcommand{\currsteps}{\mathsf{curr\_step}}
\newcommand{\cnt}{\mathsf{continue}}

\section{Undecidability of checking differential privacy of ${\s{Simple}}$ programs }
\label{app:undecid}

In this section, we will prove Theorem~\ref{thm:undecidability}. That
is, we will show that both {\problemone} and {\problemtwo} are
undecidable.

\begin{proof}
Recall that a 2-counter Minsky Machine is 
tuple $\cM= (Q, \qs, \qf, \Delta^1_{inc}, \Delta^2_{inc}, \Delta^1_{jzdec}, \Delta^2_{jzdec} )$ where
\begin{itemize}
\item $Q$ is a finite  set of control states.
\item $\qs\in Q$ is the initial state.
\item $\qf\in Q$ is the final state.
\item $\Delta^i_{inc} \subseteq Q\times Q$ is the increment of counter $i$ for  $i=1,2.$
\item $\Delta^i_{jzdec} \subseteq Q\times Q \times Q$ is the conditional jump of counter $i$ for  $i=1,2.$
\end{itemize}

$\cM$ is said to be deterministic if from each state $q$, there is at
most one transition out of $q$.  The semantics of $\cM$ is defined in
terms of a transition system $(Conf, (\qs,0,0), \rightarrow)$ where
$Conf=Q\times \Nats \times \Nats$ is the set of configurations, $
(\qs,0,0)$ is the initial configuration and $\rightarrow$ is defined
as follows:

\begin{tabular}{ll}
$(q,i,j)\rightarrow (q',i+1,j)$ & if $(q,q')\in \Delta^1_{inc},$\\
$(q,i,j)\rightarrow (q',i,j+1)$ & if $(q,q')\in \Delta^2_{inc},$\\
$(q,i,j)\rightarrow (q',i,j)$  & if $i=0$ and $(q,q',q'')\in \Delta^1_{jzdec},$\\
$(q,i,j)\rightarrow (q'',i-1,j)$ & if $i\ne 0$ and $(q,q',q'')\in \Delta^1_{jzdec},$\\
$(q,i,j)\rightarrow (q',i,j)$ & if $j=0$ and $(q,q',q'')\in \Delta^2_{jzdec},$ \\
$(q,i,j)\rightarrow (q'',i,j-1)$ &  if $j\ne 0$ and $(q,q',q'')\in \Delta^2_{jzdec}.$\\
\end{tabular}

A sequence of configurations $s_0,s_1,\ldots s_k$ is said to be a
computation of $\cM$ is $s_0=(\qs,0,0)$ and $s_i\rightarrow s_{i+1}$
for $i=0,1,\ldots k-1.$ A computation $s_0,s_1,\ldots s_k$ is said to
be a terminating computation of $\cM$ if $s_k=(\qf,i,j)$ for some $i,j\in
\Nats.$

We show that given a 2-counter Minsky Machine $\cM$, there is a
program $P^\cM_\epsilon \in \s{Simple}$ such that for each
$\epsilon>0,$
\begin{itemize}
\item[(a)] $P^{\cM}_\epsilon$ has only one input $\dv_{\fb{in}}$ and
  only one output $\dv_{\fb{out}}$ taking values in $\sdom=\set{0,1}.$
\item[(b)] $P^\cM_\epsilon$ terminates with probability $1.$
\item[(c)] $P^\cM_\epsilon$ is $(\epsilon,0)$-differentially private
  with respect to the adjacency relation
  $\Phi=\set{(0,1),(1,0)}$ if and only if $\cM$ does
  not halt.
\end{itemize}

\RestyleAlgo{boxed}

  \removelatexerror
  \begin{algorithm}
  \DontPrintSemicolon
\SetAlgoLined

 \KwIn{$\dv_{\fb{in}}$}
 \KwOut{$\dv_{\fb{out}}$}
 \;
 
 $\dv_{\fb{out}} \gets 0$\;
 $\rv_0 \gets \Lap{ {\epsilon}  , 0}$\;
 $\bv_\mathsf{test}\gets \rv_0>0$\;

  \If {$\bv_\mathsf{test}$}
  {
    $\rv_\steps \gets  \Lap{ {\epsilon}  , 0}$\;
    $\rv_\currsteps \gets \rv_0$\;
    $\bv_\cnt \gets \rv_\steps > \rv_\currsteps $\;
  
    $\bv_1 \gets \true$\;
      $\bv_2 \gets \false$\;
    $\cdots$\;
    $\bv_m \gets \false$\;
    $\rv_1\gets \rv_0$\;
    $\rv_2\gets \rv_0$\;
    \While{$\bv_\cnt$}
     {
        $s_1$\;
        $\vdots$\;
         $s_n$\;
         $\bv_1 \gets \nb 1$\;
         $\ldots$\;
         $\bv_m\gets \nb m$\;
         $\rv_1 \gets \nr 1$\;
         $\rv_2 \gets \nr 2$\;
         $\rv_\currsteps \gets \rv_0 + \rv_\currsteps$\;
         $\bv_\cnt \gets \rv_\steps > \rv_\currsteps $\;
     }
    
   \If {$(\bv_m\; \mathrm{and}\;  \s{EQ}(\dv_{\fb{in}},1))$}
      {
       $\dv_{\fb{out}} \gets 1$\;
      } 
     
  }
 $\ext$

\caption{Program $P^\cM_\epsilon$ simulating $2$-counter machine $\cM$}
\label{fig:undecid}
\end{algorithm}


Given a $2$-counter Machine $\cM,$ $P^\cM_\epsilon$ is constructed as
follows. Without loss of generality, let  $Q=\set{q_1,\ldots,q_m}$ and let $q_1$ be the initial state and $q_m$ be the final state. 
We will model a state in $Q$ using $m$ Boolean variables 
$\bv_1,\ldots,\bv_m.$ If the current state is $q_i$ then $\bv_i$ will be set to true and all other variables will be set to false. The counters will be modeled using  real variables as follows. Initially a real variable $\rv_0$ will be sampled from Laplacian distribution.
If $\rv_0\leq 0$, we will exit the program. Otherwise, we will initialize two real variables $\rv_1,\rv_2$ to be $\rv_0.$ $\rv_1,\rv_2$ will model the counters as follows. If the first (second respectively) counter is going to hold natural number $i$ then $\rv_1=(i+1)\rv_0$ ($\rv_2=(i+1)\rv_0$ respectively). Incrementing the first counter (second respectively) counter is achieved by adding $r_0$ to $\rv_1$ ($\rv_2$ respectively). Decrementing the first counter (second respectively) counter is achieved by sibtracting $r_0$ from $\rv_1$ ($\rv_2$ respectively). 
 For encoding the  transition relations $ \Delta^1_{inc},
\Delta^2_{inc}, \Delta^1_{jzdec}$ and $\Delta^2_{jzdec},$ we use  variables  $\nb 1,\ldots,\nb m, \nr 1,\nr 2$ to compute the next configuration as expected.
 For
example, the transition $(q_i,q_j,q_k)\in \Delta^1_{jzdec}$ can be
encoded using conditional statements as follows:
 $$ \begin{array} {l}
         \bv_{i,j,k} \gets  \rv_1 =\rv_0 \\
         \mathsf{if}\;  (\bv_{i,j,k}  \mbox{ and } \bv_i) \\
                   \hspace{0.2cm}        \mathsf{then}\;   \nb j\gets \true;\; \nb 1\gets \false;\; \ldots\;  \nb {j-1}\gets \false;\; \\
                   \hspace{1.2cm}  \nb {j+1}\gets \false;\;  \ldots;\; \nb {m}\gets \false\\ 
                   \hspace{0.2cm}     
                   \mathsf{else}\; \nr 1 \gets \rv_1 - \rv_0;\;  \nb k\gets \true;\; \nb 1\gets \false;\; \ldots\;\\
                  \hspace{1.2cm}  \nb {k-1}\gets \false;\;  \nb {k+1}\gets \false;\; \ldots;\; \nb {m}\gets \false\\                               
          \mathsf{end} \\               
        
 \end{array} $$
 Let $s_1,s_2,\ldots, s_n$ be the statements encoding the transition
relation.  Consider the program $P^\cM_\epsilon$ given in
Algorithm~\ref{fig:undecid}. The program $P^\cM_\epsilon$ initially
samples $r_0$ from a continuous Laplacian distribution. If the sampled value is $\leq 0$
then it outputs $0$. Otherwise, it starts simulating $\cM$. In order to make sure that the program terminates, we sample another real variable $\rv_\steps$ and simulate $k$ steps of the program where $k$ is the smallest number such that $k\rv_0 > \rv_\steps.$

At
the end of the simulation, if the halting state is reached and the
input is $1$ then it outputs $1$. Otherwise, it outputs
$0$.
 
Clearly, $P^\cM_\epsilon$ satisfies properties (a) and (b) above. That
the program $P^\cM_\epsilon$ has property (c) above follows from the
following observations:
\begin{enumerate}
\item If $\cM$ does not halt then $P^\cM_\epsilon$ outputs $0$
  with probability $1.$
\item If $\cM$ halts then $P^\cM_\epsilon$ outputs $1$ with
  non-zero probability on input $1$ and outputs $1$ with zero
  probability on input $0.$
\end{enumerate}
This shows that {\problemone} is undecidable.  Undecidability of
{\problemone} is obtained by taking $\epsilon_0$ to be any constant
rational number, say $\frac 1 2.$
\end{proof}

\section{{\ourlang} encoding of  exponential distribution}
\label{app:example_dipwhile}

\begin{example}
Given $\epsilon>0$ and  $\mathsf{offset},$ let $\mathsf{Lap}^{+}(\epsilon,\mathsf{offset})$ be the  continuous distribution whose probability density function (p.d.f.) is given by 
 $$ f_{\epsilon,\mu}(x) = \begin{cases}\epsilon \   \euler^{-\epsilon  (x -\mathsf{offset})}  & \mbox{ if } x\geq \mathsf{offset}\\
                                                             0 & \mbox{otherwise}  
                     
 \end{cases} .$$
 Observe that the one-sided Laplacian distribution $\mathsf{Lap}^{+}(\epsilon,0)$  is the standard exponential distribution. Our language is expressive enough to encode one-sided Laplacians as follows. Consider the sequence of statements:
 $$ \begin{array}{l}
       X\gets \Lap{\epsilon,0};\\
       b \gets X\leq 0;\\
       \ifstatement {b} {Y\gets X} {Y\gets (-1) X };\\
       Z \gets Y+\mathsf{offset}
     \end{array}
 $$
 The effect of the sequence of statements is that $Z$ has the one-sided Laplacian distribution $\mathsf{Lap}^{+}(\epsilon,\mathsf{offset}).$
\end{example}


\section{Formal DTMC Semantics of {\ourlang} programs}
\label{app:dtmcsemantics}

We define formally $\sem{P_\epsilon}$, the DTMC semantics of an {\ourlang} program $P_\epsilon$. 
Let us recall some key restrictions in {\ourlang} programs. The first
restriction is that real and integer-valued variables are never
assigned within the scope of a $\s{while}$ statement. Hence, they are
assigned only a bounded number of times, and therefore, without loss
of generality, we can assume that they are assigned a value exactly
\emph{once}. Second, real valued expressions are never compared
against integer valued expressions.

Let us fix some basic notation. Partial functions from $A$ to $B$ will
be denoted as $A \pto B$. The value of $f: A \pto B$ on $a \in A$,
will be denoted as $f(a)$. Two partial functions $f$ and $g$ will be
equal (denoted $f \simeq g$) if for every element $a$, either $f$ and
$g$ are both undefined, or $f(a) = f(b)$. If $f : A \pto B$, $a \in A$
and $b \in B$, then $f[a \mapsto b]$ denotes the partial function that
agrees with $f$ on all elements of $A$ except $a$; on $a$, $f[a
  \mapsto b](a) = b$.

In the rest of this section let us fix a {\ourlang} program
$P_\epsilon$. $\s{L}$ will denote the set of labels appearing in
$P_\epsilon$. 
A valuation $val$ for $\sdom$ variables is a function that assigns a
value in $\sdom$ to variables in $\cX$; we will denote set of all such
valuations by $\s{V}_{\sdom}$. Given a valuation $val \in
\s{V}_{\sdom}$ and a real expression $e$, $val(e)$ denotes the real
expression that results from substituting all the $\sdom$ variables
appearing in $e$ by their value in $val$. Similarly, for an integer
expression, $val(e)$ is the partial evaluation of $e$ with respect to
$val$. Finally, for a comparison $e_1 \sim e_2$ between two
expressions $e_1$ and $e_2$, again we will define $val(e_1 \sim e_2)$
to be $val(e_1) \sim val(e_2)$. Let us denote the set of integer
expressions, real expressions, and Boolean comparisons, appearing on
the right hand side of assignments in $P_\epsilon$ by $P_Z, P_R$, and
$P_B$, respectively. Three sets of expressions will be used in
defining the semantics, and they are as follows.
\[
\begin{array}{l}
\s{zExp} = \set{val(e)\: |\: val \in \s{V}_{\sdom},\ e \in P_Z}\\
\s{rExp} = \set{val(e)\: |\: val \in \s{V}_{\sdom},\ e \in P_R}\\
\s{bExp} = \set{val(e)\: |\: val \in \s{V}_{\sdom},\ e \in P_B}
\end{array}
\]
Thus, $\s{zExp}$, $\s{rExp}$, and $\s{bExp}$ are partially evaluated
expression appearing on the right hand side of assignments in
$P_\epsilon$. Notice that the sets $\s{L}$, $\s{zExp}$, $\s{rExp}$,
and $\s{bExp}$ are all finite. Finally, let $\s{Const}$ be the set of
rational constants appearing as coefficient of $\epsilon$ of Laplace
and discrete Laplace assignments in $P_\epsilon$; again $\s{Const}$ is
finite.

In order to define the semantics of $P_\epsilon$, we will use an
auxiliary function $\nxt$ that given a label, identifies the label of
the statement to be executed next. Observe that for most program
statements, the next statement to be executed is unique. However, for
{\Ifs} and {\Whiles} statements, the next statement depends on the
value of a Boolean expression. We will define $\nxt(\ell)$ to be a set
of pairs of the form $(\ell',c)$ with the understanding that $\ell'$
is the next label if $c$ holds. Thus, for a label $\ell$, $\nxt(\ell)$
will either be $\set{(\ell',\true)}$ or $\set{(\ell_1,c),(\ell_2,\neg
  c)}$. We do not give a precise definition of $\nxt(\cdot)$, but we
will use it when defining the semantics.

The semantics of $P_\epsilon$ will given as a finite-state,
parametrized DTMC $\sem{P_\epsilon}$. To define the parametrized DTMC
$\sem{P_\epsilon}$, we need to define the states and the transitions. 

\paragraph{States.} 
States of $\sem{P_\epsilon}$ will be of the form 
\[
(\ell, f_{Bool}, f_{\sdom}, f_{\s{int}}, f_{\s{real}}, C).
\] 
Informally, $\ell \in \s{L}$ is the label of the statement to be
executed, $f_{Bool}$, $f_{\sdom}$, $f_{\s{int}}$, and $f_{\s{real}}$
are partial functions assigning ``values'' to program variables (of
appropriate type), and $C$ is a collection of inequalities among
program variables that hold on the current computational path. Both
$f_{Bool}$ and $f_{\sdom}$ are valuations for the appropriate set of
variables, and so we have $f_{Bool} : \cB \pto \set{\true,\false}$ and
$f_{\sdom} : \cX \pto \sdom$. For real and integer variables, instead
of tracking exact values, we will track the expressions used in
assignments and parameters of (discrete) Laplace mechanisms used in
random assignments. Therefore, we have $f_{\s{int}}: \cZ \pto \s{zExp} \cup
(\s{Const} \times \sdom)$ and $f_{\s{real}}: \cR \pto \s{rExp} \cup
(\s{Const} \times \sdom)$. Finally, $C \subseteq \s{bExp} \cup
\set{\neg e\: |\: e \in \s{bExp}}$. It follows immediately that the
set of states of $\sem{P_\epsilon}$ is finite.

\paragraph{Well-Formed States.}
The functions $f_*$ (for $* \in \{Bool,\sdom,$ $\s{int},\s{real}\}$)
assign values to program variables that have been assigned during the
computation thus far. Since we assume variables in {\ourlang} program
are defined before they are used, if a variable $\iv'$ appears in
$f_{\s{int}}(\iv) \in \s{zExp}$, then $f_{\s{int}}(\iv')$ must be
defined. A similar condition holds for real variables. The comparisons
in $C$ are also relationships that must hold on the current path, and
so all variables participating in it must be defined. If a state
satisfies these consistency properties between $f_{\s{int}}$,
$f_{\s{real}}$, and $C$, we will say it is \emph{well-formed}. 
All
reachable states in $\sem{P_\epsilon}$ will be well-formed. So when we
define transitions we will assume that the states are well-formed.

\paragraph{Initial States.}
Let $\ell_{\fb{in}}$ be the label of the first statement
$P_\epsilon$. Let $C^{\fb{in}} = \emptyset$, and let
$f_{Bool}^{\fb{in}}$, $f_{\s{int}}^{\fb{in}}$, and
$f_{\s{real}}^{\fb{in}}$ be partial functions with an empty domain. An
initial state of $\sem{P_\epsilon}$ will be of the form
$(\ell_{\fb{in}}, f_{Bool}^{\fb{in}}, f_{\sdom}^{\fb{in}},
f_{\s{int}}^{\fb{in}}, f_{\s{real}}^{\fb{in}}, C^{\fb{in}})$, where
$f_{\sdom}^{\fb{in}}$ is defined only on the input variables; the
values given to these variables by $f_{\sdom}^{\fb{in}}$ will be the
``initial input value''.

We will now define the semantics of transitions in
$\sem{P_\epsilon}$. Let us fix a state $\state = (\ell, f_{Bool},
f_{\sdom}, f_{\s{int}}, f_{\s{real}}, C)$. Transitions out of $\state$
will be defined based on  the effect of executing the
statement labeled $\ell$, and so its definition will depend on this
statement. We handle each case below.

\paragraph{$\sdom$ assignments.}
Let $\nxt(\ell) = \set{(\ell',\true)}$ and let $\dv$ be the variable
being assigned in $\ell$. There are two cases to consider. First,
consider the case where $\dv$ is assigned a value for a $\sdom$
expression $e$. In this case, $\sem{P_\epsilon}$ will transition to
\[
(\ell',f_{Bool}, f_{\sdom}[\dv \mapsto f_{\sdom}(e)], f_{\s{int}},
f_{\s{real}}, C)
\]
with probability 1. The second case is when $\dv$ is assigned a random
value according to $\pexp{a\epsilon, F(\tilde{\dv}),e}$ or
$\chose(a\epsilon, \tilde{e})$. For $d \in \sdom$, let $\s{prob}(d)$
be the probability of $d$ (as a function of $\epsilon$) based on the
distribution; note, that these probabilities will depend on the value
of $f_{\sdom}(e)$ and $f_{\sdom}(\tilde{e})$. Then, $\sem{P_\epsilon}$
will transition to
\[
(\ell',f_{Bool},f_{\sdom}[\dv \mapsto d], f_{\s{int}}, f_{\s{real}},
C)
\]
with probability $\s{prob}(d)$.

\paragraph{Integer assignments.}
Let $\nxt(\ell) = \set{(\ell',\true)}$ and let $\iv$ be the variable
being assigned in $\ell$. Again there are two cases to
consider. First, consider the case where $\iv$ is assigned a value for
an integer expression $e$. In this case, $\sem{P_\epsilon}$ will
transition to 
\[
(\ell',f_{Bool}, f_{\sdom}, f_{\s{int}}[\iv \mapsto f_{\sdom}(e)],
f_{\s{real}}, C)
\]
with probability 1. Next, if $\iv$ is assigned a random value
according to $\DLap{a\epsilon,e}$, then $\sem{P_\epsilon}$ transitions
to
\[
(\ell',f_{Bool}, f_{\sdom}, f_{\s{int}}[\iv \mapsto (a,f_{\sdom}(e))],
f_{\s{real}}, C)
\]
with probability 1. Notice that we have a deterministic transition
even if the assignment samples from a discrete Laplace. The effect of
choosing randomly a value will get accounted for during Boolean
assignments.

\paragraph{Real assignments.}
Let $\nxt(\ell) = \set{(\ell',\true)}$ and let $\rv$ be the variable
being assigned in $\ell$. First, if $\iv$ is assigned a value for a
real expression $e$, $\sem{P_\epsilon}$ will transition to
\[
(\ell',f_{Bool}, f_{\sdom}, f_{\s{int}}, f_{\s{real}}[\rv \mapsto
  f_{\sdom}(e)], C)
\]
with probability 1. If $\iv$ is assigned a random value according to
$\Lap{a\epsilon,e}$, then $\sem{P_\epsilon}$ transitions to
\[
(\ell',f_{Bool}, f_{\sdom}, f_{\s{int}}, f_{\s{real}}[\rv \mapsto
  (a,f_{\sdom}(e))], C)
\] 
with probability 1. Again sampling according to Laplace is modeled
deterministically.

\paragraph{Boolean assignments.}
Again let $\nxt(\ell) = \set{(\ell',\true)}$ and let $\bv$ be the
variable being assigned in $\ell$. When $\bv$ is assigned the value of
Boolean expression $e$, $\sem{P_\epsilon}$ transitions to 
\[
(\ell', f_{Bool}[\bv \mapsto f_{Bool}(e)], f_{\sdom}, f_{\s{int}},
f_{\s{real}}, C)
\] 
with probability 1. The interesting case is when $\bv$ is assigned the
result of comparing expressions $e_1 \sim e_2$.
If the probability of all conditions in $C$ holding is $0$, then let $p_1$ be $0.$
Otherwise, let $p_1$
denote the
probability of $f_{\sdom}(e_1) \sim f_{\sdom}(e_2)$ holding given all
conditions in $C$ hold; notice that this probability depends on the
functions $f_{\s{int}}$ and $f_{\s{real}}$ that store the parameters
to various random sampling steps. Now $\sem{P_\epsilon}$ will
transition to 
\[
(\ell', f_{Bool}[\bv \mapsto \true], f_{\sdom}, f_{\s{int}},
f_{\s{real}}, C \cup \set{f_{\sdom}(e_1) \sim f_{\sdom}(e_2)})
\] 
with probability $p_1$, and it will transition to
$$\begin{array}{l}
(\ell', f_{Bool}[\bv \mapsto \false], f_{\sdom}, f_{\s{int}},
f_{\s{real}}, \\
\hspace*{0.5cm}C \cup \set{\neg (f_{\sdom}(e_1) \sim f_{\sdom}(e_2))})
\end{array} $$
with probability $1-p_1$. Thus, the effect of the probabilistic sampling
steps for integer and real variables gets accounted for when the
result of a comparison is assigned to a Boolean variable.

\paragraph{$\Ifs$ statement.}
In this case, $\nxt(\ell) = \set{(\ell_1,c),(\ell_2, \neg c)}$. If
$f_{Bool}(c) = \true$ then we transition to 
\[
(\ell_1, f_{Bool}, f_{\sdom}, f_{\s{int}}, f_{\s{real}}, C)
\]
with probability 1. On the other hand, if $f_{Bool}(c) = \false$ then
transition to 
\[
(\ell_2, f_{Bool}, f_{\sdom}, f_{\s{int}},f_{\s{real}}, C)
\]
with probability 1.

\paragraph{$\Whiles$ statement.}
Again let $\nxt(\ell) = \set{(\ell_1,c),(\ell_2, \neg c)}$. This case
is identical to the case of $\Ifs$ statement, and so is skipped.

\paragraph{$\ext$ statement.}
In this case we stay in state $\state$ with probability 1.

\paragraph{Equivalence of the two semantics.}
Let $\fb{in}$ be a valuation over input variables and $\fb{out}$ be a valuation over output variables.
 We let $\Prob_{DTMC}( P_\epsilon(\fb{in}) = \fb{out})$ denote the probability that
 $P_{\epsilon}$ outputs value $\fb{out}$, on the  input $\fb{in}$, under the
 DTMC semantics. This probability is defined to be the probability
 of reaching a state of the form $(\ell, f_{Bool}, f_{\sdom},
 f_{\s{int}},f_{\s{real}}, C)$ where $\ell$ is the label  of an $\ext$
 statement and $f_{\sdom}$ assigns the values given by $\fb{out}$ to
 output variables, from
 an initial state in which the values of the input variables is given
 by $\fb{in}$, in the DTMC  $\sem{P_\epsilon}.$ The following theorem states the
 equivalence of the natural semantics given in Appendix~\ref{app:semantics} to that
 of the DTMC semantics for {\ourlang} programs.

\begin{theorem}
\label{thm:equivalence}
For every $\epsilon>0$ and  {\ourlang} program $P_{\epsilon}$, and for
every pair of evaluations $\fb{in},\fb{out}$ to the input and output
variables respectively,  $\Prob_{DTMC}( P_\epsilon(\fb{in}) = \fb{out})=\Prob_{natural}( P_\epsilon(\fb{in}) = \fb{out}).$
\end{theorem}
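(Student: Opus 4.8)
The plan is to prove the equality of the two output probabilities by exhibiting a tight correspondence between finite executions of the Markov kernel semantics and finite paths of the DTMC $\sem{P_\epsilon}$, and to carry it out by induction on the number of execution steps. The central object is, for each DTMC state $z = (\ell, f_{Bool}, f_{\sdom}, f_{\s{int}}, f_{\s{real}}, C)$, the measurable set $A_z$ of \emph{concrete} states $(\ell, h_{Bool}, h_{\sdom}, h_{\integers}, h_{\Reals})$ whose discrete part agrees with $f_{Bool}, f_{\sdom}$ and whose real/integer coordinates satisfy both the symbolic assignments recorded in $f_{\s{int}}, f_{\s{real}}$ and every inequality in $C$. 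I would also attach to $z$ the probability measure $\nu_z$ on the real/integer coordinates obtained as the product of the (discrete) Laplace laws prescribed by the parameters stored in $f_{\s{int}}, f_{\s{real}}$ (with means determined by $f_{\sdom}$), after substituting the deterministic expressions and \emph{conditioning} on the event that all constraints in $C$ hold.

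The heart of the argument is the following invariant, to be established for every step count $n$: the probability that the DTMC is in state $z$ after $n$ steps equals the $\phi_n$-probability of the length-$n$ executions whose final state lies in $A_z$; and, conditioned on that event, the law of the real/integer variables in the natural semantics is exactly $\nu_z$. Equivalently, the natural-semantics measure restricted to $A_z$ factors as (DTMC mass of $z$) $\times\, \nu_z$. Distinct DTMC states reachable at step $n$ have disjoint compatibility sets $A_z$, since they must disagree on some Boolean value or on some resolved comparison recorded in $C$, so the factorization decomposes the natural-semantics measure cleanly. The base case is immediate: both semantics start in the same unique initial state, with $C = \emptyset$ and no real/integer variable yet defined.

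For the inductive step I would proceed case by case through the statement types, mirroring the definition of the kernel $K_\epsilon$ and of the DTMC transitions. The deterministic cases ($\sdom$/Boolean/integer/real assignments of expressions, $\Ifs$, $\Whiles$, $\ext$) are routine: the discrete part evolves identically, control flow depends only on Boolean values that agree in both semantics, and $A_z, \nu_z$ update consistently (a deterministic real/integer assignment simply substitutes the expression, refining $A_z$ and $\nu_z$ by an equality). Likewise, for exponential-mechanism and user-defined samplings the branching probability $\s{prob}(d)$ is the same function of $f_{\sdom}$ in both semantics and leaves $\nu_z$ untouched. The two crucial cases are the complementary halves of the abstraction. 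When a Laplace or discrete Laplace value is sampled, the DTMC makes a probability-$1$ transition that merely records the parameters $(a, f_{\sdom}(e))$; since the freshly introduced variable appears in no constraint of $C$ yet, the natural semantics integrates its density to $1$ over the new coordinate, leaving the DTMC mass unchanged, while $\nu_z$ gains exactly the prescribed \emph{independent} Laplace factor. When a Boolean variable is assigned a comparison $e_1 \sim e_2$, the DTMC splits with probabilities $p_1$ and $1-p_1$, where $p_1$ is the probability of the comparison holding conditioned on $C$; by the induction hypothesis the actual real/integer values are distributed as $\nu_z$, so the natural-semantics probability that $e_1 \sim e_2$ holds is precisely this same conditional probability $p_1$, and after the step $\nu_z$ is further conditioned on the realized outcome, yielding exactly the measures attached to the two successor states. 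This is where the work concentrates, and I expect the main obstacle to be the measure-theoretic bookkeeping: one must verify that the natural-semantics joint law, restricted to $A_z$, genuinely factors as mass $\times$ conditional law through a growing conjunction of linear inequalities among independent Laplace variables and deterministic expressions, and that the recipe the DTMC uses to \emph{define} $p_1$ coincides with this conditional probability (with the degenerate case—probability of $C$ being $0$—matching the DTMC convention $p_1 = 0$ and zero reachable mass).

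Finally, I would conclude by summing the invariant. Both $\Prob_{natural}(P_\epsilon(\fb{in}) = \fb{out})$ and $\Prob_{DTMC}(P_\epsilon(\fb{in}) = \fb{out})$ are sums, over execution length, of the mass of first-time-terminating executions whose $\ext$-state carries output $\fb{out}$; since $\ext$ is absorbing and the program halts with probability $1$, summing the invariant over all DTMC states $z$ whose label is an $\ext$ label and whose $f_{\sdom}$ assigns $\fb{out}$ to the output variables yields the desired equality.
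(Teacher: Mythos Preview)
Your overall strategy---match each abstract DTMC configuration with a set of concrete states and a conditional law on the real/integer coordinates, then propagate an invariant step by step---is the right idea and is close in spirit to what the paper does. There is, however, a genuine gap in the disjointness claim.

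You assert that distinct DTMC states $z_1,z_2$ reachable after $n$ steps must ``disagree on some Boolean value or on some resolved comparison recorded in $C$'', hence have disjoint $A_{z_1},A_{z_2}$. This fails: two DTMC paths can diverge at a probabilistic $\sdom$ assignment (so only $f_{\sdom}$ changes), then pass through a real sampling $\rv\gets\Lap{a\epsilon,\dv}$ that records different parameters $(a,f_{\sdom}(\dv))$ in $f_{\s{real}}$, and finally re-converge in $(\ell,f_{Bool},f_{\sdom},C)$ once the offending $\sdom$ variable is overwritten. At that point $z_1\neq z_2$ differ only in $f_{\s{real}}$, but the recorded sampling parameters impose no constraint on the concrete value of $\rv$, so $A_{z_1}=A_{z_2}$. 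Your invariant would then force the DTMC mass of $z_1$ to equal the natural-semantics mass of $A_{z_1}=A_{z_2}$, and likewise for $z_2$, which is impossible unless both are zero. The conditional-law half of the invariant breaks for the same reason: the true conditional distribution of $\rv$ on $A_{z_1}=A_{z_2}$ is a mixture of the two Laplace laws, not either $\nu_{z_1}$ or $\nu_{z_2}$.

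The fix is to maintain the invariant at the level of DTMC \emph{paths} rather than DTMC \emph{states}: for each length-$n$ path $\rho=z_0,\ldots,z_n$ define $A_\rho\subseteq S^{n+1}$ as the set of natural-semantics executions whose $i$th state lies in $A_{z_i}$, and show $\phi_{n+1}(A_\rho)=\s{prob}(\rho)$ together with the conditional-law statement. Distinct paths do have disjoint $A_\rho$'s, because at the first step where they differ the successors disagree either in $f_{Bool}$, in $f_{\sdom}$, or in $C$ (the only sources of branching), each of which constrains the concrete state at that step. This is exactly what the paper's proof does: it groups natural-semantics executions by their sequence of discrete states (keeping integer values explicit), maps each such sequence to a unique DTMC path via a function $H$, and shows $\s{prob}(\rho)=\sum_{\beta\in H^{-1}(\rho)}\phi_i(\s{ext}(\beta))$. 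Your case analysis for the inductive step carries over essentially unchanged once you index by paths.
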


\begin{proof}[Proof Sketch]
Let us fix an $\epsilon>0$ and a program $P_\epsilon.$ Then  $\sem{P_\epsilon}$ can be considered as a (non-paramaterized) DTMC.  
For any path $\rho=\state_1,\ldots, \state_i$ in the DTMC $\sem{P_\epsilon}$, let $\s{prob}(\rho)$
denote the product of the probabilities of all the transitions in $\rho.$
We call  $\rho$ an {\it initialized} path if it starts with an
initial state, and a {\it proper} path if $\s{prob}(\rho)>0.$ 
For any {\it initialized  path} $\rho$ of $\sem{P_\epsilon}$, let $\s{prob}_{\sdom}(\rho)$ be the product of all the transitions in $\rho$ that result from random assignments to $\sdom$ variables,
$\s{prob}_{\integers}(\rho)$ be the product of the probabilities that result from a comparison between integer variables and $\s{prob}_{\Reals}(\rho)$ be the product of the probabilities that result from a comparison between real variables. It is easy to see that $$\s{prob}(\rho)= \s{prob}_\sdom(\rho)\, \s{prob}_{\Reals}(\rho)\,  \s{prob}_{\integers}(\rho).$$

 We recall some of the notation as defined in Appendix~\ref{app:semantics}. Let $S$ be the set of states of $P_\epsilon$ in the natural semantics. A
state $s\in S$ is a tuple $(\ell, h_{Bool}, h_{\sdom}, h_\integers,
h_\Reals)$  denoting the label of a statement to be executed, and
the values of 
Boolean, $\sdom$, integer and real variables of $P_\epsilon.$ A discrete
state of $P_\epsilon$, $ds$, is a tuple $(\ell, h_{Bool}, h_{\sdom}, h_\integers)$ specifying
the label of the statement and the values of Boolean, $\sdom$ and integer variables of $P_\epsilon.$ 
For a state $s=(\ell, h_{Bool}, h_{\sdom}, h_\integers,
h_\Reals)$, let $disc(s)$ be the discrete state $(\ell, h_{Bool}, h_{\sdom}, h_\integers).$
A discrete state $ds$ is said to be {\it initial}  if $ds = disc(s_{\s{init}})$ where $s_{\s{init}}$ is the initial state of $S$.
 
A discrete execution
$\beta=ds_1,\ldots,ds_i$ of $P_\epsilon$ is a sequence of discrete states. The
discrete execution $\beta=ds_1,\ldots,ds_i$ is an {\it initialized}  if $ds_1$ is the initial
discrete state. For a discrete execution $\beta$ as given above, let
$\s{ext}(\beta)= \set{(s_1,\ldots,s_i)\st ds_j=disc(s_j), 1\leq j\leq i}$. It is
easy to see that, for any discrete execution $\beta$ of length $i$,
$\s{ext}(\beta)$ is in $\Sigma_i$ (see Appendix~\ref{app:semantics}) , i.e., is
measurable. For a discrete computation $\beta$, of length $i>0$, let 
$\s{pr}(\beta)= \phi_i(\s{ext}(\beta))$ where $\phi_{i}$ is the probability
function  defined on the measure space $(S^i,\Sigma_i)$ in Appendix~\ref{app:semantics}.
If $\s{pr}(\beta)>0$ then we call $\beta$ a {\it proper} discrete execution of $P_\epsilon.$  

Consider an initialized proper discrete execution $\beta$ of length $i$, as given above,    
where
$ds_j=(\ell_j,h^j_{\sdom},h^j_{Bool},$ $h^j_\integers)$ for $1\leq j\leq
i$. 
It can be shown that  
there exists a unique initialized
path $\rho_\beta=\state_0,\ldots,\state_i$
in the DTMC $\sem{P_\epsilon}$ corresponding to $\beta$ such that for each $j$, 
\begin{enumerate}
\item the state
$\state_j=(\ell_j,f^j_{\sdom},f^j_{Bool},f^j_{\s{int}},$ $f^j_{\s{real}},
C_j)$ 
for some appropriate $f^j_{\sdom}$, $f^j_{Bool},f^j_{\s{int}},$ $f^j_{\s{real}}$ and $C_j$, and 
\item  $f^j_{Bool}(\bv)=h^j_{Bool}(\bv)$ ($f^j_{\sdom}(\dv)=h^j_{\sdom}(\dv)$ respectively) whenever $f^j_{Bool}(\bv)$ ($f^j_{\sdom}(\dv)$ respectively)  is defined. 
\end{enumerate}
Let $H$ be the
function mapping initialized proper discrete executions of $P_\epsilon$ to
corresponding initialized paths in $\sem{P_\epsilon}$, as specified above.

For an initialized proper discrete execution $\beta$ of length $i$ as above,
we define a number $p_j$  for each $j \leq i$ as follows. For $ds_j=(\ell_j,h^j_{\sdom},h^j_{Bool},h^j_\integers)$, let $\widehat {ds_j} = (\ell_j,h^j_{\sdom},h^j_{Bool},h^j_\integers,h^{\fb{in}}_\Reals)$ 
where $h^{\fb{in}}_\Reals$ is the function that maps each real variable of $P_\epsilon$ to $0.$
Let $K_\epsilon $ be the Markov kernel as defined in Appendix~\ref{app:semantics}.
If $j>1$ and $\ell_{j-1}$ is the label of an assignment to an integer variable that samples from a discrete Laplacian variable 
then $p_j = K_\epsilon (\widehat {ds_{j-1}}, \set {\widehat {ds_{j}}})$, otherwise $p_j=1.$ Let $\s{pr}_\integers(\beta)=p_1\ldots p_{i}.$
It can be shown using the  definition of measure $\phi_i$ on $\Sigma_i$ (See Section~\ref{app:semantics})  that $$\s{pr}(\beta)= \s{prob}_\sdom(H(\beta))\, \s{prob}_{\Reals}(H(\beta)) \s{pr}_\integers(\beta).$$
Furthermore, $prob(H(\beta))>0$ if $pr(\beta)>0.$

Now consider any initialized proper path $\rho=\state_1,\ldots,\state_i$ in $\sem{P_\epsilon}.$ From the above observations,
it can be shown that $\s{prob}(\rho) = \sum_{u\in
  H^{-1}(\rho)}\phi_{i}(\s{ext}(\beta)).$ 
Now, the  theorem follows from this observation and the definitions of
$\Prob_{DTMC}( P_\epsilon(\fb{in})$ $ = \fb{out})$ and $\Prob_{natural}( P_\epsilon(\fb{in}) = \fb{out}).$
\end{proof}


\paragraph{Complexity.} Now, we bound the size of the state space of DTMC $\sem {P_\epsilon}$ as follows.  Let $m\:=\abs{\sdom}\:=
2N_\mx+1$ and $m'$ be the length of $P_\epsilon.$ Let $n_1,n_2,n_3,n_4$,
respectively, be the number $\sdom$ variables, 
  boolean variables, 
  integer
variables, 
and real variables occurring in $P_\epsilon.$
In a state $s$ of $M_P$,
the number of possible values for $f_{dom}$ is $\leq m^{n_{1}}$, the
number of possible values for $f_{bool}$ is $\leq 2^{n_{2}}.$ The
number of possible values for $f_{int}$ can be bounded as follows. An
integer variable can be assigned a Laplacian distribution whose
parameters are pairs of the form $(a\epsilon, e)$ where $e$ is an
expression over variables in $U\cap \cX$; the number of such pairs is
$\leq m_1m^{n_{1}}$ where $m_1$ is the number of values of $a$ in $P$
and $m^{n_1}$ is the bound on the number of values of $e.$ An integer
variable can also be assigned a linear expression over integer
variables with coefficients that are integer constants or expressions
over $\sdom$ variables; the number of such linear combinations is
$\leq m_2m^{n_{1}}$ where $m_2$ is the number of such expressions
appearing in $P$. Since, $m_1+m_2\leq m'$, we see that number of
values that an integer variable can be mapped to is $\leq
m'm^{n_{1}}.$ Hence the number of possible values for $f_{int}$ is
$\leq (m'm^{n_{1}})^{n_3}.$ By a similar reasoning we observe that the
number of possible values for $f_{real}$ is $\leq
(m'm^{n_{1}})^{n_4}.$ Now we bound the number of values for $C$ as
follows. The only places where comparisons appear are on the right
hand sides of assignments to boolean variables. In each such
assignment we have comparisons over linear expressions of integer and
real variables ; such comparisons also have integer constants and
$\sdom$ variables appearing in them. Since the number of integer
constants is $\leq m'$ and the number of valuations to $\sdom$
variables $\leq m^{n_{1}}$, we get that the number of possible
comparions is $\leq m' m^{n_{1}}.$ Since $C$ is a subset of such
comparisons, the number of possible values for $C$ is $\leq 2^{(m'
  m^{n_{1}})}.$ Now, the number of states is bounded by the product of
possible values to each component of a state, which is seen to be
$O(2^n.n^{n_1+n_2+n_3+n_4})$ where $n= m'm^{n_1}.$


\section{ {\ourlang} programs are  finite, definable, parametrized DTMCs}
\label{app:dtmc-def}

We show the proof of Theorem~\ref{thm:semantics}, namely that for any
{\ourlang} program $P_\epsilon$, $\sem{P_\epsilon}$ is a finite,
definable, parametrized DTMC.

\begin{proof}
From our definition of the DTMC semantics (Appendix~\ref{app:dtmcsemantics}),
it follows that $\sem{P_\epsilon}$ is a finite parameterized DTMC. We
now show that it is definable also.  In order to show this, we have to
show that the transition probabilities of $\sem{P_\epsilon}$ are
definable. Observe that, by definition, the transition probabilities
of $\chose(a\epsilon,\tilde{E})$ construct are definable. The other
probabilistic transitions arise as a result of comparison between random
variables of the same sort or from using the exponential mechanism. These
transition probabilities turn out to be from a special class of definable functions.  We
define this form next.

\begin{definition}
Let $p(\epsilon)= \sum^{m}_{i=1} a_i \epsilon^{n_i} \eulerv{\epsilon q_i}$ where each $a_i$ is a rational number, $n_i$ is a natural number and $q_i$ is a non-negative rational number. 
We shall call all such expressions \emph{pseudo-polynomials} in $\epsilon.$ 
Given a real number $b>0$ and
 a pseudo-polynomial $p(\epsilon)$, $p(b)$ is the real number  obtained by substituting $b$ for  $\epsilon.$  
The ratio of two pseudo-polynomials in $\epsilon$, $\frac {p_1(\epsilon)} {p_2(\epsilon)},$ shall be called a  \emph{pseudo-rational function} in $\epsilon$ if $p_2(b)\ne 0$ for all real $b>0.$  Given a real number $b>0$ and a pseudo-rational function 
$rt(\epsilon)= \frac {p_1(\epsilon)} {p_2(\epsilon)},$ $rt(b)$ is defined to be $ \frac {p_1(b)} {p_2(b)}$.  

\end{definition}
Observe that a pseudo-rational function $rt$ defines a function $f_{rt}$ from the set of strictly positive reals to the set of reals. We will henceforth confuse $f_{rt}$ with $rt.$ 
Pseudo-rational functions are easily seen to be closed under addition and multiplication.  
\begin{proposition}
Each pseudo-rational function $rt$ is definable in  the theory $\threal.$  
\end{proposition}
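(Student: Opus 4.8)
The plan is to show that an arbitrary pseudo-rational function $rt(\epsilon) = \frac{p_1(\epsilon)}{p_2(\epsilon)}$ is definable in $\threal$ by exhibiting an explicit defining formula $\varphi_{rt}(\epsilon, x)$ in $\lngreal$ and checking that it meets the requirements of Definition~\ref{def:definable-func}. The natural candidate is the formula that simply asserts the equation $p_2(\epsilon)\cdot x = p_1(\epsilon)$, since over the positive reals $p_2$ never vanishes, so this equation has the unique solution $x = p_1(\epsilon)/p_2(\epsilon) = rt(\epsilon)$.

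First I would address the only subtlety, which is that the exponents $q_i$ in a pseudo-polynomial $p(\epsilon) = \sum_{i=1}^m a_i\,\epsilon^{n_i}\,\eulerv{\epsilon q_i}$ are arbitrary \emph{non-negative rationals}, whereas the terms of $\lngreal$ only permit $\eulerv{\epsilon}$ applied to the single variable $\epsilon$ — that is, only integer powers $\eulerv{\epsilon}, \eulerv{2\epsilon}, \ldots$ are directly expressible as polynomial terms in the atom $\eulerv{\epsilon}$. Writing each rational exponent over a common denomimator $D \in \Nats$, so that every $q_i = c_i/D$ with $c_i \in \Nats$, I would introduce an auxiliary variable $y$ together with the defining constraints $y^{D} = \eulerv{\epsilon}$ and $y > 0$; these pin $y$ down uniquely as $\eulerv{\epsilon/D}$. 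This is exactly the device already used in the proof of Theorem~\ref{thm:decidability}, where the variable $z$ is introduced via $\eulerv{p\epsilon} = z^q$ to handle the $q$th root of $\euler$. With $y$ available, each factor $\eulerv{\epsilon q_i} = \eulerv{\epsilon c_i / D} = y^{c_i}$ becomes an honest integral-polynomial term over the variables $\{\epsilon, y\}$, and hence so do the numerator $p_1$ and denominator $p_2$ once their exponents are rewritten this way.

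The defining formula would then be
\[
\varphi_{rt}(\epsilon,x) \;=\; \exists y.\; \bigl( (y>0) \wedge (y^{D} = \eulerv{\epsilon}) \wedge (\hat{p}_2(\epsilon,y)\cdot x = \hat{p}_1(\epsilon,y)) \bigr),
\]
where $\hat{p}_1,\hat{p}_2$ denote $p_1,p_2$ after rewriting each $\eulerv{\epsilon q_i}$ as $y^{c_i}$. This is a formula of $\lngreal$ (after clearing the existential into prenex form), all of whose terms are integral polynomials with rational coefficients over $\{\epsilon, y, x, \eulerv{\epsilon}\}$, as required. The verification of correctness is then routine: for any $a \in (0,\infty)$, the constraints force $y = \eulerv{a/D}$ uniquely, whence $\hat{p}_i(a,y) = p_i(a)$; since $p_2(a) \neq 0$ by the hypothesis on pseudo-rational functions, the linear equation $\hat p_2(a,y)\cdot x = \hat p_1(a,y)$ has the unique solution $x = p_1(a)/p_2(a) = rt(a)$, so $\varphi_{rt}(a,b)$ holds iff $rt(a) = b$.

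The main (and essentially only) obstacle is the exponent-bookkeeping just described — ensuring that fractional exponents $q_i$ and fractional powers of $\epsilon$ appearing in pseudo-polynomials are re-expressed using integer powers of the auxiliary root variable $y$, so that every term genuinely lies in the restricted vocabulary of $\lngreal$. Once the common-denominator trick reduces everything to integral polynomials in $\epsilon$, $y$, and $\eulerv{\epsilon}$, there is nothing further to prove beyond the uniqueness argument above.
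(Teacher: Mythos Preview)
Your proposal is correct and follows essentially the same approach as the paper: take a common denominator $D$ for the rational exponents $q_i$, introduce an auxiliary variable constrained by $y^{D}=\eulerv{\epsilon}$ and $y>0$ to stand for $\eulerv{\epsilon/D}$, rewrite each $\eulerv{\epsilon q_i}$ as an integer power of $y$, and define $rt$ via the equation $\hat p_2(\epsilon,y)\cdot x = \hat p_1(\epsilon,y)$. One minor slip: in your final paragraph you mention ``fractional powers of $\epsilon$'', but by the definition of pseudo-polynomial the exponents $n_i$ on $\epsilon$ are already natural numbers, so only the exponents $q_i$ on $\euler$ need the common-denominator treatment; this does not affect the argument.
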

\begin{proof}
Let $rt(\epsilon)= \frac{\sum^{m}_{i=1} a_i \epsilon^{n_i} \eulerv{\epsilon q_i}}
                              {\sum^{m'}_{i=1} a'_i \epsilon^{n'_i} \eulerv{\epsilon q'_i}}.$ Let $N$ be the least common multiple of all denominators of $q_i,q_i'.$     Let  $p_i = q_i N$ and $p_i'=q_i' N.$ Let $a$ be the least common multiple of all denominators of $a_i,a_i'.$ Let $b_i=a a_i $ and $b_i' = a a_i'.$
                              It is easy to see that $rt$ is definable by the formula $\phi(x):$
                              $$\phi(x) \equiv  \forall z. ((x \sum^{m'}_{i=1} b'_i \epsilon^{n'_i} z^{p'_i} =  \sum^{m}_{i=1} b_i \epsilon^{n_i} z^{p_i}) \wedge (z^N =\eulerv \epsilon) \wedge (z>0)).  $$
 Note that in the above formula,   $z$ is the $N$th root of $\epsilon.$                           
\end{proof}

Now, it follows from our restriction on our scoring functions, namely that they take values in rationals,  that the transition probabilities in exponential mechanism are pseudo-rational functions that can be computed.

Let us now consider the case of comparison between random variables.  Let $state=(\ell, f_{Bool}, f_{\sdom}, f_{\s{int}}, f_{\s{real}}, C)$ of $\sem{P_\epsilon}$ be a 
state of $\sem{P_\epsilon}.$ 
 Recall that  when we compare random variables in $state$, we add a new linear comparison $e$ to $C$. 
Further, in order to compute transition probabilities, we compute the conditional probability that the set of linear comparison $C\cup e$ is true given that $C$ is true. For this, it suffices to show that we can compute the probability that the set of linear comparisons $C$ is true and the probability $C\cup e$ is true. We make the following observations:
\begin{itemize}
\item Since every random variable must be defined before it is used, we can simplify  $C$ and $C\cup e$ to only refer to program variables that were used in random assignments.  

\item All our random assignments sample from independent random variables. 
Since we never compare integer and real random variables, it suffices to compute the probability that a system of linear comparisons over integers with integer coefficients hold and the probability that  a system of linear comparisons over reals with rational coefficients hold. We will now show that these probabilities can be computed and are pseudo-rational functions.  

\item
In order to compute the probability that  a system of linear comparisons over reals with rational coefficients hold, we only need to consider systems of linear  inequalities. Clearly any equality $u_1= u_2$ can be written as two inequalities, $ u_1 \leq u_2$ and $u_2 \leq u_1.$ If a comparison in $C$ is $u_1 \ne u_2$ then we can consider 
the systems   $C_1=(C \setminus \set{u_1 \ne u_2})  \cup \set {u_1<u_2} $ and  $C_2= (C \setminus \set{u_1 \ne u_2})  \cup \set {u_2<u_1} , $ compute  probabilities  of $C_1$ and $C_2$ separately and add them up to compute the probability that $C$ holds. Thus,  without loss of generality we can assume that $C$ consists of only linear inequalities. 
\end{itemize}


\paragraph{\textbf{Probability of system of linear inequalities over integers.}}
Let $\overline Z= (Z_1,\ldots, Z_n) $ be a discrete random variable taking values in $\integers^n.$ Consider a finite system of linear inequalities $C$ with integer coefficients and with $n$ unknowns $Z_1,\ldots,Z_n$. A solution of $C$ is a tuple $\overline b = (b_1,\ldots,b_n) \in \integers^n$ such that all inequalities in $C$ are satisfied when each $Z_j\in C$ is replaced by $b_j.$ Let $sol(C)\subseteq \integers^n$ denote the set of all solutions of $C.$
The probability that $\overline Z$ satisfies $C$ is said to be the probability of the event
 $E=\set{\overline Z = \overline b   \mbar  \overline b \mbox{ is a solution of }   C }.$ 
 We denote this probability by $\Prob(\overline Z \models C).$ We have the following:
 
 \begin{lemma}
 \label{lem:intrandom}
Let $C$ be a   finite system of linear inequalities with integer coefficients and with $n$ unknowns $Z_1,\ldots,Z_n$.
Let  $Z_j= \DLap{a_j\epsilon, \mu_1}$, $\ldots$, $Z_n= \DLap{a_n\epsilon, \mu_n}$ be mutually independent discrete Laplacians such that  for each $1 \leq j\leq n$, $a_j$ is a strictly positive rational number and $\mu_j$ is an integer. Let $\overline Z=(Z_1,\ldots, Z_n)$. 
There is  a pseudo-rational function $rt_{\overline Z,C}$ in $\epsilon$ such that   $\Prob(\overline Z \models C)=rt_{\overline Z, C}.$ 
The function $rt_{\overline Z,C}$ can be computed from $C, (a_1, \mu_1), \ldots, (a_n, \mu_n).$\end{lemma}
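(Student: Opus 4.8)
The plan is to compute $\Prob(\overline{Z}\models C)$ as an absolutely convergent multivariate sum, linearize the absolute values coming from the discrete–Laplace p.m.f., and recognize the result as a lattice–point generating function of a rational polyhedron, which is a rational function of the sampling bases $\eulerv{-a_j\epsilon}$. Recall that $\DLap{a_j\epsilon,\mu_j}$ assigns to $b_j\in\integers$ the probability $w_j(\epsilon)\,\eulerv{-a_j\epsilon\abs{b_j-\mu_j}}$ with $w_j(\epsilon)=\frac{1-\eulerv{-a_j\epsilon}}{1+\eulerv{-a_j\epsilon}}$. By independence, $\Prob(\overline Z\models C)=\big(\prod_j w_j(\epsilon)\big)\sum_{\overline b\in \mathsf{sol}(C)}\prod_j \eulerv{-a_j\epsilon\abs{b_j-\mu_j}}$. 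Each $w_j(\epsilon)=\frac{\eulerv{a_j\epsilon}-1}{\eulerv{a_j\epsilon}+1}$ is a pseudo-rational function whose denominator does not vanish for $\epsilon>0$, and pseudo-rational functions are closed under products, so it suffices to show the remaining sum is pseudo-rational and computable. I would first translate $c_j=b_j-\mu_j$, turning $C$ into an integer-coefficient system $C'$ in the $c_j$ and the weight into $\prod_j \eulerv{-a_j\epsilon\abs{c_j}}$.

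Next I would eliminate the absolute values by a sign decomposition: partition $\integers^n$ into the at most $2^n$ regions fixing the sign of each $c_j$, assigning the boundary $c_j=0$ to one side. On each region $\abs{c_j}$ is linear, so substituting nonnegative variables $d_j$ with $c_j=\pm d_j$ rewrites $C'$ as an integer-coefficient system over $\integers_{\ge 0}^n$ and the weight as $\prod_j x_j^{d_j}$ with $x_j=\eulerv{-a_j\epsilon}\in(0,1)$. The sum becomes a finite sum of terms $T=\sum_{\overline d\in D\cap\integers_{\ge0}^n}\prod_j x_j^{d_j}$, where $D$ is the rational polyhedron cut out by the inequalities. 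Since the total mass over $\integers^n$ equals $1$, every such subsum converges absolutely for $\epsilon>0$, so all subsequent geometric manipulations are legitimate on $(0,1)^n$.

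The technical heart is to show each $T$ is a rational function of the $x_j$ whose denominator is a product of factors $1-(\text{monomial in the }x_j)$. One route uses the classical fact that the lattice-point generating function of a rational polyhedron is such a rational function and is effectively computable (triangulate the polyhedron and sum the simplicial cones); an elementary alternative is induction on $n$, summing the innermost variable $d_n$ over the interval $[L,U]$ determined by the others via $\sum_{d_n=L}^{U} x_n^{d_n}=\frac{x_n^{L}-x_n^{U+1}}{1-x_n}$ (or $\frac{x_n^L}{1-x_n}$ when $U=\infty$), after case-splitting on which affine bound is tight. The exponents $L,U$ are affine in the surviving $d_i$, so $x_n^{L}$ contributes monomials in the remaining $x_i$ and we stay within the class. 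The \emph{main obstacle}, and the place demanding genuine care, is the resonant case: when two geometric bases coincide a factor $1-x^0=0$ would spuriously appear, but then the corresponding variable must already be bounded by $C'$ (otherwise the series would diverge, contradicting convergence), and summing it instead yields a \emph{count}, i.e.\ a binomial coefficient in the affine bounds, rather than a vanishing denominator. Tracking these cases is exactly what guarantees that every surviving denominator factor has the form $1-\eulerv{-r\epsilon}$ with $r>0$, hence is nonvanishing for $\epsilon>0$, so $T$ is a bona fide pseudo-rational function.

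Finally I would convert back to $\epsilon$. Letting $N$ be a common multiple of the denominators of the positive rationals $a_j$, each $x_j=\eulerv{-a_j\epsilon}$ is a positive integer power of $y=\eulerv{-\epsilon/N}$, so $T$ is a rational function of $y$, i.e.\ a ratio of Laurent polynomials in $\eulerv{-\epsilon/N}$. Multiplying numerator and denominator by a suitable power of $\eulerv{\epsilon/N}$ rewrites it as a ratio of pseudo-polynomials with nonnegative exponents $q_i\in\frac1N\integers_{\ge0}$ and nonvanishing denominator on $\epsilon>0$; here the $\epsilon$-powers $n_i$ may all be taken to be $0$, since the discrete case produces only rational functions of $\eulerv{-\epsilon/N}$. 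Summing the finitely many regional contributions and multiplying by $\prod_j w_j(\epsilon)$ yields the desired $rt_{\overline Z,C}$, and since every step (translation, sign decomposition, polyhedral summation, substitution) is effective in the finite data $C,(a_1,\mu_1),\ldots,(a_n,\mu_n)$, the function is computable.
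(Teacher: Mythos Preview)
Your proposal is correct and reaches the same conclusion, but the route differs from the paper's in the key combinatorial step. After the identical reductions (translating by the means $\mu_j$ and splitting into the $2^n$ orthants), the paper appeals to the Presburger-arithmetic fact that the solution set of a system of linear inequalities over $\Nats^n$ decomposes as a finite \emph{disjoint} union of \emph{simple linear sets} $\{\overline{b}_0 + \sum_i k_i\overline{p}_i : k_i \in \Nats\}$ with unique representation. Because each period $\overline{p}_i$ lies in $\Nats^n\setminus\{0\}$ and every $a_j$ is positive, the geometric sum over a simple linear set immediately yields a denominator $\prod_i\bigl(1-\eulerv{-\epsilon\,\overline{p}_i\cdot\overline{a}}\bigr)$ with strictly positive exponents, so no resonance can ever arise and the pseudo-rational form is read off directly. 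You instead invoke the Barvinok/Brion theory of lattice-point generating functions (with a Fourier--Motzkin-style induction as an alternative). This is equally valid and ties into a well-developed algorithmic literature, but Brion's vertex-cone decomposition can produce denominator factors $1-\prod_j x_j^{v_j}$ with $v$ of mixed sign, so at $x_j=\eulerv{-a_j\epsilon}$ one may genuinely hit $\sum_j v_j a_j=0$; these poles cancel across cones, and your absolute-convergence argument is one way to certify that the cancellation must occur, but the paper's semilinear route sidesteps the issue entirely. One caveat on your ``elementary alternative'': when the coefficient of $d_n$ in some inequality is not $\pm 1$, the bound $L$ involves a ceiling and is not affine in the remaining $d_i$ without a further split by residue classes, which you do not mention; sticking to the Barvinok route avoids this gap.
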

\begin{proof}
For, each $1\leq j\leq n,$ consider $Y_j=\DLap{a_j\epsilon,0}.$ It is easy to see that $Z_j$ has the same distribution as $Y_j+\mu_j.$ Now consider the system of inequalities $C'$ in which each $Z_j$ is replaced by $Y_j+\mu_j.$ Let  $\overline Y=(Y_1,\ldots, Y_n).$ It is easy to see that $\Prob(\overline Z \models C)= \Prob(\overline Y \models C').$ 
This observation implies that it suffices to prove the Lemma in the special case that each $\mu_j=0.$ Thus, for the rest of the proof we assume that each $\mu_j=0.$

Now, consider a set $\pos\subseteq \set{1,\ldots,n}.$ Let $C_{\pos}$ be the system of inequalities $C \cup \set{Z_j\geq 0 \mbar j\in \pos} \cup  \set{Z_j< 0 \mbar j\not\in \pos}.$ It is easy to see that the set of solutions of $C$ is the \emph{disjoint} union $\cup_{\set{\pos\subseteq {1,\ldots,n}}} C_{\pos}.$ Thus, it suffices to the prove that for each $\pos\subseteq \set{1,\ldots,n},$ $\Prob(\overline Z \models C_\pos)$ is a pseudo-rational function that can be computed. 

Consider the system of inequalities  $C'_\pos$ obtained from $C_\pos$ by replacing each $Z_j$    by $Y_j$ for $j\in\pos$ and by $-Y_j$ for $j\not\in \pos.$  Let  $Y=(Y_1,\ldots, Y_n).$ 
From the fact that Laplacians are symmetric distributions, it follows each $Y_j$ has the same distribution as $Z_j$.  
Thus,  $\Prob(\overline Z \models C_\pos)= \Prob(\overline Y \models C'_\pos).$  Observe that the set of solutions of $C'_\pos$ are a subset of $\Nats^n$. 
Without loss of generality, we can also assume that the terms in each inequality  of $C'_\pos$ are rearranged so that the constant terms in $C'_\pos$ and the coefficients of  the variables $Y_j$ are  natural numbers, ie, non-negative integers.  

Therefore,  $C'_\pos$ is a system of linear inequalities with natural number coefficients.  We are interested in solutions of $C'_\pos$ over  natural numbers. For such system of inequalities, the set of solutions can be written as a \emph{disjoint} union of \emph{simple linear sets}~\cite{SimpleLinear}; a set $S\subseteq \Nats^n$ is said to be \emph{linear}  if there are tuples 
$\overline {b}_0,\overline{p}_1,\ldots,\overline{p}_m\in \Nats^n$ such that $S=\set{\overline{b}_0+\sum^m_{i=1} k_i \overline{p}_i \mbar \mbox{ for each i, }  k_i\in \Nats }$ and \emph{simple} if each $\overline b\in S$ has a unique representation 
as a sum $\overline{b}_0+\sum^m_{i=1} k_i \overline{p}_i.$ $\overline{b}_0$ is said to be the offset of  $S$ and $\overline{p}_1,\ldots,\overline{p}_m$ the periods of $S.$ From the fact that the set of solutions of $C'_\pos$ can be written as a \emph{disjoint} union of \emph{simple linear sets}, it follows that it suffices to show that $\Prob(\overline Y \in S \mbar S \mbox{ is simple linear})$ is a pseudo-rational function in $\epsilon$. 
In order to show this we need a couple of additional notations. 

For two $n$-tuples $\overline x= (x_1,\ldots,x_n)$ and $\overline y= (y_1,\ldots,y_n),$  $\overline x \cdot \overline y$ will denote the sum $\sum^n_{j=1} x_jy_j.$
Secondly, we will denote the tuple $(a_1,\ldots,a_n)$ by $\overline a.$
 
Fix a simple semilinear set $S.$ Let $\overline{b}_0$  be its offset and $\overline{p}_1,\ldots,\overline{p}_m$ its periods. 
Let $\kappa = \prod^n_{i=1}  \frac {1-\euler^{-a_i\epsilon}} 
    {1+\euler^{-a_i\epsilon}}.$ 
From the fact that each $\overline b\in S$ has a unique representation as a sum $\overline{b}_0+\sum^m_{i=1} k_i \overline{p}_i,$ it follows that 
$$\begin{array}{lcl}
\Prob(\overline Y \in S) &=& \sum_{k_1\in \Nats}\cdots \sum_{k_m\in \Nats}  \Prob(\overline Y = \overline{b}_0+\sum^m_{i=1} k_i \overline{p}_i )\\
                             &=& \sum_{k_1\in \Nats}\cdots \sum_{k_m\in \Nats}   
  \kappa\;  \eulerv  {-\epsilon (\overline{b}_0\cdot \overline a+ k_1 \overline{p}_1\cdot \overline a +\cdots+ k_m \overline{p}_m\cdot \overline a)}\\
                             &=& \kappa\; (\eulerv{-\epsilon \overline{b}_0\cdot \overline a}) \\
                             && \hspace*{0.5cm}(\sum_{k_1\in \Nats}\eulerv{-\epsilon k_1 \overline{p}_1 \cdot \overline a})\cdots (\sum_{k_m\in \Nats}\eulerv{-\epsilon k_m \overline{p}_1 \cdot \overline a})  \\
                             &=&  \kappa\; (\eulerv{-\epsilon \overline{b}_0 \cdot \overline a})(\frac 1 {1-\eulerv {-\epsilon \overline{p}_1\cdot \overline a}}) \cdots (\frac 1 {1-\eulerv {-\epsilon \overline{p}_m\cdot \overline a}}) 
\end{array}$$ 
The latter is clearly a pseudo-rational function in $\epsilon.$ 
\end{proof}

\paragraph{\textbf{Probability of system of linear inequalities over reals.}}
Let $\overline R= (R_1,\ldots, R_n) $ be a continuous random variable taking values in $\Reals^n.$ Consider a finite system of linear inequalities $C$ with rational coefficients and with $n$ unknowns $R_1,\ldots,R_n$. As in the case of discrete random variables , we can define $sol(C)\subseteq \Reals^n$, the set of solutions, and  $\Prob(\overline R \models C),$ the probability that $\overline R$ satisfies C. We have the following result. 
\begin{lemma}
  \label{lem:realrandom}
Let $C$ be a   finite system of linear inequalities with rational coefficients and with $n$ unknowns $R_1,\ldots,R_n$.
Let  $R_1= \Lap{a_1\epsilon, \mu_1}$, $\ldots$, $R_n= \Lap{a_n\epsilon, \mu_n}$ be mutually independent Laplacian doistributions such that  for each $1 \leq j\leq n$, $a_j$ is a strictly positive rational number and $\mu_j$ is a rational number. Let $\overline R =(R_1,\ldots, R_n)$. 
There is  a pseudo-rational function $rt_{\overline R,C}$ in $\epsilon$ such that   $\Prob(\overline  R \models C)=rt_{\overline R, C}.$ 
The function $rt_{\overline R,C}$ can be computed from $C, (a_1, \mu_1), \ldots, (a_n, \mu_n).$\end{lemma}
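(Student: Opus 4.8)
The plan is to follow the structure of the discrete argument in Lemma~\ref{lem:intrandom}, replacing summation over simple linear sets by integration over polyhedral cells. First I would center the variables: writing $R_j = S_j + \mu_j$ with $S_j = \Lap{a_j\epsilon,0}$ and substituting into $C$ produces a system $C'$ with rational coefficients satisfying $\Prob(\overline R \models C) = \Prob(\overline S \models C')$, so I may assume $\mu_1 = \cdots = \mu_n = 0$. The joint density of $\overline S$ is then $\kappa\,\eulerv{-\epsilon\sum_j a_j\abs{s_j}}$ with $\kappa = \prod_j \tfrac{a_j\epsilon}{2}$. Next I would split $\Reals^n$ into the $2^n$ orthants indexed by sign patterns $\pos \subseteq \set{1,\dots,n}$; their shared boundaries have Lebesgue measure zero and can be assigned arbitrarily. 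On the orthant selected by $\pos$ every $\abs{s_j}$ becomes linear, and the substitution $t_j = s_j$ for $j\in\pos$ and $t_j = -s_j$ otherwise (Jacobian $1$) carries $\set{\overline s \models C'}$ intersected with that orthant to a rational polyhedron $P_\pos \subseteq \Reals_{\geq 0}^n$ and the density to $\kappa\,\eulerv{-\epsilon\tuple{\overline a,\overline t}}$, where $\overline a = (a_1,\dots,a_n)$ is strictly positive. Everything thus reduces to showing that $\int_{P_\pos}\eulerv{-\epsilon\tuple{\overline a,\overline t}}\,d\overline t$ is a computable pseudo-rational function; summing $\kappa$ times these integrals over all $\pos$ then gives $\Prob(\overline R \models C)$, using closure of pseudo-rational functions under addition and multiplication.

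For the core integral I would decompose the pointed rational polyhedron $P_\pos$ into finitely many product cells, each a Minkowski sum of a bounded rational simplex and a simplicial cone spanned by recession rays of $P_\pos$; this is the continuous analogue of the simple linear sets of Lemma~\ref{lem:intrandom}, with recession rays playing the role of the periods and the simplex vertices the role of the offsets. Over such a cell the integral factors into a bounded-simplex integral, which converges trivially, and conical factors $\int_0^\infty \eulerv{-\epsilon\tuple{\overline a,u}\lambda}\,d\lambda = \tfrac{1}{\epsilon\tuple{\overline a,u}}$, convergent precisely because $\tuple{\overline a,u} > 0$. In the purely conical case, with rational apex $v$ and independent recession rays $u_1,\dots,u_n$, the substitution $\overline t = v + \sum_i\lambda_i u_i$ yields the representative closed form
\[
\int \eulerv{-\epsilon\tuple{\overline a,\overline t}}\,d\overline t \;=\; \abs{\det[u_1\cdots u_n]}\;\eulerv{-\epsilon\tuple{\overline a,v}}\prod_{i=1}^{n}\frac{1}{\epsilon\tuple{\overline a,u_i}}.
\]
The positivity $\tuple{\overline a,u_i} > 0$ is the crux: each $u_i$ is a nonzero recession ray of $P_\pos \subseteq \Reals_{\geq 0}^n$, hence nonnegative, and $\overline a$ is strictly positive, so $\tuple{\overline a,u_i} = \sum_k a_k (u_i)_k > 0$. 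Consequently every cell integral is pseudo-rational: $\eulerv{-\epsilon\tuple{\overline a,v}}$ is the reciprocal of the pseudo-polynomial $\eulerv{\epsilon\tuple{\overline a,v}}$ (with $\tuple{\overline a,v} \geq 0$ rational, since $v\in\Reals_{\geq 0}^n$ is rational), each $\tfrac{1}{\epsilon\tuple{\overline a,u_i}}$ is $1$ over the pseudo-polynomial $\epsilon\tuple{\overline a,u_i}$, which is nonzero for all $\epsilon > 0$, and the bounded part contributes only further factors of the same kind.

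The step I expect to be the main obstacle is this polyhedral decomposition together with the attendant convergence bookkeeping. The danger is divergence: a careless integration, say iterated integration in a fixed variable order, can produce an intermediate integral over an unbounded range whose accumulated exponent coefficient has turned negative. The decomposition above sidesteps this by making every unbounded direction of every cell a genuine recession ray of $P_\pos$, for which $\tuple{\overline a,u_i} > 0$ guarantees both convergence and a denominator $\epsilon\tuple{\overline a,u_i}$ that never vanishes on $(0,\infty)$ --- exactly what the definition of a pseudo-rational function demands. This is why the centering to $\mu_j = 0$ and the orthant substitution are essential rather than cosmetic: they confine each $P_\pos$ to the nonnegative orthant, where the strictly positive $\overline a$ forces every recession ray to pair positively with it; absent centering, a recession ray could be orthogonal to $\overline a$ and the corresponding integral would diverge. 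Finally, every step --- centering, orthant split, the rational cell decomposition, and the per-cell closed form --- is effective, so $rt_{\overline R, C}$ is computable from $C$ and the parameters $(a_1,\mu_1),\dots,(a_n,\mu_n)$; together with the Proposition above, pseudo-rationality also gives definability in $\threal$, completing the continuous counterpart of Lemma~\ref{lem:intrandom}.
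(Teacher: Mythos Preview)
Your opening reductions---centering to $\mu_j=0$ and splitting into sign orthants so that the integrand becomes $\kappa\,\eulerv{-\epsilon\langle\overline a,\overline t\rangle}$ on a rational polyhedron $P_\pos\subseteq\Reals_{\geq 0}^n$---match the paper's proof exactly. The divergence comes in how the integral over $P_\pos$ is evaluated. The paper cites a decomposition of $P_\pos$ into \emph{positive repetitive polyhedra}, i.e.\ cells of the form $\{h_0^-\leq x_1\leq h_0^+,\ldots,h_{n-1}^-(x_1,\ldots,x_{n-1})\leq x_n\leq h_{n-1}^+(\cdot)\}$ with linear $h^\pm$ (upper bounds possibly $\infty$), and computes the integral by straight iterated integration, proving by induction that every intermediate result is a finite sum of terms $a\epsilon^m\eulerv{b\epsilon}\prod_j x_j^{m_j}\eulerv{\epsilon b_j x_j}$. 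Convergence at an unbounded upper limit is argued by contradiction: after simplification, a surviving term with nonnegative exponent coefficient would force the nested integral to $\pm\infty$, contradicting that the whole thing is a probability. Your convergence argument---that every recession ray of $P_\pos$ lies in $\Reals_{\geq 0}^n$ and therefore pairs strictly positively with the strictly positive vector $\overline a$---is cleaner and more direct than this contradiction step.

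The place where your proposal needs more work is exactly the one you flag: the existence of a decomposition of $P_\pos$ into cells that are Minkowski sums $\Delta+\sigma$ with $\Delta$ a bounded simplex, $\sigma$ a simplicial cone generated by \emph{recession rays of $P_\pos$}, and the parametrization $\Delta\times\sigma\to\Delta+\sigma$ bijective (so the integral genuinely factors). This is not the Brion/Lawrence vertex-cone decomposition---there the cone generators are edge directions of the tangent cone, which need not be recession rays (e.g.\ for $P=\{x,y,z\geq 0,\ x+y+z\geq 1\}$ the tangent cone at $(1,0,0)$ contains the ray $(-1,1,0)$), so your positivity argument would not apply to those cells. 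A decomposition of the specific shape you describe is plausible but is not a standard textbook statement, and constructing it for a general pointed polyhedron requires real work. The paper sidesteps this by using the repetitive-polyhedra decomposition (for which it has an external reference) and pushing the convergence issue into the global probability bound instead of into the geometry of the cells.
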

\begin{proof}
As in the proof of Lemma~\ref{lem:intrandom}, it suffices to consider the case when each $\mu_i=0$ and to  
show that the probability measure of the set $Sol= sol(C) \cap \set{(b_1,\ldots,b_n) \mbar b_i\in \Reals^{>0} }$ is a computable pseudo-rational function. 

Since $\overline R$ is continuous, we can also assume that each inequality is of the form $\leq.$ This is because the measure of any set in $\Reals^n$ that satisfies a linear equation over $n$ unknowns   $R_1,\ldots,R_n$ is $0.$
 There are computable finite sets $S_1,\ldots,S_m $ such that (See~\cite{Polytope})
\begin{enumerate}
\item $Sol= S_1\cup\ldots S_m,$
\item The measure of the $S_i \cap S_j $ is $0$ for $i\ne j,$ and
\item  Each $S_i$ is a positive repetitive polyhedra. $S\subseteq (\Reals^{>0})^n$ is said to be a positive repetitive polyhedra if  there are constants $h^-_0, h^+_0$  and functions $h^-_1(x_1),\,$ $ h^+_1(x_1),\, h^-_2(x_1,x_2),\, $  $h^+_2(x_1,x_2),\, \ldots,\, h^-_{n-1} (x_1,x_2,\ldots x_{n-1}),$ 
 $h^+_{n-1} (x_1,x_2,\ldots x_{n-1})$ such that 
\begin{itemize}
\item $S_i =\begin{array}{l}\set{(x_1,\ldots, x_n)\;\mbar\; h^-_0 \leq x_1\leq  h^+_0,   \ldots, \\ \hspace*{0.6cm} h^-_{n-1} (x_1,\ldots x_{n-1}) \leq x_n \leq h^+_{n-1} (x_1,\ldots x_{n-1})}.\end{array}$
\item $h^-_0$ is a rational number $\geq 0.$ 
\item $h^+_0$ is either $\infty$ or a rational number.
\item For each $1\leq j\leq n,$ $h^-_j$ is a linear function  in its arguments.  In the latter case,  $h^-_j$ has  rational coefficients.
\item  For each $1\leq j\leq n,$ $h^+_j$ is either $\infty$ or a linear function  in its arguments.  $h^+_j$ has  rational coefficients in the latter case. 
\item For each $1\leq j\leq n,$ $h^-_j \ne h^+_j.$
\end{itemize}

\end{enumerate}
Thanks to conditions (1) and (2) above, it suffices to show that for any positive repetitive polyhedra $S$, the probability measure  of the event $\set{\overline R=\overline b \mbar \overline b \in S}$ is a pseudo-rational function. 

Fix $S$ and let $h^-_0, h^+_0, h^-_1,h^+_1,\ldots, h^-_{n-1}, h^+_{n-1}$ be as above, 
 The measure of the event $\set{\overline R=\overline b \mbar \overline b \in S}$ can be computed using  the nested integral  $$ F= \int^{h^+_0}_{h^-_0} f_{a_1} (x_1)\int^{h^+_1}_{h^-_1} f_{a_2} (x_2 ) \cdots \int^{h^+_{n-1}}_{h^-_{n-1}} f_{a_n} (x_n )\,  d{x_{n}} \cdots d{x_1} $$
 where $f_{a_i}(x_i)=\frac {a_i \epsilon} 2 \eulerv{- a_i\epsilon  x_i}$ is the pdf of $R_i$ (we always have that $x_i\geq 0$) and the arguments of $h^+_i,h^-_i$ are omitted for readability.  
 
For $1\leq j\leq n$, let  $I_{j}$ be the nested integral $$I_{j}=\int^{h^+_{j-1}}_{h^-_{j-1}} f_{a_j} (x_j ) \cdots \int^{h^+_{n-1}}_{h^-_{n-1}} f_{a_n} (x_n )\,  d{x_{n}} \cdots d{x_j}.$$
We claim by induction on  $k=n-j$ that $I_j$ is a finite sum of terms of the form $$ a  \epsilon ^m \eulerv {b \epsilon} (x^{m_1}_1 \eulerv{\epsilon b_1 x_1}) \ldots  (x^{m_{j-1}}_{j-1} \eulerv{\epsilon b_{j-1} x_{j-1}})$$
where $a,b, b_1,\ldots, b_{j-i} $ are rational numbers (including negative numbers), $m$ is an integer, and $m_1,\ldots, m_{j-1}$ are natural numbers.  We will assume that the sum is always presented in simplest form, namely, that all cancellations have already taken place in the sum. 

 Clearly the claim is true when $k=0.$ Suppose that the claim is true for $k=k_0.$  Let  $j_0=n-k_0.$
Suppose $$ w= a  \epsilon ^m \eulerv {b \epsilon} (x^{m_1}_1 \eulerv{\epsilon b_1 x_1}) \ldots  (x^{m_{j_0-1}}_{j_0-1} \eulerv{\epsilon b_{j_0-1} x_{j_0-1}})$$
  is a summand in $I_{j_0}.$ Let $k=k_0+1$ and $j=n-k=n-k_0-1= j_0-1.$

Consider the indefinite integral $$\begin{array}{lcl}
J&=&\int f_{a_{j_0-1}} w\, dx_{j_0-1}\\
              &=&   \int \frac {a_{j_0-1} \epsilon} 2 \, \eulerv{- a_{j_0-1}\epsilon x_{j_0-1}}\, w\, dx_{j_0-1}\\
              &=&   \frac{ a_{j_0-1}} 2  \epsilon ^{m+1} \eulerv {b \epsilon} (x^{m_1}_1 \eulerv{\epsilon b_1 x_1}) \ldots  (x^{m_{j_0-2}}_{j_0-2} \eulerv{\epsilon b_{j_0-2} x_{j_0-2}}) \\
               && \hspace*{0.8in}\int  x^{m_{j_0-1}}_{j_0-1} \eulerv{\epsilon (b_{j_0-1} - a_{j_0-1}) x_{j_0-1}} d{x_{j_0-1}} \end{array} $$
       Let $$J'=\int  x^{m_{j_0-1}}_{j_0-1} \eulerv{\epsilon (b_{j_0-1} - a_{j_0-1}) x_{j_0-1}} d{x_{j_0-1}} .$$        
             Now, if   $b_{j_0-1} - a_{j_0-1}=0$ then  $$J'=\frac {x^{m_{j_0-1}+1}_{j_0-1}} {m_{j_0-1+1}}.$$
  If $b_{j_0-1} - a_{j_0-1} \ne 0$ then by doing a change of variables $t=(b_{j_0-1} - a_{j_0-1}) \epsilon x_{j_0-1},$ it is not too hard to show that 
  $$J'= \sum^{m_{j_0-1}}_{k=0} {c_k} {\epsilon^{t_k}}  x_{j_0-1}^k   \eulerv{\epsilon (b_{j_0-1} - a_{j_0-1}) x_{j_0-1}}$$
where $c_k$ is a rational number  and $t_k$ an integer for each $k.$

Thus, the indefinite integeral $J$ is a sum, each of whose terms is of the form   $$a'\epsilon ^{m'} \eulerv {b' \epsilon} (x^{m'_1}_1 \eulerv{\epsilon b'_1 x_1}) \ldots  (x^{m'_{j_0-1}}_{j_0-1} \eulerv{\epsilon b'_{j_0-1} x_{j_0-1}}).$$ If $h^-_{j_0-2}$ and $h^+_{j_0-2}$ are linear functions, we get immediately that $I_{j}= \int^{h^+_{j_0-2}}_{h^-_{j_0-2}}  f_{a_{j_0-1}} w\, dx_{j_0-1} $ is of the right form. The induction step follows in this case.  

If $h^+_{j_0-2} =\infty,$ and each $b'_j$ in a summand of $J$ is strictly negative, then it is also easy to see that the induction step follows. Apriori, it seems that there might be a problem when $b'_j\geq 0$ as in this case, $I_{j}$ will evaluate to either $\infty$ or $-\infty.$  This, however, will contradict the fact that the nested integral $F$ defines probability of an event (and hence is bounded above by $1$).  Thus, if $h^+_{j_0-2} =\infty$ then $b_j$ must be strictly negative. 

The claim immediately implies that the measure of the set $Sol= sol(C) \cap \set{(b_1,\ldots,b_n) \mbar b_i\in \Reals^{>0} }$ is a pseudo-rational function.
\end{proof}
\end{proof}

\section{Reachability in Parametrized DTMCs}
\label{app:dtmc}

In this section we will prove Lemma~\ref{lem:dtmc}. Let us first
recall how reachability probabilities are computed in
(non-parametrized) finite-state DTMCs. Recall that a
(non-parametrized) DTMC is a pair $(Q,\delta)$ where $Q$ is a finite
set of states, and $\delta: Q \times Q \to [0,1]$ is such that for
every $q \in Q$, $\sum_{q' \in Q} \delta(q,q') = 1$. So in a DTMC the
transition probabilities are fixed, and are not functions of a
parameter. The probability of reaching a set of states $Q' \subseteq
Q$ from a state $q_0$ is computed by solving a more general problem,
namely, the problem of computing the probability of reaching $Q'$ from
each state $q \in Q$. Let the variable $x_q$ denote the probability of
reaching $Q'$ from state $q$. One simple observation is that if $q \in
Q'$ then $x_q = 1$. Second, if $Q_0$ denotes the set of all states
from which $Q'$ is not reachable in the underlying graph (i.e., one
where we ignore the probabilities and just have edges for all
transitions that are non-zero), then $x_q = 0$ if $q \in Q_0$. Now the
set $Q_0$ can be computed by performing a simple graph search on the
underlying graph. For states $q \not\in (Q' \cup Q_0)$, we could write
$x_q$ as $x_q = \sum_{q'\in Q} \delta(q,q')x_{q'}$. This gives us the
following system of linear equations.
\[
\begin{array}{ll}
x_q = 1 & \mbox{if } q \in Q'\\
x_q = 0 & \mbox{if } q \in Q_0\\
x_q = \sum_{q'\in Q} \delta(q,q')x_{q'} & \mbox{otherwise}
\end{array}
\]
The above system of linear equations can be shown to have a unique
solution, with the solution giving the probability of reaching $Q'$
from each state $q$.

Now let us consider a parametrized DTMC $\cD = (\states,
\ptransf)$. Let $\varphi_{\state\state'}$ be a $\lngreal$ formula that
defines the function $\ptransf(\state,\state')$. Recall that in the
algorithm outlined in the previous paragraph, one crucial step is to
compute the set of states that have probability $0$ of reaching the
target set. This requires knowing the underlying graph of the DTMC,
i.e., knowing which transitions have probability 0 and which ones have
probability $> 0$. In a parametrized DTMC this is challenging because
the probability of transitions depends on the value of $\epsilon$, and
our goal is to compute the reachability probability as a function of
$\epsilon$. We will overcome this challenge by ``guessing'' the
underlying graph.

Let $C \subseteq \states \times \states$. We will construct a formula
$\varphi_C$ that will capture the constraints that reachablity
probabilities need to satisfy under the assumption that the
probability of edges in $C$ is $0$, and those outside $C$ is $>
0$. Based on the assumption that $C$ is exactly the set of 0
probability edges, we can compute the set $\states_0^C$ of states that
cannot reach $\states'$. The formula $\varphi_C$ will have variables
that will have the following intuitive interpretations ---
$p_{\state\state'}$ the probability of transitioning from $\state$ to
$\state'$; $x_\state$ the probability of reaching $Z'$ from state
$\state$.
\[
\begin{array}{rl}
\varphi_C = 
 & \hspace*{-0.1in}
  \bigwedge_{(\state,\state') \in C} (p_{\state\state'} = 0)
  \wedge
  \bigwedge_{(\state,\state') \not\in C} (p_{\state\state'} > 0)
  \wedge
  \bigwedge_{\state \in \states'} (x_{\state} = 1)\\
 & \hspace*{-0.1in} 
  \wedge
  \bigwedge_{\state \in \states_0^C} (x_{\state} = 0)
  \wedge
  \bigwedge_{\state \not\in (\states' \cup \states_0^C)}
           (x_{\state} = \sum_{\state'} p_{\state\state'}x_{\state'}).
\end{array}
\]
Notice that $\varphi_C$ is a formula in $\lngreal$. $\varphi_C$ can be
used to construct the formula we want. To construct the formula
$\varphi_{\state_0\states'}$ that characterizes the probability of
reaching $\states'$ from $\state_0$, we need to account for two
things. First, we need to ensure that $p_{\state\state'}$ is indeed
the probability of transitioning from $\state$ to $\state'$. Second,
we need to account for the fact that we don't know the exact set of
edges with probability $0$. Based on these observations, we can define
$\varphi_{\state_0,\states'}$ as follows.
\[
\varphi_{\state_0\states'} = [\exists x_\state]_{\state \neq \state_0}
  [\exists p_{\state\state'}]_{\state,\state' \in \states}
  \bigwedge_{\state,\state' \in \states} 
         \varphi_{\state\state'}(\epsilon,p_{\state\state'}) \wedge
  \left(\bigvee_{C \subseteq \states \times \states} \varphi_C\right)
\]
In the above definition of $\varphi_{\state_0\states'}$ all variables
except $x_{\state_0}$ (and $\epsilon$) are existentially
quantified. Notice, that $\varphi_{\state_0\states'}$ is in $\lngreal$
provided we pull all the quantifiers to get it in prenex form. Given
that $\states_0^C$ can be effectively constructed for any set $C$, the
above formula can also be computed for any parametrized DTMC $\cD$.

\section{Syntax of {\finlang} programs}
\label{app:finlang}

The syntax of  {\finlang} programs is presented in Figure~\ref{fig:BNF-finite}.
\begin{figure}
\begin{framed}
\raggedright

Expressions ($\bv\in \cB, \dv\in \cX, d\in \sdom, g\in \sF_{Bool}, f\in \sF_\sdom$):
\[
\begin{array}{lcl}
B&::=& \true \mbar \false \mbar \bv \mbar not(B) \mbar B\ and\ B \mbar B \ or \ B \mbar g(\tilde{E})\\
E&::=&   d \mbar \dv  \mbar f (\tilde{E}) \\
\end{array}$$

Basic Program Statements ($a\in \Rats^{>0}$, $\sim\in \set{<,>,=, \leq, \geq}$, $F$ is  a scoring function and $\chose$ is a user-defined distribution):
$$\begin{array}{lll}
 s&::= &      \dv\gets E \mbar  \bv\gets B \mbar      \dv\gets\pexp {a\epsilon, F(\tilde{\dv}), E} \mbar  \\
      &&   \dv\gets \chose(a\epsilon, \tilde{E}) \mbar 
               \ifstatement B P P \mbar \\ &&  \whilestatement B P \mbar     \ext\\
\end{array}
\]

Program Statements ($\ell \in \mathsf{Labels}$)
$$\begin{array}{lcl}
 P&::= &  \ell:\ s \mbar  \ell:\ s\, ;\,  P\\
\end{array} $$
%


\end{framed}
\caption{BNF grammar for {\finlang}. $\sdom$ is a finite discrete
  domain. $\sF_{Bool}$, ($\sF_{\sdom}$ resp) are set of functions that
  output Boolean values ($\sdom$ respectively).  $\cB,\cX$ are the
  sets of Boolean variables, and $\sdom$ variables,
  respectively. $\mathsf{Labels}$ is a set of program labels. For a
  syntactic class $S$, $\tilde{S}$ denotes a sequence of elements from
  $S$.}
\label{fig:BNF-finite}
\end{figure}

\subsection{A general semantic class of programs}
\label{app:generalsemantic}
Our methods imply decidability of checking differential privacy for a large
semantic class of programs (which include $\ourlang.$) 
A sufficient condition to ensure the decidability of checking
differential privacy is to consider programs with the property that
for each input, the probability distribution on the outputs is
definable in $\threal$: 
\begin{definition}
\label{def:definable dist}
A parametrized program $P_\epsilon$ with inputs $\cU$ and outputs
$\cV$ is said to identify a \emph{definable distribution} on $\cV$ if
for each $\fb{in} \in \cU$ and $\fb{out} \in \cV$ the function
$\epsilon \mapsto \Prob(P_\epsilon(\fb{in}) = \fb{out})$ is
definable in $\threal$.

A parametrized program $P_\epsilon$ with inputs $\cU$ and outputs
$\cV$ is said to \emph{effectively} identify a definable distribution on
$\cV$ if there is an algorithm $\cA$ such that for each
$\fb{in} \in \cU$ and $\fb{out} \in \cV$, $\cA$ outputs a
formula $\varphi_{\fb{in},\fb{out}}(\epsilon,x)$ in $\lngreal$
that defines the function
$\epsilon \mapsto \Prob(P_\epsilon(\fb{in}) = \fb{out})$.
\end{definition}

We can conclude by a proof similar to the proof Theorem~\ref{thm:decidability}.

\begin{theorem}
\label{thm:sem-decide}
The {\problemone} and {\problemtwo} problems are decidable for
programs $P_\epsilon$ that effectively identify a definable
distribution, rationals $t\in \Rats^>0$ and definable functions $\delta$ (in the case of the
{\problemtwo} problem). Furthermore, if $P_\epsilon$  is not $(t\epsilon,\delta)$ differentially private for some admissible value of $\epsilon$
then we can compute a counter-example. 
\end{theorem}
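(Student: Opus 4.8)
The plan is to mirror the proof of Theorem~\ref{thm:decidability} almost verbatim, changing only the single step that produced the formulas for the individual output probabilities. The key observation is that the decidability reduction in the proof of Theorem~\ref{thm:decidability} never truly relied on the finite-state DTMC machinery: it used only two facts about each input/output pair $(\fb{in},\fb{out})$, namely that the function $\epsilon \mapsto \Prob(P_\epsilon(\fb{in}) = \fb{out})$ is definable in $\threal$, and that a defining $\lngreal$ formula can be effectively computed. Theorem~\ref{thm:semantics} together with Lemma~\ref{lem:dtmc} were invoked precisely to supply these two facts for {\ourlang} programs. Under the present hypothesis, $P_\epsilon$ effectively identifies a definable distribution, so the algorithm $\cA$ of Definition~\ref{def:definable dist} delivers exactly these formulas $\varphi_{\fb{in},\fb{out}}(\epsilon,x)$ directly, bypassing the DTMC construction entirely.

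Concretely, I would first run $\cA$ on every pair $(\fb{in},\fb{out}) \in \cU \times \cV$; this is a finite set of calls since $\cU$ and $\cV$ are finite, and each call returns a formula $\varphi_{\fb{in},\fb{out}}(\epsilon,x_{\fb{in},\fb{out}})$ of $\lngreal$ defining $\epsilon \mapsto \Prob(P_\epsilon(\fb{in}) = \fb{out})$. Next, using the formula $\varphi_\delta$ defining the definable function $\delta$ and writing $t = p/q$ with $p,q \in \Nats$, I would assemble the very same sentence $\psi$ as in the proof of Theorem~\ref{thm:decidability}: it universally quantifies over $\epsilon$, an auxiliary variable $z$ with $\eulerv{p\epsilon} = z^q$ and $z>0$ (so that $\eulerv{t\epsilon}$ is expressible without taking $q$th roots of $\euler$ inside $\lngreal$), all of the probability variables, and $x_\delta$; its matrix asserts that whenever these variables take their intended values, the inequality $\sum_{\fb{out}\in O} x_{\fb{in}_1,\fb{out}} < z \sum_{\fb{out}\in O} x_{\fb{in}_2,\fb{out}} + x_\delta$ holds for every adjacent pair $(\fb{in}_1,\fb{in}_2)\in\Phi$ and every $O \subseteq \cV$. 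As before, $P_\epsilon$ is $(t\epsilon,\delta(\epsilon))$-differentially private for all admissible $\epsilon$ exactly when $\psi$ is valid over the reals, and since $\psi$ lies in $\lngreal$ after conversion to prenex form, decidability follows from the McCallum--Weispfenning decidability of $\threal$. The {\problemone} variant is handled by restricting $\epsilon$ to the given constant, and a counter-example is extracted from the witnessing $\epsilon_0$ returned by the decision procedure, enumerating the finitely many choices of $\fb{in},\fb{in'},O$ to exhibit a violating quadruple.

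I do not expect a genuine obstacle here, because the whole argument factors through definability-plus-computability of the output distribution, which is now an explicit hypothesis rather than a derived property. The one point requiring care is bookkeeping: I must verify that the construction of $\psi$ used nothing about the finite-state DTMC beyond what Lemma~\ref{lem:dtmc} guaranteed, so that substituting $\cA$'s output for the DTMC-derived formulas is sound, and that the resulting sentence still lies in the fragment $\lngreal$. Since $\cA$ returns formulas of $\lngreal$ by definition, this closure is immediate, and the proof essentially reduces to citing Theorem~\ref{thm:decidability} with its first step replaced by an appeal to Definition~\ref{def:definable dist}.
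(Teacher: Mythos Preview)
Your proposal is correct and takes essentially the same approach as the paper: the paper's own proof is a single sentence stating that one concludes ``by a proof similar to the proof [of] Theorem~\ref{thm:decidability},'' and you have accurately identified that the only change needed is to obtain the defining formulas $\varphi_{\fb{in},\fb{out}}$ from the algorithm $\cA$ of Definition~\ref{def:definable dist} rather than from the DTMC construction via Theorem~\ref{thm:semantics} and Lemma~\ref{lem:dtmc}. Your bookkeeping observations---that the sentence $\psi$ remains in $\lngreal$ and that counter-example extraction proceeds unchanged---are exactly the points that make the adaptation go through.
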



\section{Detailed Experimental Results}
\label{app:experiments}

We implemented a simplified version of the algorithm, presented
earlier, for proving/disproving differential privacy of {\ourlang}
programs. Our tool {\tool}~\cite{github} handles loop-free programs,
i.e., acyclic programs. Programs with bounded loops (with constant
bounds) can be handled by unrolling loops. The tool takes in an input
program $P_\epsilon$ parametrized by $\epsilon$, and either proves
$P_\epsilon$ to be differentially private for all $\epsilon$ or
returns a counter-example. The tool can also be used to check
differential privacy for a given, fixed $\epsilon$, or to check for
$k\epsilon$-differential privacy for some constant $k$. The design of
the tool will be discussed in detail in Section~\ref{sec:working}.

\subsection{Examples}
\label{sec:ex}

We used various examples to measure the effectiveness of our
tool. These include SVT~\cite{lyu2016understanding, DR14}, Noisy
Maximum~\cite{DingWWZK18}, Noisy Histogram~\cite{DingWWZK18} and
Randomized Response~\cite{DNRRV09}. Pseudocodes for all variants of
these examples that we tried are given in this section for
completeness. Though the pseudo-codes don't strictly adhere to the
syntax of {\ourlang} programs, they can easily be rewritten to fit the
syntax.

\paragraph{Sparse Vector Technique (SVT)}
We looked at six different variants of the Sparse Vector Technique
(SVT).  Algorithms addressed as SVT1-6, are Algorithms 1-6
in~\cite{lyu2016understanding}, respectively. In these programs, the
array $q$ represents the input queries. The array $out$ represents the
output array, $\bot$ represents False and $\top$ represents True.  In
all our experiments, we set the threshold $T=0.$ SVT1 was previously
introduced in this paper as Algorithm~\ref{fig:SVT} on
page~\pageref{fig:SVT}. The adjacency relation $\Phi$ we used is given
by $(q_1,q_2) \in \Phi$ if and only if $\abs{q_1[i]-q_2[i]}\leq 1$ for
all $i$. While \subref{algo-svt1} and \subref{algo-svt2} are
differentially private, the other four variants are not. We will
present counter-examples for all four of these variants in
Section~\ref{sec:exp-res}. The pseudocode for the six variants of SVT
are given in Figures~\ref{fig:svtA} and~\ref{fig:svtB}.

\begin{figure*}[!htb]
\RestyleAlgo{boxed}
\begin{subfigure}{0.45\textwidth}
  \renewcommand\thesubfigure{SVT1}
  \caption{First Instantiation of SVT}\label{algo-svt1}
  \removelatexerror
  \begin{algorithm}[H]
  \DontPrintSemicolon
  \KwIn{$q[1:N]$}
  \KwOut{$out[1:N]$}
  $\rv_T \gets \Lap{ \frac {\epsilon} {2 \Delta} , T}$\;
  $count \gets 0$\;
  \For{$i\gets 1$ \KwTo $N$}
  {
    $\rv\gets \Lap{\frac \epsilon {4c\Delta} , q[i]}$\;
    $\bv\gets \rv \geq \rv_T$\;
    \uIf{$b$}{
      $out[i] \gets \top$\;
      $count \gets count + 1$\;
      \If{$count\geq c$} {\exit} 
      }
    \Else{
      $out[i] \gets \bot$}
  }
  \end{algorithm}
\end{subfigure}
\begin{subfigure}{0.45\textwidth}
  \renewcommand\thesubfigure{SVT2}
  \caption{Second Instantiation of SVT}\label{algo-svt2}
  \removelatexerror
  \begin{algorithm}[H]
  \DontPrintSemicolon
  \KwIn{$q[1:N]$}
  \KwOut{$out[1:N]$}
  $\rv_T \gets \Lap{ \frac {\epsilon} {2c \Delta} , T}$\;
  $count \gets 0$\;
  \For{$i\gets 1$ \KwTo $N$}
  {
    $\rv\gets \Lap{\frac \epsilon {4c\Delta} , q[i]}$\;
    $\bv\gets \rv \geq \rv_T$\;
    \uIf{$b$}{
      $out[i] \gets \top$, $\rv_T\gets \Lap{\frac \epsilon {2c\Delta} , T}$\;
      $count \gets count + 1$\;
      \If{$count\geq c$} {\exit} 
      }
    \Else{
      $out[i] \gets \bot$}
  }
  \end{algorithm}
\end{subfigure}
\caption{Sparse Vector Technique Algorithms}
\label{fig:svtA}
\end{figure*}

\begin{figure*}[!htb]
\RestyleAlgo{boxed}
\begin{subfigure}{0.45\textwidth}
  \renewcommand\thesubfigure{SVT3}
  \caption{Third Instantiation of SVT}\label{algo-svt3}
  \removelatexerror
  \begin{algorithm}[H]
  \DontPrintSemicolon
  \KwIn{$q[1:N]$}
  \KwOut{$out[1:N]$}
  $\rv_T \gets \Lap{ \frac {\epsilon} {2 \Delta} , T}$\;
  $count \gets 0$\;
  \For{$i\gets 1$ \KwTo $N$}
  {
    $\rv\gets \Lap{\frac \epsilon {2c\Delta} , q[i]}$\;
    $\bv\gets \rv \geq \rv_T$\;
    \uIf{$b$}{
      $out[i] \gets \Disc_{\seq}(\rv)$\;
      $count \gets count + 1$\;
      \If{$count\geq c$} {\exit} 
      }
    \Else{
      $out[i] \gets \bot$}
  }
  \end{algorithm}
\end{subfigure}
\begin{subfigure}{0.45\textwidth}
  \renewcommand\thesubfigure{SVT4}
  \caption{Fourth Instantiation of SVT}\label{algo-svt4}
  \removelatexerror
  \begin{algorithm}[H]
  \DontPrintSemicolon
  \KwIn{$q[1:N]$}
  \KwOut{$out[1:N]$}
  $\rv_T \gets \Lap{ \frac {\epsilon} {4 \Delta} , T}$\;
  $count \gets 0$\;
  \For{$i\gets 1$ \KwTo $N$}
  {
    $\rv\gets \Lap{\frac {3\epsilon} {4\Delta} , q[i]}$\;
    $\bv\gets \rv \geq \rv_T$\;
    \uIf{$b$}{
      $out[i] \gets \top$\;
      $count \gets count + 1$\;
      \If{$count\geq c$} {\exit} 
      }
    \Else{
      $out[i] \gets \bot$}
  }
  \end{algorithm}
\end{subfigure}
\begin{subfigure}{0.45\textwidth}
  \renewcommand\thesubfigure{SVT5}
  \caption{Fifth Instantiation of SVT}\label{algo-svt5}
  \removelatexerror
  \begin{algorithm}[H]
  \DontPrintSemicolon
  \KwIn{$q[1:N]$}
  \KwOut{$out[1:N]$}
  $\rv_T \gets \Lap{ \frac {\epsilon} {2 \Delta} , T}$\;
  \For{$i\gets 1$ \KwTo $N$}
  {
    $\rv\gets q[i]$\;
    $\bv\gets \rv \geq \rv_T$\;
    \uIf{$b$}{
      $out[i] \gets \top$\;
      }
    \Else{
      $out[i] \gets \bot$}
  }
  \end{algorithm}
\end{subfigure}
\begin{subfigure}{0.45\textwidth}
  \renewcommand\thesubfigure{SVT6}
  \caption{Sixth Instantiation of SVT}\label{algo-svt6}
  \removelatexerror
  \begin{algorithm}[H]
  \DontPrintSemicolon
  \KwIn{$q[1:N]$}
  \KwOut{$out[1:N]$}
  $\rv_T \gets \Lap{ \frac {\epsilon} {2 \Delta} , T}$\;
  \For{$i\gets 1$ \KwTo $N$}
  {
    $\rv\gets \Lap{\frac{\epsilon}{2\Delta},q[i]}$\;
    $\bv\gets \rv \geq \rv_T$\;
    \uIf{$b$}{
      $out[i] \gets \top$\;
      }
    \Else{
      $out[i] \gets \bot$}
  }
  \end{algorithm}
\end{subfigure}
\caption{Sparse Vector Technique Algorithms}
\label{fig:svtB}
\end{figure*}

\paragraph{Noisy Maximum}
Noisy maximum algorithms are a differentially private way to compute
different statistical measures for a given set of queries. Algorithms
addressed as NMax1-4 are Algorithms 5-8, respectively,
in~\cite{DingWWZK18}. Algorithms~\subref{algo-nmax5}
and~\subref{algo-nmax6} are mechanisms to compute the index of the
query with maximum value after adding a Laplacian (or exponential)
noise. Inputs $Q_1$ and $Q_2$ are considered adjacent iff
$\abs{{Q_1}[i]-{Q_2}[i]}\leq 1$ for all $i$. Under this relation,
Algorithms~\subref{algo-nmax5} and~\subref{algo-nmax6} are both
$\epsilon$-differentially private. Algorithms~\subref{algo-nmax7}
and~\subref{algo-nmax8} are variants to print the maximum value
instead of the index. These variants are shown to be not
differentially private in Section~\ref{sec:exp-res}.  The pseudocode
for these algorithms can be found in Figure~\ref{fig:nmax}.

\begin{figure*}[!htb]
\RestyleAlgo{boxed}
\begin{subfigure}{0.45\textwidth}
  \renewcommand\thesubfigure{NMax1}
  \caption{Correct Noisy Max with Laplacian Noise}\label{algo-nmax5}
  \removelatexerror
  \begin{algorithm}[H]
  \DontPrintSemicolon
  \KwIn{$q[1:N]$}
  \KwOut{$out$}
  \;
  NoisyVector $\gets []$\;
  \For{$i\gets 1$ \KwTo $N$}{
    NoisyVector[i] $\gets$ $\Lap{\frac{\epsilon}{2}, q[i]}$
  }
  out $\gets$ argmax(NoisyVector)\;
  \end{algorithm}
\end{subfigure}
\begin{subfigure}{0.45\textwidth}
  \renewcommand\thesubfigure{NMax2}
  \caption{Correct Noisy Max with Exponential Noise}\label{algo-nmax6}
  \removelatexerror
  \begin{algorithm}[H]
  \DontPrintSemicolon
  \KwIn{$q[1:N]$}
  \KwOut{$out$}
  \;
  NoisyVector $\gets []$\;
  \For{$i\gets 1$ \KwTo $N$}{
    NoisyVector[i] $\gets$ $\mathsf{Lap}^{+}(\frac{\epsilon}{2}, q[i])$
  }
  out $\gets$ argmax(NoisyVector)\;
  \end{algorithm}
\end{subfigure}
\begin{subfigure}{0.45\textwidth}
  \renewcommand\thesubfigure{NMax3}
  \caption{Incorrect Noisy Max with Laplacian Noise}\label{algo-nmax7}
  \removelatexerror
  \begin{algorithm}[H]
  \DontPrintSemicolon
  \KwIn{$q[1:N]$}
  \KwOut{$out$}
  \;
  NoisyVector $\gets []$\;
  \For{$i\gets 1$ \KwTo $N$}{
    NoisyVector[i] $\gets$ $\Lap{\frac{\epsilon}{2}, q[i]}$
  }
  out $\gets$ $\Disc_{\seq}$(max(NoisyVector))\;
  \end{algorithm}
\end{subfigure}
\begin{subfigure}{0.45\textwidth}
  \renewcommand\thesubfigure{NMax4}
  \caption{Incorrect Noisy Max with Laplacian Noise}\label{algo-nmax8}
  \removelatexerror
  \begin{algorithm}[H]
  \DontPrintSemicolon
  \KwIn{$q[1:N]$}
  \KwOut{$out$}
  \;
  NoisyVector $\gets []$\;
  \For{$i\gets 1$ \KwTo $N$}{
    NoisyVector[i] $\gets$ $\mathsf{Lap}^{+}(\frac{\epsilon}{2}, q[i])$
  }
  out $\gets$ $\Disc_{\seq}$(max(NoisyVector))\;
  \end{algorithm}
\end{subfigure}
\caption{Noisy Max Algorithms}
\label{fig:nmax}
\end{figure*}

\paragraph{Histogram Algorithms}
Histogram algorithms also target computing statistical measures on
queries in a differentially private manner. Algorithms referred to as
Hist1-2 here are Algorithms 9-10
in~\cite{DingWWZK18}. Algorithm~\subref{algo-nmax9}
and~\subref{algo-nmax10} are variants of noisy maximum, where we
return the histogram, instead of the maximum. Under the above
adjacency relation where $Q_1$ and $Q_2$ are adjacent if
$\abs{{Q_1}[i]-{Q_2}[i]}\leq 1$ for all $i$, both these variants are
not $\epsilon$-differentially private. However, if we consider an
alternative definition for the adjacency relation, where $Q_1$ and
$Q_2$ are adjacent iff $\sum_{i}\Big(\abs{{Q_1}[i]-{Q_2}[i]}\Big)\leq
1$, then~\subref{algo-nmax9} is $\epsilon$-differentially private
but~\subref{algo-nmax10} still is not. All experiments listed in
Section~\ref{sec:exp-res} for Algorithms~\subref{algo-nmax5}
and~\subref{algo-nmax6} were run using the second adjacency relation.
The pseudocode for these algorithms can be found in
Figure~\ref{fig:nhist}.

\begin{figure*}[!htb]
\RestyleAlgo{boxed}
\begin{subfigure}{0.45\textwidth}
  \renewcommand\thesubfigure{Hist1}
  \caption{Noisy Histogram}\label{algo-nmax9}
  \removelatexerror
  \begin{algorithm}[H]
  \DontPrintSemicolon
  \KwIn{$q[1:N]$}
  \KwOut{$out[1:N]$}
  \;
  NoisyVector $\gets []$\;
  \For{$i\gets 1$ \KwTo $N$}{
    NoisyVector[i] $\gets$ $\Lap{\epsilon, q[i]}$
  }
  out $\gets$ $\Disc_{\seq}$(NoisyVector)\;
  \end{algorithm}
\end{subfigure}
\begin{subfigure}{0.45\textwidth}
  \renewcommand\thesubfigure{Hist2}
  \caption{Noisy Histogram, Wrong Scale}\label{algo-nmax10}
  \removelatexerror
  \begin{algorithm}[H]
  \DontPrintSemicolon
  \KwIn{$q[1:N]$}
  \KwOut{$out[1:N]$}
  \;
  NoisyVector $\gets []$\;
  \For{$i\gets 1$ \KwTo $N$}{
    NoisyVector[i] $\gets$ $\Lap{\frac{1}{\epsilon}, q[i]}$
  }
  out $\gets$ $\Disc_{\seq}$(NoisyVector)\;
  \end{algorithm}
\end{subfigure}
\caption{Noisy Histogram Algorithms}
\label{fig:nhist}
\end{figure*}

\paragraph{Randomized Response}
All the previous algorithms use the Laplace mechanism. Randomized
Response~\cite{DNRRV09}, on the other hand, uses discrete
probabilities. In this algorithm (henceforth called  \subref{algo-rr1}), given a set of Boolean input
queries, we flip each input query with a probability of
$\frac{\euler^\epsilon-1}{2}$ and output the resulting outcome
. We
also consider a non-private version (called  \subref{algo-rr2})  where the input query is flipped
with probability $\frac{1-\epsilon}{2}$ . The pseudocodes can be found in Figure~\ref{fig:rr}.

\begin{figure*}[!htb]
 \RestyleAlgo{boxed}
\begin{subfigure}{0.48\textwidth}
  \renewcommand\thesubfigure{Rand1}
  \caption{Differentially Private Randomized Response}\label{algo-rr1}
  \removelatexerror
  \begin{algorithm}[H]
  \DontPrintSemicolon
  \KwIn{$q[1:N]$}
  \KwOut{$out[1:N]$}
  \;
 
  \For{$i\gets 1$ \KwTo $N$}{
    out[i] $\gets
    \begin{cases}
        q[i] & \mbox{with prob} = \frac{\euler^\epsilon}{1+\euler^\epsilon}\\
        \neg q[i] & \mbox{with prob} = \frac{1}{1+\euler^\epsilon}
    \end{cases}$\;
  }
  \end{algorithm}
\end{subfigure}
\begin{subfigure}{0.48\textwidth}
  \renewcommand\thesubfigure{Rand2}
  \caption{Non-Differentially Private Randomized Response}\label{algo-rr2}
  \removelatexerror
  \begin{algorithm}[H]
  \DontPrintSemicolon
  \KwIn{$q[1:N]$}
  \KwOut{$out[1:N]$}
  \;
 
  \For{$i\gets 1$ \KwTo $N$}{
    out[i] $\gets
    \begin{cases}
        q[i] & \mbox{with prob} = \frac{1+\epsilon}{2}\\
        \neg q[i] & \mbox{with prob} = \frac{1-\epsilon}{2}
    \end{cases}$\;
  }
  \end{algorithm}
\end{subfigure}
\caption{Randomized Response Algorithms}
\label{fig:rr}
\end{figure*}

\paragraph{Sparse}
Sparse is a variant of SVT that is discussed in~\cite{DR14}. Our
reason for considering this example is to demonstrate our tool's
ability to handle $(\epsilon,\delta)$-differential privacy (see
Section~\ref{sec:sparse}).  Pseudocode for this algorithm is provided
in Section~\ref{sec:sparse}.

\subsection{Tool Design}
\label{sec:working}

Given a program and an adjacency relation, {\tool} outputs $\true$ if
the program is differentially private and outputs a counter-example if
it is not. The tool works in two phases. In the first phase, the tool
parses the program, computes symbolic expressions that capture the
output distribution, and identify inequalities that must hold for
differential privacy. The symbolic expressions for the probability
computation, and the logical constraints that must hold, are written
in a Wolfram Mathematica\textregistered script. In the second phase,
Mathematica is run to perform the symbolic computations and check the
results.

The computation of the output distribution proceeds in a manner
consistent with the decision procedure outlined in the proof of
Theorem~\ref{thm:decidability}. Recall that the parametrized DTMC
semantics, the state tracks constraints that must hold between
different real variables. These constraints can be tracked by
maintaining a partial order between the variables. One of the
engineering challenges we experienced was in the computation of the
probability of the partial order holding, given the parameters used
during sampling. The ``Probability[]'' command in Mathematica was very
slow and inefficient. Instead we decided to convert the partial order
into a set of total orders, and compute the probability of each total
order through integration.

For example, to compute the probability of $x_1<x_2<x_3...<x_n$, where
variable $x_i$ has p.d.f $D_i$, we would first compute the probability
$P(x_n>x) = \int_x^{\infty} D_n(y)dy$. We then compute the
probabilities $P(x_n>x_{n-1}>x) = \int_x^{\infty}
P(x_n>y)D_{n-1}(y)dy$, $P(x_n>x_{n-1}>x_{n-2}>x) = \int_x^{\infty}
P(x_n>x_{n-1}>y)D_{n-2}(y)dy$ and so on. Once we have computed
$P(x_n>x_{n-1}>...>x_1>x)$, we can compute
$P(x_n>x_{n-1}>...>x_1)=Lim_{x\rightarrow
-\infty}P(x_n>x_{n-1}>...>x_1>x)$.
Additionally, we try to optimize the above process by splitting the
partial order into connected components and computed probability for
each component. We also deal with constant assignments to real
variables by slightly modifying the integration method.

\subsection{Experimental Results}
\label{sec:exp-res}

We ran all the experiments on an octa-core Intel\textregistered Core
i7-8550U @ 1.8gHz CPU with 8GB memory. The tool is implemented in C++
and uses Wolfram Mathematica\textregistered. As mentioned in
Section \ref{sec:working}, the tool works in two phases --- in the
first phase, a Mathematica script is produced with commands for all
the output probability computations and the subsequent inequality
checks and in the second phase, the generated script is run on
Mathematica. In all the following tables, we refer the times of the
Script Generation Phase (i.e. Phase 1) as T1 and that of the Script
Validation Phase (i.e. Phase 2) as T2.

Unless stated otherwise, all the experiments were run with the
parameters $c=1$, $\Delta=1$ and discretization parameter
$\seq=(-1<0<1)$ wherever applicable. The range of input query values
was $\sdom=\{-1, 0, 1\}$ in all the experiments. The running times in
all experiments were averaged over 3 runs of the tool.

\begin{table*}
\begin{minipage}[b]{0.47\linewidth}\centering
  \begin{tabular}{|c|N|N|}
    \hline
    Algorithm & Runtime (\parta/\partb) & $\epsilon$-Diff. Private\\
    \hline
    \subref{algo-svt1} & 0s/825s & \cmark\\
    \subref{algo-svt2} & 0s/768s & \cmark\\
    \subref{algo-svt3} & 0s/3816s & \cmark\\
    \subref{algo-svt4} & 0s/269s & \xmark\\
    \subref{algo-svt5} & 0s/2s & \xmark\\
    \subref{algo-svt6} & 0s/661s & \xmark\\
    \subref{algo-nmax5} & 0s/197s & \cmark\\
    \subref{algo-nmax6} & 0s/59s & \cmark\\
    \subref{algo-nmax7} & 0s/310s & \xmark\\
    \subref{algo-nmax8} & 1s/58s & \xmark\\
    \subref{algo-nmax9} & 0s/1450s & \cmark\\ 
    \subref{algo-nmax10} & 0s/55s & \xmark\\
      \subref{algo-rr1}  & 0s/0s & \xmark\\
     \subref{algo-rr2} & 0s/0s & \xmark\\
    \hline
  \end{tabular}
  \caption{Runtime for 3 queries for each algorithm searching over adjacency pairs and all $\epsilon$>0, with parameters being [c=1,~$\Delta$=1,~$\sdom$=\{-1,0,1\},~$\seq=(-1<0<1)$]. For SVT, we also have $T$=0.}
  \label{tab:q3runtime}
\end{minipage}\hfill
\begin{minipage}[b]{0.47\linewidth}\centering
  \begin{tabular}{|c|c|c|N|N|}
    \hline
    |Q| & c & $\epsilon$ & \multicolumn{2}{c|}{\begin{tabular}{c|c}\multicolumn{2}{c}{Runtime (\parta/\partb)}\\\hline Fixed $\epsilon$ & General\\\end{tabular}}\\
    \hline
    1 & 1 & 1.0 & 0s/7s & 0s/16s\\
    1 & 1 & 0.5 & 0s/8s & 0s/16s\\
    2 & 1 & 1.0 & 0s/43s & 0s/113s\\
    2 & 1 & 0.5 & 0s/46s & 0s/113s\\
    2 & 2 & 1.0 & 0s/95s & 0s/155s\\
    2 & 2 & 0.5 & 0s/113s & 0s/155s\\
    3 & 1 & 1.0 & 0s/307s & 0s/825s\\
    3 & 1 & 0.5 & 0s/265s & 0s/825s\\
    3 & 2 & 1.0 & 0s/541s & 0s/1202s\\
    3 & 2 & 0.5 & 0s/572s & 0s/1202s\\
    4 & 1 & 1.0 & 0s/1772s & 0s/4727s\\
    4 & 1 & 0.5 & 0s/1832s & 0s/4727s\\
    4 & 2 & 1.0 & 1s/2904s & 0s/6715s\\
    4 & 2 & 0.5 & 1s/3295s & 0s/6715s\\
    \hline
  \end{tabular}
  \caption{Runtimes of \subref{algo-svt1} over different query length and counts, searching over all adjacency pairs and fixed $\epsilon$, with parameters being [$\Delta$=1,~$T$=0,~$\sdom$=\{-1,0,1\}]. }
  \label{tab:eps}
\end{minipage}
\end{table*}

Table~\ref{tab:q3runtime} shows the runtime of our tool for all the
listed algorithms with 3 queries. We chose to use 3 queries because
counter-examples for most of the programs which were not differentially
private could be found with 3 queries; the only exception
being~\subref{algo-svt3}. Majority of the time is taken for running
the Mathematica code. We also observed that most of the time spent by
Mathematica was in computing the output probability; the time to
perform the inequality checks for adjacent inputs was relatively
smaller. Consequently, programs which do not use real variables are
much faster to run. Results in the table also show that the time taken
for disproving Differential Privacy is lower than the time for proving
Differential Privacy on average. This is because the tool terminates
on finding a counter-example. On the other hand, to prove differential
privacy the tool has to check all inequalities.

\begin{table*}[!t]
\scalebox{0.8}{
  \begin{tabular}{|c|c|c|c|c|c|N|}
    \hline
    Algo & |Q| & Output & Input 1 & Input 2 & $\epsilon$ 
    & Runtime (\parta/\partb) \\
    \hline
    \subref{algo-svt3} & 5 & [$\bot$ $\bot$ $\bot$ $\bot$ 0], $\seq=(0<1)$ & [-1 -1 -1 -1 -1] & [0 0 0 0 0] & 27 & 18s/5042s\\
    \subref{algo-svt4} & 2 & [$\bot$ $\top$] & [-1 0] & [0 -1] & 27/50 & 0s/81s\\
    \subref{algo-svt5} & 2 & [$\bot$ $\top$] & [-1 0] & [-1 -1] & 27 & 0s/2s\\
    \subref{algo-svt6} & 3 & [$\bot$ $\bot$ $\top$] & [-1 -1 0] & [0 0 -1] & 67/92 & 0s/661s\\
    \subref{algo-nmax7} & 3 & -1, $\seq=(-1<0<1)$ & [-1 -1 -1] & [0 0 0] & 27 & 0s/310s\\
    \subref{algo-nmax8} & 1 & 0, $\seq=(-1<0<1)$ & [-1] & [0] & 27 & 0s/2s\\
    \subref{algo-nmax10} & 1 & [-1], $\seq=(-1<0<1)$ & [-1] & [0] & 9/34 & 0s/3s\\
      \subref{algo-rr2} & 1 & [$\bot$] & [$\bot$] & [$\top$] & 9/34 & 0s/0s\\
    \hline
  \end{tabular}
  }
  \caption{Smallest Counter-example found for each non-differentially private algorithm, searching over all adj. pairs and $\epsilon>0$, with parameters being [c=1,~$\Delta$=1,~$\sdom$=\{-1,0,1\}]}
  \label{tab:counter}
\end{table*}

Table~\ref{tab:counter} lists the smallest counter-example found for
each non differentially private algorithm. Given a program and an
adjacency relation, the tool automatically finds an $\epsilon$, the
pair of adjacent inputs, and the output value that demonstrate the
violation of differential privacy. All four columns in the table were
output by the tool. Further, we observe that the counter-examples found
were much smaller, in number of queries, compared to those found
in~\cite{DingWWZK18}. For example, algorithms~\subref{algo-nmax7}
and \subref{algo-nmax8} counter-examples need just 3 and 1 queries
respectively, compared to the 5 queries required
in \cite{DingWWZK18}. Similarly, algorithm \subref{algo-svt5} has a
counter-example with just 2 queries, as compared to the 10 queries.

To study the performance of the tool as the number of queries
increases, we analyzed~\subref{algo-svt1} for various number of
queries. The running times along with the number of queries and the
value for $c$ is shown in Table~\ref{tab:query}. The table shows that
the tool can handle a reasonable number of queries.

In all the experiments so far, the value of $\epsilon$ was not
fixed. So {\tool} had to either prove privacy for all $\epsilon$ or
find an $\epsilon$ where privacy is violated. Many automated tools are
designed only to disprove differential privacy for a fixed
$\epsilon$. We tried the performance of the tool on SVT1 for a fixed
$\epsilon$. The results are reported in Table~\ref{tab:eps}. As can be
seen by comparing the numbers in Tables~\ref{tab:query}
and~\ref{tab:eps}, fixing $\epsilon$ makes the problem easier to
handle.

\begin{table*}[!t]
\begin{minipage}[b]{0.4\linewidth}\centering
  \begin{tabular}{|c|c|N|}
    \hline
    |Q| & c & Runtime (\parta/\partb) \\
    \hline
    1 & 1 & 0s/16s\\
    2 & 1 & 0s/113s\\
    2 & 2 & 0s/155s\\
    3 & 1 & 0s/825s\\
    3 & 2 & 0s/1202s\\
    4 & 1 & 0s/4727s\\
    4 & 2 & 0s/6715s\\
    \hline
  \end{tabular}
  \caption{Runtimes of \subref{algo-svt1} over different query length and counts, searching over all adjacency pairs and all $\epsilon$>0, with parameters being [$\Delta$=1,~$T$=0,~$\sdom$=\{-1,0,1\}]}
  \label{tab:query}
\end{minipage}\hfill
\begin{minipage}[b]{0.5\linewidth}\centering
  \begin{tabular}{|c|N|N|M|}
    \hline
    \#Queries & 1 Pair Runtime (\parta/\partb) & General Runtime (\parta/\partb) & $\epsilon$-Diff. Private\\
  \hline
    1 & 0s/15s & 0s/25s & \cmark\\
    2 & 0s/40s & 0s/192s & \cmark\\
    3 & 0s/100s & 0s/1562s & \cmark\\
    4 & 0s/199s & 1s/10515s & \cmark\\
    5 & 0s/141s & 18s/5042s & \xmark\\
    \hline
  \end{tabular}
  \caption{Runtimes of \subref{algo-svt3} over different query lengths, searching over a single adj. pair ([00...]$\sim$[11...]) and all $\epsilon>0$, with parameters being [$c$=1,~$T$=0,~$\Delta$=1,~$\sdom$=\{-1,0,1\},~$\seq$=(0<1)]}
  \label{tab:adj}
\end{minipage}
\end{table*}

Finally, we wanted to explore the scalability of our tool when we
checking differential privacy for a single pair of adjacent inputs.
In Table~\ref{tab:adj}, we have the results when a non differentially
private algorithm, namely~\subref{algo-svt3} was run with a single
adjacency pair ([00...]$\sim$[11...]), while varying number of
queries. We notice that the running times is significantly lower in
this case. Another interesting observation is that the time taken
for 5 queries is lower than the time for 4 queries. This is because
with 5 queries, the tool successfully finds a counter-example and
terminates before checking the remaining inequalities.

\subsection{($\epsilon, \delta$)-Differential Privacy}
\label{sec:sparse}

\tool~can also verify ($\epsilon, \delta$)-differential privacy.
Algorithm~\ref{fig:svtDelta} (taken from \cite{DR14}), referred to
henceforth as Sparse, was used to evaluate {\tool}'s performance in
this case. This algorithm has been manually proven to be
$(\frac{\epsilon}{2},\delta_\s{svt})$-differentially private for any
number of queries in~\cite{DR14} by using advanced composition
theorems.

\begin{figure}
 \RestyleAlgo{boxed}
\begin{minipage}{0.35\textwidth}
  \removelatexerror
  \begin{algorithm}[H]
  \DontPrintSemicolon
  \KwIn{$q[1:N]$}
  \KwOut{$out[1:N]$}
  $\sigma \gets \frac{\epsilon}{2\sqrt{32c\ln{\frac{1}{\delta_\s{svt}}}}}$\;
  \;
  $\rv_T \gets \Lap{ \sigma , T}$\;
  $count \gets 0$\;
  \For{$i\gets 1$ \KwTo $N$}
  {
    $\rv\gets \Lap{\frac \sigma {2} , q[i]}$\;
    $\bv\gets \rv \geq \rv_T$\;
    \uIf{$b$}{
      $out[i] \gets \top$, $\rv_T\gets \Lap{\sigma , T}$\;
      $count \gets count + 1$\;
      \If{$count\geq c$} {\exit} 
      }
    \Else{
      $out[i] \gets \bot$}
  }
  \caption{Sparse algorithm}
\label{fig:svtDelta}

  \end{algorithm}
\end{minipage}
\hspace*{0.40cm}
\begin{minipage}{0.55\textwidth}
\begin{table}[H]
  \begin{tabular}{|c|c|c|c|c|}
    \hline
    $c$ & $\delta_\s{svt}$ & $\delta$ & Runtime (\parta/\partb) & ($\frac \epsilon 2, \delta$)-Diff. Privacy \\
    \hline
    1 & $\euler^{-\frac 1 {32}}$ & 0 & 0s/48s & \xmark\\
    1 & $\euler^{-\frac 1 {32}}$ &$\euler^{-3}$ & 0s/142s & \xmark\\
    1 & $\euler^{-\frac 1 {32}}$  &$\euler^{-2.125}$ & 0s/146s & \xmark\\
    1 & $\euler^{-\frac 1 {32}}$  &$\euler^{-2}$ & 0s/161s & \cmark\\
    2 & $\euler^{-\frac 1 {64}}$  &0 & 0s/72s & \xmark\\
    2 & $\euler^{-\frac 1 {64}}$ &$\euler^{-3}$ & 0s/187s & \xmark\\
   2 & $\euler^{-\frac 1 {64}}$ &$\euler^{-2.5}$ & 0s/182s & \xmark\\
    2 & $\euler^{-\frac 1 {64}}$ &$\euler^{-2}$ & 0s/288s & \cmark\\

    \hline
  \end{tabular}
  \caption{\tool result for ($\frac \epsilon 2, \delta$)-Diff. Privacy of SPARSE (Algorithm \ref{fig:svtDelta}) with 3 queries, searching over all adj. pairs and $\epsilon>0$, with parameters being [$T$=0,~$\sdom=\{0,1\}$]
  }
  \label{tab:eps-del}
\end{table}
\end{minipage}
\end{figure}

When $c=1$ and $\delta_\s{svt} = \euler^{- \frac 1 {32}}$, this algorithm is identical to Algorithm
\subref{algo-svt1}, where parameters $c$ and $\Delta$ are replaced by
parameter $\sigma$. This algorithm is, therefore,
$\epsilon$-differentially private. Further, our tool proves that the
algorithm is not $\frac{\epsilon}{2}$-differentially private. 
Thanks to the advanced composition theorem, 
we can show that the resulting algorithm is $(\frac \epsilon 2, \euler^{-\frac 1 {32}})$-differentially private. The tool
also shows that for all $\epsilon>0$, the algorithm is $(\frac{\epsilon}{2},
\euler^{-2})$-differentially private for $c=1$ for queries of length 3 
with $\sdom=\set{0,1}$ and $T=0$ (observe that $\euler^{-\frac 1 {32}}>\euler^{-2}$).
Additionally, we get a counter-example for
$(\frac{\epsilon}{2}, \euler^{-2.125})$-differential privacy.

When $c=2$ and $\delta_\s{svt} = \euler^{- \frac 1 {64}}$, Sparse
differs from~\subref{algo-svt1} since in this case we also need to
choose $\rv_T$ again after outputting a $\top$. The resulting program
is $(\frac{\epsilon}{2},\euler^{-1/64})$-differentially private thanks
to the advanced composition theorem.  {\tool} confirms that for
queries of length 3, the resulting program is infact
$(\frac{\epsilon}{2},\euler^{-2})$-differentially private with
$\sdom=\set{0,1}$ and $T=0$.  Further, {\tool} also demonstrates that
the resulting program is not $(\frac{\epsilon}{2},\euler^{-2.5})$
differentially private.

Here we are able to check the correctness of Sparse automatically, for
values of $c=1,2$ and for the above given values of $\delta_{\s{svt}}$
and for all $\epsilon>0.$ To the best of our knowledge, our approach
is the first method to automatically check this. These results are
summarized in Table~\ref{tab:eps-del}.

\end{document}